\documentclass[11pt]{article}
\usepackage[T1]{fontenc}
\usepackage[utf8]{inputenc}
\usepackage{microtype}
\usepackage{graphicx}
\usepackage{fullpage}
\usepackage{amsmath}
\usepackage{amsthm}
\usepackage{amsfonts}
\usepackage{xcolor}
\usepackage[numbers]{natbib}
\usepackage[ruled,vlined,linesnumbered]{algorithm2e}
\usepackage{aliascnt}

\usepackage{hyperref}
\usepackage{cleveref}

\newtheorem{theorem}{Theorem}[section]

\newaliascnt{remark}{theorem}

\aliascntresetthe{remark}

\newaliascnt{definition}{theorem}
\newtheorem{definition}[definition]{Definition}
\aliascntresetthe{definition}

\newaliascnt{lemma}{theorem}
\newtheorem{lemma}[lemma]{Lemma}
\aliascntresetthe{lemma}

\newaliascnt{claim}{theorem}
\newtheorem{claim}[claim]{Claim}
\aliascntresetthe{claim}

\newaliascnt{corollary}{theorem}
\newtheorem{corollary}[corollary]{Corollary}
\aliascntresetthe{corollary}

\crefname{theorem}{Theorem}{Theorems}
\Crefname{theorem}{Theorem}{Theorems}
\crefname{remark}{Remark}{Remarks}
\Crefname{remark}{Remark}{Remarks}
\crefname{definition}{Definition}{Definitions}
\Crefname{definition}{Definition}{Definitions}
\crefname{lemma}{Lemma}{Lemmas}
\Crefname{lemma}{Lemma}{Lemmas}
\crefname{claim}{Claim}{Claims}
\Crefname{claim}{Claim}{Claims}
\crefname{corollary}{Corollary}{Corollaries}
\Crefname{corollary}{Corollary}{Corollaries}

\newcommand{\cF}{\mathcal F}
\newcommand{\cS}{\mathcal S}

\newcommand{\cG}{\mathcal G}

\newcommand{\cC}{\mathcal C}
\newcommand{\cA}{\mathcal A}
\newcommand{\cB}{\mathcal B}
\newcommand{\cR}{\mathcal R}
\newcommand{\cX}{\mathcal X}
\newcommand{\gendistfamily}{\mathcal M}
\newcommand{\R}{\mathbb R}
\newcommand{\E}{\mathbb E}
\newcommand{\Z}{\mathbb Z}
\newcommand{\I}{\mathbb I}

\newcommand{\ind}{\mathsf{Ind}}
\newcommand{\kl}{\mathsf{KL}}
\newcommand{\softmax}{\mathsf{softmax}}

\newcommand{\poly}{\mathsf{poly}}
\newcommand{\polylog}{\mathsf{polylog}}

\newcommand{\be}{\mathbf e}

\newcommand{\CP}{\mathsf{CP}}

\newcommand{\Havgmin}{\mathsf{H_{\textup{\textsf{avg-min}}}}}
\newcommand{\HSh}{\mathsf{H_{\textup{\textsf{Sh}}}}}
\renewcommand{\H}{\mathsf{H}}
\newcommand{\D}{\mathsf{D}}
\newcommand{\Dkl}{\mathsf{D}_{\kl}}
\newcommand{\Supp}{\mathsf{Supp}}
\newcommand{\Time}{\mathsf{Time}}

\newcommand{\eps}{\varepsilon}
\newcommand{\sps}[1]{^{(#1)}}

\DeclareMathOperator*{\argmax}{arg\,max}

\title{Generalized and Unified Equivalences between\\ Hardness and Pseudoentropy}
\author{Lunjia Hu\thanks{Supported by the Simons Foundation Collaboration on the
Theory of Algorithmic Fairness and the Harvard Center for Research on Computation and Society. Part of this work was performed while LH was a Postdoctoral Fellow at Harvard University.} \\ Northeastern University \\ \texttt{lunjia@alumni.stanford.edu} \and Salil Vadhan\thanks{Supported by a Simons Investigators Award.} \\ Harvard University \\ \texttt{salil\_vadhan@harvard.edu}}
\date{}

\allowdisplaybreaks
\begin{document}

\maketitle
\begin{abstract}
Pseudoentropy characterizations give quantitatively precise formulations of the relationship between computational \emph{hardness} and computational \emph{randomness}.
We prove a unified pseudoentropy characterization that generalizes and strengthens previous results in both uniform and nonuniform models of computation.
Our characterization applies to a general family of entropy notions, including Shannon entropy and min-entropy as special cases.
Moreover, the characterizations for these different entropy notions can be witnessed simultaneously by a single universal function, which captures both computational hardness and computational randomness.

A key technical insight is that \emph{weight-restricted calibration}, from the recent literature on algorithmic fairness, together with standard computational indistinguishability (known as \emph{multiaccuracy} in the fairness literature), suffices for proving pseudoentropy characterizations for general entropy notions.
To obtain this combination of properties, we
prove an enhanced version of the Leakage Simulation Lemma (Jetchev and Pietrzak, 2014), which in turn extends the Complexity Theoretic-Regularity Lemma (Trevisan, Tulsiani, and Vadhan, 2009) from boolean functions to ones over a larger alphabet.
Our Enhanced Regularity/Leakage-Simulation Lemma enables us to obtain an exponential improvement in the dependence on the alphabet size compared with the pseudoentropy characterizations of Casacuberta, Dwork, and Vadhan (2024), which are based on the stronger notion of \emph{multicalibration}.
We also show that this exponential dependence on the alphabet size is inevitable for multicalibration and even for the weaker notion of calibrated multiaccuracy.

Our Enhanced Regularity/Leakage-Simulation Lemma can be viewed as an extension of a result about ``sample-access outcome indistinguishability'' (Dwork, Kim, Reingold, Rothblum, and Yona, 2021) from boolean functions to ones over a larger alphabet.
We hypothesize that these lemmas
will prove to be a powerful tool for other applications in average-case complexity and cryptography.
\end{abstract}

\newcommand{\zo}{\{0,1\}}
\newcommand{\negl}{\mathrm{negl}}

\section{Introduction}

\subsection{Background on Hardness--Randomness Equivalences}
\label{sec:background}

The close relationship between computational {\em hardness} and computational {\em randomness} is central in cryptography and complexity theory.  A classic example of this relationship is Yao's equivalence between pseudorandomness and (maximal) unpredictability~\cite{Yao82}, one form of which is the following:

\begin{theorem}
Let $(X,Y)$ be a random variable distributed on $\zo^n\times \zo^\ell$ with $\ell=O(\log n)$.  Then the following are equivalent:
\begin{enumerate}
\item $(X,Y)\approx^c (X,U_\ell)$, where $\approx^c$ denotes computational indistinguishability (against nonuniform polynomial-time algorithms).
\item \label{cond:max-unpred}
For every nonuniform polynomial-time algorithm $A$,
$$\Pr[A(X)= Y] \leq \frac{1}{2^\ell}+\negl(n).$$
\end{enumerate}
\end{theorem}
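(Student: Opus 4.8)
The plan is to prove the two implications by contraposition. For "(1) $\Rightarrow$ (2)" we turn a successful next-block predictor into a distinguisher; this direction is immediate and does not use the hypothesis $\ell = O(\log n)$. For "(2) $\Rightarrow$ (1)" we perform the standard distinguisher-to-predictor transformation, and it is here that $\ell = O(\log n)$ --- equivalently $2^\ell = \poly(n)$ --- is essential.

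\emph{Predictor $\Rightarrow$ distinguisher.} Suppose (2) fails, so there is a nonuniform polynomial-time algorithm $A$ and a polynomial $q$ with $\Pr[A(X) = Y] > 2^{-\ell} + 1/q(n)$ for infinitely many $n$. Let $D$ be the nonuniform polynomial-time distinguisher that on input $(x,y)$ runs $A(x)$ and outputs $1$ iff the result equals $y$. Then $\Pr[D(X,Y) = 1] = \Pr[A(X) = Y]$, while $\Pr[D(X, U_\ell) = 1] = \Pr[A(X) = U_\ell] = 2^{-\ell}$ since $U_\ell$ is uniform and independent of $X$ and of the coins of $A$. Hence $D$ has distinguishing advantage $> 1/q(n)$ infinitely often, so (1) fails.

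\emph{Distinguisher $\Rightarrow$ predictor.} Suppose (1) fails, so there is a nonuniform polynomial-time $D$ and a polynomial $q$ with $|\Pr[D(X,Y)=1] - \Pr[D(X,U_\ell)=1]| > \eps := 1/q(n)$ infinitely often; replacing $D$ by $1 - D$ if necessary, we may drop the absolute value. Define the randomized predictor $A(x)$: draw $\tilde y \leftarrow U_\ell$ and run $D(x,\tilde y)$; if the output is $1$, output $\tilde y$, and otherwise output a fresh independent draw $y' \leftarrow U_\ell$. Writing $p_{x,y} := \Pr[D(x,y)=1]$ and conditioning on $(X,Y) = (x,y)$, a direct computation (splitting on whether $D$ accepts $\tilde y$: if so $A$ is correct iff $\tilde y = y$, and if not $A$ is correct iff the fresh draw equals $y$), followed by averaging over $(x,y)\sim (X,Y)$, yields
\[
\Pr[A(X) = Y] \;=\; 2^{-\ell} + 2^{-\ell}\bigl(\Pr[D(X,Y)=1] - \Pr[D(X,U_\ell)=1]\bigr) \;\ge\; 2^{-\ell} + 2^{-\ell}\eps .
\]
Since $\ell = O(\log n)$ we have $2^{-\ell} \geq 1/\poly(n)$, so $2^{-\ell}\eps$ is non-negligible and the inequality holds infinitely often; as $A$ runs in nonuniform polynomial time (one call to $D$ plus $O(\ell)$ coin tosses), this contradicts (2).

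\emph{Main obstacle.} The substantive step is the distinguisher-to-predictor transformation: one must design the randomized guessing strategy so that its advantage over the trivial $2^{-\ell}$ is a $2^{-\ell}$-fraction of the original distinguishing advantage --- which is exactly why $\ell = O(\log n)$ is required, since for super-logarithmic $\ell$ this reduction loses an exponential factor and no longer preserves non-negligibility. Minor care is also needed to handle a randomized $D$ (either fix its coins nonuniformly to their best value, or let $A$ sample them) and to track the ``infinitely often'' quantifier implicit in the definition of computational indistinguishability.
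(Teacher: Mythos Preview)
The paper does not actually prove this theorem: it is stated in the background section as Yao's classical equivalence~\cite{Yao82} with no proof supplied, so there is nothing to compare against. Your argument is correct and is the standard one; the predictor-to-distinguisher direction is immediate, and your distinguisher-to-predictor construction (sample $\tilde y$, keep it if $D$ accepts, else resample) together with the computation
\[
\Pr[A(X)=Y] = 2^{-\ell} + 2^{-\ell}\bigl(\Pr[D(X,Y)=1]-\Pr[D(X,U_\ell)=1]\bigr)
\]
is exactly the textbook Yao reduction, with the $\ell=O(\log n)$ hypothesis used precisely where you indicate, to keep the $2^{-\ell}$ loss polynomial.
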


We refer to \Cref{cond:max-unpred} as {\em maximal} unpredictability, because it is trivial for an efficient $A$ to achieve prediction probability $1/2^\ell$, by just outputting a uniformly random $\ell$-bit string.
It is natural to ask what happens if we weaken the hardness of prediction to allow $A$ to succeed with some probability between $1/2^\ell$ and 1. Vadhan and Zheng~\cite{characterize,zheng} showed that such weak unpredictability is equivalent to $Y$ having high {\em pseudo-average-min-entropy}~\cite{hlr} given $X$.
\begin{theorem}
\label{thm:intro-min-rv}
Let $(X,Y)$ be a random variable distributed on $\zo^n\times \zo^\ell$ with $\ell=O(\log n)$, and let $k\in [0, \ell]$.  Then the following are equivalent:
\begin{enumerate}
\item \label{itm:intro-min-rv-pseudo}
There is a random variable $Z$ on $\zo^\ell$, jointly distributed with $X$, such that $(X,Y)\approx^c (X,Z)$ and
$\Havgmin(Z|X)\geq k$, where $\Havgmin$ denotes {\em average min-entropy}~\cite{DodisOsReSm}:
\[
\Havgmin(Z|X) := \log_2\left(\frac{1}{\E_{x\sim X}\left[\max_{z_x\in \zo^\ell}\Pr[Z = z_x|X=x]\right].}\right)
\]
\item \label{itm:intro-min-rv-unpred}
For every nonuniform polynomial-time algorithm $A$,
$$\Pr[A(X)= Y] \leq \frac{1}{2^k}+\negl(n),$$ where
$\negl(n)$ denotes a negligible function.
\end{enumerate}
\end{theorem}
Note that $\Havgmin(Z|X)\geq k$ is equivalent to
saying that for {\em every} function $A$ (even of unbounded complexity), $\Pr[A(X)=Z]\leq 2^{-k}$.  (Indeed, the best strategy is to set $A(x)$ to be $z_x$ that maximizes 
$\Pr[Z = z_x|X=x]$ for each $x$, which will achieve success probability exactly the expectation that appears in the definition of $\Havgmin(Z|X)$.) Thus, \Cref{thm:intro-min-rv} shows that computational unpredictability (as captured by \Cref{itm:intro-min-rv-unpred}) is equivalent to indistinguishability from information-theoretic unpredictability (as captured by \Cref{itm:intro-min-rv-pseudo}).
As discussed in \cite{characterize,zheng}, the case when $Y$ consists of $\ell=1$ bits is equivalent to Holenstein's optimal version of Impagliazzo's Hardcore Lemma~\cite{Impagliazzo95,Holenstein05}; the novelty in their result was proving it for all $\ell=O(\log n)$.

Now it is also natural to ask what happens if we replace the average min-entropy above with other information-theoretic measures of randomness, like conditional Shannon entropy.  In this case, Vadhan and Zheng~\cite{characterize} gave the following characterization.
\begin{theorem}
\label{thm:intro-shannon-rv}
Let $(X,Y)$ be a random variable distributed on $\zo^n\times \zo^\ell$ with $\ell=O(\log n)$, and let $k\geq 0$.  Then the following are equivalent:
\begin{enumerate}
\item There is a random variable $Z$ on $\zo^\ell$, jointly distributed with $X$, such that $(X,Y)\approx^c (X,Z)$ and
$\HSh(Z|X)\geq \HSh(Y|X)+k - \negl(n)$, where $\HSh$ denotes the conditional Shannon entropy.
\item For every nonuniform, randomized polynomial-time algorithm $A$, $$\Dkl((X,Y)\|(X,A(X))) \geq k-\negl(n),$$ where $\Dkl$ denotes KL divergence (a.k.a. relative entropy).
\end{enumerate}
\end{theorem}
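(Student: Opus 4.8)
The plan is to prove the two implications separately; as usual $(1)\Rightarrow(2)$ is routine and $(2)\Rightarrow(1)$ carries the weight. Throughout I identify a randomized algorithm $A$ with the map $x\mapsto P_x$ where $P_x$ is the distribution of $A(x)$ on $\zo^\ell$, and I use repeatedly the identity
\[
\Dkl\big((X,Y)\,\big\|\,(X,A(X))\big)\;=\;\E_{(x,y)\sim(X,Y)}\!\big[\log_2(1/P_x(y))\big]\;-\;\HSh(Y\mid X),
\]
i.e.\ the divergence is the cross-entropy of $A$'s predictions against $Y\mid X$ minus $\HSh(Y\mid X)$. Since $\ell=O(\log n)$, all relevant domains have polynomial size, so an efficient function of $(x,y)$ can be summed and normalized over all $y\in\zo^\ell$ in polynomial time, and a probability $P_x(y)$ can be estimated to negligible additive accuracy by sampling; I will use these facts without further comment.

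For $(1)\Rightarrow(2)$: fix an efficient $A$ and let $Z$ be as in $(1)$. First smooth $A$ to $A'$ whose output distribution $P'_x$ has mass at least $\gamma\cdot 2^{-\ell}$ on every string, for a negligible $\gamma$, by mixing with the uniform distribution; this increases the divergence in the statement by at most $O(\gamma)=\negl(n)$, so it suffices to lower-bound $\Dkl((X,Y)\|(X,A'(X)))$. Now $g(x,y):=\log_2(1/P'_x(y))$ is bounded by $\ell+\log_2(1/\gamma)$ and efficiently approximable, so $(X,Y)\approx^c(X,Z)$ gives $\E_{(X,Y)}[g]\ge\E_{(X,Z)}[g]-\negl(n)$ (negligible advantage times polynomial bound), and Gibbs' inequality gives $\E_{z\sim Z\mid x}[g(x,z)]=\HSh(Z\mid x)+\Dkl(Z\mid x\,\|\,P'_x)\ge\HSh(Z\mid x)$. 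Combining with the identity above and $\HSh(Z\mid X)\ge\HSh(Y\mid X)+k-\negl(n)$ from $(1)$ yields $\Dkl((X,Y)\|(X,A'(X)))\ge\HSh(Z\mid X)-\HSh(Y\mid X)-\negl(n)\ge k-\negl(n)$.

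For $(2)\Rightarrow(1)$, the substantive direction, I would build $Z$ together with an efficient predictor by an iterative regularity / multiplicative-weights argument. Start with $h_0(x)=$ uniform on $\zo^\ell$ and let $Z\sps 0\mid X{=}x\sim h_0(x)$. At step $i$: if $(X,Y)\approx^c(X,Z\sps i)$, stop; otherwise pick an efficient distinguisher $D_i\colon\zo^n\times\zo^\ell\to[-1,1]$ with $\eps_i:=\E_{(X,Y)}[D_i]-\E_{(X,Z\sps i)}[D_i]$ non-negligible (negating $D_i$ if needed), and set $h_{i+1}(x)(y)\propto h_i(x)(y)\exp(\eta D_i(x,y))$ with step size $\eta=\eps_i/2$. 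With the potential $\Phi_i:=\Dkl((X,Y)\|(X,Z\sps i))$ and the standard exponential-weights estimate $\log\E_{z\sim h_i(x)}[e^{\eta D_i(x,z)}]\le\eta\,\E_{z\sim h_i(x)}[D_i(x,z)]+\eta^2$ (valid since $|D_i|\le1$), one gets $\Phi_{i+1}-\Phi_i\le-\eta\eps_i+\eta^2=-\eps_i^2/4<0$; since $\Phi_0=\ell-\HSh(Y\mid X)\le\ell$ and $\Phi_i\ge0$, the process halts after $t\le O(\ell/\eps^2)$ rounds, where $\eps>0$ lower-bounds the encountered advantages, so $t$ is polynomial. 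The terminal $h_t$ is a $\softmax$ of a sum of $t$ efficient functions over the polynomial-size alphabet $\zo^\ell$, hence an efficient predictor, and $\eta\sum_{j<t}D_j\in[-2\ell/\eps,2\ell/\eps]$ forces $h_t(x)(y)\ge2^{-\ell}e^{-2\ell/\eps}$, so $g_t(x,y):=\log_2(1/h_t(x)(y))$ is bounded by a polynomial.

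To finish, apply hypothesis $(2)$ to the efficient predictor $h_t$: $\Phi_t=\Dkl((X,Y)\|(X,h_t(X)))\ge k-\negl(n)$, i.e.\ $\E_{(X,Y)}[g_t]=\Phi_t+\HSh(Y\mid X)\ge\HSh(Y\mid X)+k-\negl(n)$. Since we stopped, $(X,Y)\approx^c(X,Z\sps t)$, and $g_t$ is efficient and polynomially bounded, so $\HSh(Z\sps t\mid X)=\E_{x}\E_{z\sim h_t(x)}[g_t(x,z)]=\E_{(X,Z\sps t)}[g_t]\ge\E_{(X,Y)}[g_t]-\negl(n)\ge\HSh(Y\mid X)+k-\negl(n)$, so $Z:=Z\sps t$ works. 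The main obstacle, where I expect most of the care to be required, is the interplay of three quantitative points: (a) keeping the round count polynomial — equivalently, ensuring every non-terminal step has non-negligible advantage, which is cleanest to argue in a concrete circuit-size-versus-advantage formulation and then specialize asymptotically; (b) keeping $h_t$'s probabilities bounded away from $0$ so $g_t$ is only polynomially large; and (c) tracking the smoothing, sampling, and finite-precision errors so that the accumulated slack is genuinely $\negl(n)$ rather than inverse-polynomial.
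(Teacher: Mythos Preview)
Your $(1)\Rightarrow(2)$ direction is correct and matches the paper's reverse direction \eqref{eq:converse}. The $(2)\Rightarrow(1)$ direction uses the right framework---multiplicative-weights boosting with the KL potential $\Phi_i=\Dkl((X,Y)\|(X,Z\sps i))$---and indeed this is exactly the engine behind the paper's Algorithm~1. But your final step hides a genuine circularity that your obstacle (a) does not resolve.

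Concretely: once you pass to a size-$T$/advantage-$\eps$ formulation, the terminal $h_t$ is only $(T,\eps)$-indistinguishable, while the test $g_t(x,y)=\log_2(1/h_t(x)(y))$ has complexity $\Theta(tTL)>T$, so the indistinguishability you earned does not cover $g_t$. The natural fallback---write $g_t$ (mod a $y$-independent part) as $-\sum_{i<t}\eta_iD_i$ and apply multiaccuracy to each $D_i$---only gives
\[
\big|\langle h_t-g^*,\,g_t\rangle\big|\;\le\;\Big(\sum_i\eta_i\Big)\cdot\eps\;\le\;O(\ell),
\]
since $\sum_i\eta_i=\Theta(\ell/\eps)$ by the potential bound; this yields $\HSh(Z\sps t\mid X)\ge\HSh(Y\mid X)+k-O(\log n)$, not $k-\negl(n)$. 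The paper's fix is precisely to add, at each round, a second ``weight-restricted calibration'' check: test whether $r\circ\hat g_k$ with $r\approx\nabla\varphi$ (here $r(v)\approx\log v$, i.e.\ essentially your $g_k$) has advantage $>\eps$, and if so update along it (Lines~\ref{line:if-2}--\ref{line:update-2} of Algorithm~1). This costs nothing extra---$\hat g_k$ is already in hand and $\nabla\varphi$ is cheap---yet guarantees $\langle h_t-g^*,\nabla\varphi\circ h_t\rangle\le\eps$ at termination, which via the identity $\H_\varphi(s)-\H_\varphi(g^*)-\D_\varphi(g^*\|s)=\langle g^*-s,\nabla\varphi\circ s\rangle$ (equation~\eqref{eq:gap-calibration}) gives the entropy gap directly with error $O(\eps)$ rather than $O(\ell)$. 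So your boosting needs this one extra check; without it the argument falls short of $\negl(n)$.
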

Here the hardness of $A(X)$ trying to predict the value of $Y$ is replaced by $A(X)$ trying to get close (in KL divergence) to the conditional distribution of $Y$ given $X$.  Vadhan and Zheng~\cite{characterize} used \Cref{thm:intro-shannon-rv}
to give a simpler construction of pseudorandom generators from one-way functions. Their pseudorandom generator construction avoids the use of hardcore bits like Goldreich--Levin~\cite{goldreich-levin}, which was the traditional way of converting the hardness of inverting a one-way function into pseudoentropy, as was used in the original construction of H\aa stad, Impagliazzo, Levin, and Luby~\cite{one-way}.

Note that \Cref{thm:intro-shannon-rv} is concerned with the {\em pseudoentropy gap} $k=\HSh(Z|X)-\HSh(Y|X)$, which measures how much \emph{more} randomness $Y$ appears to have to feasible observers in comparison to computationally unbounded ones.  Correspondingly, even if $\HSh(Y|X)$ is large, $Y$ may still be considered ``easy'' in that an efficient $A(X)$ can sample accurately from the conditional distribution $Y|_X$.  In contrast, the parameter $k$ in \Cref{thm:intro-min-rv} does not distinguish between the computational hardness and information-theoretic hardness in $Y$ given $X$.

Aside from this difference, the two theorems are very similar in spirit and one may wonder whether there is a unified proof for them, one which ideally also generalizes to other measures of randomness (beyond min-entropy and Shannon entropy).  Zheng~\cite{zheng} made progress in this direction, considering a wide family of entropy measures $\H_\varphi$ and their corresponding {\em Bregman divergences} $\D_\varphi$ (whose definition we defer to later in the introduction).

\begin{theorem} \label{thm:intro-bregman-rv}
Suppose that $\ell=O(\log n)$ and let $\H_\varphi(B|A)$ be a function that takes a random variable $(A,B)$ distributed on $\zo^n\times \zo^\ell$ and outputs the negative average over $a\gets A$ of a differentiable and strictly convex function $\varphi$ applied to the probability mass function of $B|A=a$.  Let $\D_\varphi$ be the Bregman divergence associated with $\varphi$.  Suppose that given a linear function $\Lambda$, the probability mass function $p$ on $\zo^\ell$ that minimizes $\Lambda(p)+\varphi(p)$ can be computed by nonuniform circuits of size $\poly(n)$, and that, given $p$, the gradient $\nabla \varphi(p)$ can also be computed accurately by circuits of size $\poly(n)$. Also assume that $\sup_p\|\nabla \varphi(p)\|_\infty = \poly(n)$.

Let $(X,Y)$ be a random variable distributed on $\zo^n\times \zo^\ell$ with $\ell=O(\log n)$, and let $k\geq 0$.  Then the following are equivalent:
\begin{enumerate}
\item \label{item:bregman-1} There is a random variable $Z$ on $\zo^\ell$, jointly distributed with $X$, such that $(X,Y)\approx^c (X,Z)$ and
$\H_\varphi(Z|X)\geq \H_\varphi(Y|X)+k - \negl(n)$.
\item \label{item:bregman-2} For every nonuniform, randomized polynomial-time algorithm $A$, $$\D_\varphi((X,Y)\|(X,A(X))) \geq k-\negl(n).$$
\end{enumerate}
\end{theorem}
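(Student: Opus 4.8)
\emph{The forward direction (\ref{item:bregman-1} $\Rightarrow$ \ref{item:bregman-2}).} Given such a $Z$, fix any nonuniform randomized polynomial-time $A$ and, for each $x$, let $p_x,r_x,q_x$ denote the probability mass functions of $Y$, $Z$, $A(x)$ conditioned on $X=x$ (so $q_x$ averages over $A$'s coins). Starting from the defining formula $\D_\varphi(p\|q)=\varphi(p)-\varphi(q)-\langle\nabla\varphi(q),p-q\rangle$, the three-point identity $\D_\varphi(p_x\|q_x)=\D_\varphi(p_x\|r_x)+\D_\varphi(r_x\|q_x)+\langle\nabla\varphi(r_x)-\nabla\varphi(q_x),\,p_x-r_x\rangle$ together with $\D_\varphi(r_x\|q_x)\ge 0$ and averaging over $x\gets X$ gives
\[
\D_\varphi((X,Y)\|(X,A(X))) \;\ge\; \big(\H_\varphi(Z|X)-\H_\varphi(Y|X)\big)\;-\;\E_{x}\big\langle\nabla\varphi(q_x),\,p_x-r_x\big\rangle.
\]
The first term is $\ge k-\negl(n)$ by hypothesis. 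The second term equals the advantage of the test $(x,w)\mapsto(\nabla\varphi(q_x))_w$ in distinguishing $(X,Y)$ from $(X,Z)$; since $\ell=O(\log n)$, the pmf $q_x$ can be estimated coordinatewise to any inverse-polynomial accuracy by a nonuniform $\poly(n)$-time algorithm that repeatedly samples $A(x)$, and since $\nabla\varphi$ is efficiently computable with $\sup\|\nabla\varphi\|_\infty=\poly(n)$, this test (after rescaling by the $\poly(n)$ bound) is a genuine polynomial-size distinguisher, so the second term is $\negl(n)$ by $(X,Y)\approx^c(X,Z)$. Hence $\D_\varphi((X,Y)\|(X,A(X)))\ge k-\negl(n)$.

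\emph{The hard direction (\ref{item:bregman-2} $\Rightarrow$ \ref{item:bregman-1}).} The core is a Bregman analogue of the Complexity-Theoretic Regularity / Leakage Simulation Lemma, refined so that the model is also consistent with its own $\nabla\varphi$: for every polynomial size bound $s$ and accuracy $\eps$ I would build, by boosting, a nonuniform $\poly(n,s,1/\eps)$-time map $M\colon\zo^n\to\{\text{pmfs on }\zo^\ell\}$ with (i) no size-$s$ circuit distinguishing $(X,Y)$ from $(X,M(X))$ with advantage $>\eps$, and (ii) $|\E_x\langle\nabla\varphi(M_x),\,p_x-M_x\rangle|\le\eps$, where $p_x$ is the pmf of $Y|_{X=x}$. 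Starting from $M_0=\argmin_p\varphi(p)$, while some size-$s$ circuit or the self-referential test $(x,w)\mapsto(\nabla\varphi(M_t(x)))_w$ has advantage $>\eps$, move in that direction by the mirror/proximal step $\nabla\varphi(M_{t+1}(x))\gets\nabla\varphi(M_t(x))+\eta\cdot(\text{that direction})$ followed by renormalizing onto the simplex --- which is exactly computing $\argmin_p(\Lambda(p)+\varphi(p))$ for a linear $\Lambda$, hence $\poly(n)$-time by hypothesis, and possible because $\ell=O(\log n)$ makes pmfs on $\zo^\ell$ polynomially succinct. The potential $\E_x\D_\varphi(p_x\|M_t(x))$ is $O(\sup\|\nabla\varphi\|_\infty)=\poly(n)$ at the start and, by the standard mirror-descent estimate using $\|D_t\|_\infty\le 1$ and the (quantitative) convexity of $\varphi$, decreases by an inverse-polynomial amount per step, so the process halts after $\poly(n,1/\eps)$ steps; one takes $s$ large enough to also cover the test in (ii), which is legitimate since each $M_t$ has polynomial size. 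Rearranging the expansion of $\D_\varphi$ gives $\H_\varphi(M(X)|X)-\H_\varphi(Y|X)=\D_\varphi((X,Y)\|(X,M(X)))+\E_x\langle\nabla\varphi(M_x),\,p_x-M_x\rangle$; the sampler $x\mapsto M_x$ is a nonuniform polynomial-time algorithm, so \ref{item:bregman-2} gives $\D_\varphi((X,Y)\|(X,M(X)))\ge k-\negl(n)$, while (ii) bounds the last term by $\eps$ in absolute value, whence $\H_\varphi(M(X)|X)\ge\H_\varphi(Y|X)+k-\negl(n)-\eps$. Running this with $s=n^c$, $\eps=1/n^c$ for every $c$ and then extracting a single limit point using compactness of the joint distributions with $X$-marginal $X$ (the constraints being closed and nested in $c$) produces one random variable $Z$ with $(X,Y)\approx^c(X,Z)$ and $\H_\varphi(Z|X)\ge\H_\varphi(Y|X)+k$, establishing \ref{item:bregman-1}.

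\textbf{Main obstacle.} The bulk of the work is the hard direction, specifically the self-consistent Bregman regularity lemma: all three structural hypotheses are used there --- efficient computability of $\argmin_p(\Lambda(p)+\varphi(p))$ for the per-step update, $\sup\|\nabla\varphi\|_\infty=\poly(n)$ both to bound the potential (hence the iteration count) and to certify that $\nabla\varphi(M(\cdot))$ is a valid distinguisher so that computational indistinguishability upgrades to the Bregman identity, and $\ell=O(\log n)$ to keep pmfs, sampling, estimation, and the model itself efficient. I expect the two delicate points to be: deriving a polynomial bound on the number of boosting steps from the quantitative convexity of $\varphi$ (immediate for Shannon entropy, whose $\varphi$ is $1$-strongly convex in $\ell_1$, and for the other natural cases), and the boundary regime where $k$ is close to its largest feasible value (handled by a perturbation/continuity argument together with \ref{item:bregman-2} applied to the constant model $\argmin_p\varphi(p)$); the remaining estimation-error tracking and poly-versus-negligible bookkeeping is routine. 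A non-constructive alternative to the boosting replaces it with Sion's minimax theorem plus Lagrangian duality for the single entropy constraint, the dual multiplier playing the role of $1/\eta$.
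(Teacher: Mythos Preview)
Your proposal is essentially correct and captures the paper's central insight: the identity
\[
\H_\varphi(M(X)|X)-\H_\varphi(Y|X)\;=\;\D_\varphi\big((X,Y)\,\|\,(X,M(X))\big)\;+\;\E_x\big\langle\nabla\varphi(M_x),\,p_x-M_x\big\rangle
\]
reduces the hard direction to constructing an efficient $M$ that is simultaneously indistinguishable from $g^*$ and satisfies your self-consistency condition~(ii), $|\E_x\langle\nabla\varphi(M_x),p_x-M_x\rangle|\le\eps$. The paper names condition~(ii) \emph{weight-restricted calibration} and proves an enhanced regularity lemma establishing it together with indistinguishability via boosting; your forward direction likewise matches the paper's three-point argument.

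The one substantive difference is the boosting algorithm. You run mirror descent with $\varphi$ itself as the potential (updates $\nabla\varphi(M_{t+1})\gets\nabla\varphi(M_t)+\eta D$ followed by the proximal step, i.e.\ the $\argmin_p(\Lambda(p)+\varphi(p))$ oracle), which is why that hypothesis gets used and why your iteration bound hinges on quantitative convexity of $\varphi$---the delicate point you correctly flag. The paper instead runs \emph{multiplicative weights} (mirror descent with negative Shannon entropy as the potential) \emph{regardless} of which $\varphi$ defines the target entropy notion. This buys two things: the iteration count is always $O((\log L)/\eps^2)$ with no strong-convexity assumption on $\varphi$, and the $\argmin(\Lambda+\varphi)$ oracle is never needed---only that $\nabla\varphi$ be bounded and efficiently approximable. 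In particular the paper's route extends to non-differentiable $\varphi$ (e.g.\ $\|\cdot\|_\infty$ for min-entropy), which your $\varphi$-mirror-descent route cannot handle. Your approach, on the other hand, explains exactly where each hypothesis of the theorem as stated is consumed, and is presumably closer in spirit to Zheng's original argument.
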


This theorem implies \Cref{thm:intro-shannon-rv} as a special case, because if we take $\varphi(p)=\sum_b p_b\log p_b$, then $\H_\varphi$ becomes conditional Shannon entropy and $\D_\varphi$ becomes KL divergence.\footnote{In this case, $\|\nabla \varphi(p)\|_\infty$ is unbounded as $p$ approaches the boundary of $\Delta_L$, violating the boundedness assumption required by \Cref{thm:intro-bregman-rv}, but this can be handled by slightly perturbing $p$ away from the boundary (see \Cref{sec:unbounded}).}  Moreover, if we instead take $\varphi(p)= \|p\|_\infty :=  \max_b p_b$, then $\H_\varphi$ is equivalent (up to a monotone increasing change of variables) to average min-entropy.
Unfortunately, we cannot derive \Cref{thm:intro-min-rv} from \Cref{thm:intro-bregman-rv} because this choice of $\varphi$ is not differentiable, and the proof of \Cref{thm:intro-bregman-rv} crucially relies on differentiability.  However, it does at least give us generalizations to other entropy measures.  In particular, taking $\varphi(p) = \sum_b p_b^2$ or $\varphi(p) = \sqrt{\sum_b p_b^2}$,
$\H_\varphi(B|A)$ measures conditional forms of the collision probability (equivalently, R\'enyi entropy of order 2), which have proved useful in cryptography and algorithmic applications, for example in \cite{HaitnerNgOnReVa,ChungMiVa}.  (For interpretations of $\D_\varphi$ in these cases, see \Cref{sec:examples}.)

A different unification of the hardness--randomness equivalences came recently in the work of Casacuberta, Dwork, and Vadhan~\cite{cdv}. Inspired by the concurrent work of Dwork, Lee, Lin, and Tankala\ \cite{dllt}, they reinterpreted the Multicalibration Theorem from the algorithmic fairness literature~\cite{hkrr,omni} and showed how it gives a simpler proof of \Cref{thm:intro-min-rv,thm:intro-shannon-rv} when $\ell=O(1)$, along with other known results about average-case complexity and computational indistinguishability (namely, Impagliazzo's Hardcore Lemma~\cite{Impagliazzo95,Holenstein05} and the complexity-theoretic Dense Model Theorem~\cite{rttv}).  Their interpretation of the Multicalibration Theorem can be stated informally as follows:
\begin{theorem}[Multicalibration Theorem, informally stated]
\label{thm:intro-mc-rv-informal}
Let $(X,Y)$ be a random variable distributed on $\zo^n\times \zo^\ell$ with $\ell=O(\log n)$.  Then there is a function $P : \zo^n\rightarrow \zo^{O(2^\ell\cdot \log n)}$ with circuit complexity $\poly(n)^{2^\ell}$ such that $(X,Y)\approx^c (X,Z)$, where for every $x\in \Supp(X)$, we define $Z|_{X=x}$ to be identically distributed to $Y|_{P(X)=P(x)}$.
\end{theorem}
Notice that conditioned on the value of $P(X)$, $Z$ and $X$ are independent of each other.  This means that, conditioned on $P(X)$, $(X,Z)$ is the ``most random'' random variable of the form $(X,W)$ that is indistinguishable from $(X,Y)$.  (Since $\ell=O(\log n)$, if $(X,W)$ is computationally indistinguishable from $(X,Y)$, then the marginal distribution of $W$ must be statistically indistinguishable from $Y$, so the only way to noticeably increase the entropy of $W$ given $X$ is to make $W$ more independent of $X$.)

Furthermore, when $\ell=O(1)$, $P$ has polynomial circuit complexity and outputs only $O(\log n)$ bits.  In this setting, we can thus derive \Cref{thm:intro-min-rv,thm:intro-shannon-rv,thm:intro-bregman-rv} easily.  To illustrate with \Cref{thm:intro-min-rv}, consider any polynomial-size nonuniform predictor $A$.  Since $(X,Y)\approx^c (X,Z)$, we have:
$$\Pr[A(X)=Y] \leq \Pr[A(X)=Z]+\negl(n) \leq 2^{-\Havgmin(Z|X)}+\negl(n),$$
and furthermore, we can construct an $A$ with $\Pr[A(X)=Y] = 2^{-\Havgmin(Z|X)}$, by having $A(X)$ compute $q=P(X)$ and output $b_q=\argmax_b \Pr[Y=b|P(X)=q]$, which we can hardwire into a polynomial-size nonuniform predictor.  We can afford to hardwire all of the $b_q$ values since $P$ outputs $O(\log n)$ bits and hence there are only $2^{O(\log n)}=\poly(n)$ choices for $q$.
Similarly, the fact that $P$ has polynomial circuit complexity and only outputs $O(\log n)$ bits implies that $Z$ is the random variable of maximum average min-entropy given $X$ such that $(X,Y)\approx^c (X,Z)$.

Interestingly, for \Cref{thm:intro-shannon-rv,thm:intro-bregman-rv}, we get that the polynomial-size nonuniform predictor $A$ that minimizes the divergence $\D_\varphi((X,Y)\|(X,A(X)))$ simply samples from $Z$ given $X$ (which can be done efficiently since $P$ has polynomial circuit complexity and outputs only $O(\log n)$ bits).
That is, the best efficiently computable predictor for $Y$ given $X$ (with respect to $\D_\varphi$) is {\em identical} to the highest-$\H_\varphi$ distribution $Z$ that is indistinguishable from $Y$ given $X$.  Furthermore, this distribution $Z$ is the same for all of the Bregman entropy measures $\H_\varphi$
(as well as for $\Havgmin$).

\paragraph{Correcting the order of quantifiers.}  The informal statement above suppresses an important order-of-quantifiers issue in the Multicalibration Theorem.  The function $P$ does not have circuit complexity $\poly(n)^{2^\ell}$ for a fixed polynomial $\poly(\cdot)$.  Rather, for every polynomial $p(n)$, there is a $P$ of circuit complexity $\poly(p(n))^{2^\ell}$, outputting $\log\left(\poly(p(n))^{2^\ell}\right)$ bits, such that $Y$ and $X$ are computationally indistinguishable from being independent conditioned on $P(X)$, for nonuniform distinguishers of size $p(n)$ and with an advantage bounded by $1/p(n)$.  This corrected order of quantifiers still suffices to prove \Cref{thm:intro-min-rv,thm:intro-shannon-rv,thm:intro-bregman-rv} when $\ell=O(1)$, without modifying those theorem statements.  However, the order of quantifiers must also be incorporated into our informal statement that the best efficiently computable predictor for $Y$ given $X$ (with respect to any $\D_\varphi$) is {\em identical} to the highest-entropy distribution $Z$ that is indistinguishable from $Y$ given $X$.  We give precise statements below.

In summary, the Multicalibration Theorem comes close to providing a unified picture of hardness versus randomness, but it has two significant deficiencies.  First, it has a factor of $2^\ell$ in the \emph{exponent} of the circuit complexity of $P$, which is unfortunately necessary (as we prove in \Cref{cor:lb}).  Second, it is only stated and used in \cite{cdv} for nonuniform circuit complexity, whereas the prior proofs of \Cref{thm:intro-min-rv,thm:intro-shannon-rv} had uniform-complexity analogues~\cite{unif-minmax,zheng}.

\subsection{Our Contributions}
\label{sec:contributions}
In this work, we address the aforementioned deficiencies of both \Cref{thm:intro-bregman-rv,thm:intro-mc-rv-informal}, and thereby provide a more unified and clarified understanding of hardness versus randomness.  Specifically, we provide a substantial strengthening of \Cref{thm:intro-bregman-rv} with the following features in \Cref{thm:intro-main-1}:
\begin{itemize}
\item The convex function $\varphi$ does not need to be differentiable, and thus our generalization encompasses \Cref{thm:intro-min-rv}.
\item Our theorem applies simultaneously to a set $\Phi$ of functions $\varphi$, with a complexity blow-up that depends on the circuit complexity needed to accurately evaluate the (sub)gradients $\nabla \varphi$, for an arbitrary $\varphi\in \Phi$.  For the specific $\varphi$ needed to capture \Cref{thm:intro-min-rv,thm:intro-shannon-rv}, as well as for conditional collision probability (R\'enyi 2-entropy), the complexity blow-up is just $\poly(2^\ell)$, which is affordable for $\ell=O(\log n)$.  But if we take all possible functions $\varphi$, we get a blow-up that is doubly exponential in $\ell$, as in the Multicalibration Theorem.
\item Similarly to the Multicalibration Theorem, we deduce (subject to appropriate quantifiers as mentioned above) that the best efficiently computable prediction for $Y$ given $X$ (with respect to $\D_\varphi$) is {\em identical} to the highest-$\H_\varphi$ distribution $Z$ that is indistinguishable from $Y$ given $X$.  Furthermore, this distribution $Z$ is the same for all of the $\varphi\in \Phi$.
\item We state and prove a uniform-complexity generalization of our theorem in \Cref{thm:intro-unif}.
This result encompasses the uniform-complexity versions of \Cref{thm:intro-min-rv,thm:intro-shannon-rv} as shown in \cite{characterize,zheng} and extends \Cref{thm:intro-bregman-rv} to the uniform-complexity setting as well (which, to our knowledge, has not been achieved in prior work). The uniform-complexity setting plays an important role in \cite{characterize}, which uses the uniform-complexity version of \Cref{thm:intro-shannon-rv} to give refined constructions of pseudorandom generators from one-way functions.
\item To justify the blow-up in our theorem depending on the complexity of (approximately) computing the (sub)gradients $\nabla \varphi$, we prove that a doubly exponential dependence on $\ell$ is unavoidable if we do not place any computational efficiency assumptions on $\varphi$, even when $\Phi$ only contains a single convex function $\varphi$ with bounded subgradients (\Cref{thm:lb-exp}). This result implies that the doubly exponential dependence is necessary for achieving multicalibration (as well as the weaker notion of \emph{calibrated multiaccuracy}) even when there is only a single, extremely simple distinguisher (\Cref{cor:lb}).

\item A key insight in our work is that full multicalibration is stronger than necessary for proving \Cref{thm:intro-min-rv,thm:intro-shannon-rv}.  Instead, it suffices to use a combination of two substantially weaker notions, {\em multiaccuracy} and {\em weight-restricted calibration}.  The latter notion is tailored to the specific family $\Phi$, imposing one constraint per function $\varphi\in \Phi$, leading to a complexity blow-up that depends on the maximum complexity of (the gradients of) the functions in $\Phi$.
In fact, we prove an {\em Enhanced Regularity/Leakage-Simulation Lemma}, which smoothly interpolates between these notions and multicalibration, as discussed in the following section.  
\end{itemize}

In independent works, Casacuberta, Gopalan, Kanade, and Reingold~\cite{global-cal-ma} and Dwork and Tankala~\cite{Tankala25} demonstrated the power of (variants of) a similar notion termed \emph{calibrated multiaccuracy} (see \Cref{sec:proof-idea} for more discussion). Casacuberta et al.~\cite{global-cal-ma} give applications in agnostic learning and to proving Impagliazzo's Hardcore Lemma (which, as mentioned earlier, is equivalent to \Cref{thm:intro-mc-rv-informal} for the case $\ell=1$), while Dwork and Tankala~\cite{Tankala25} show that calibrated multiaccuracy can substitute for multicalibration in the results of \cite{MarcussenPuVa25}, which characterize computational indistinguishability under repeated samples.

\subsection{Enhanced Regularity/Leakage-Simulation Lemma}

As mentioned above, our results are based on the following variant of the Multicalibration Theorem (where we suppress the same important order-of-quantifiers as we did in \Cref{thm:intro-mc-rv-informal}).

\begin{theorem}[Enhanced Regularity/Leakage-Simulation Lemma, informally stated] \label{thm:intro-enhanced-informal}
Let $(X,Y)$ be a random variable distributed on $\zo^n\times \zo^\ell$ with $\ell=O(\log n)$. Then there is a randomized function $S : \zo^n\rightarrow \zo^\ell$ with circuit complexity $\poly(n,2^\ell)$ such that $$(X,Y,s(X)) \approx^c (X,S(X),s(X)),$$
where $s(x)\in [0,1]^{2^\ell}$ denotes the probability mass function of $S(x)$, which is computable by a deterministic circuit also of size $\poly(n,2^\ell)$.
\end{theorem}
The naming of this lemma comes from the fact that it extends Complexity-Theoretic Regularity Lemma of Trevisan, Tulsiani, and Vadhan~\cite{ttv} ($\ell=1$) and the Leakage Simulation Lemma of Jetchev and Pietrzak~\cite{leakage} ($\ell>1$), which give the weaker conclusion 
that $$(X,Y) \approx^c (X,S(X)).$$
The terminology ``Regularity Lemma'' comes from he fact that it says that every randomized function $g^* : \zo^n\rightarrow \zo^\ell$, no matter how hard to compute, is indistinguishable from a ``low-complexity'' function $S$.  (Take $Y=g^*(X)$.)  This is analogous to regularity theorems in combinatorics, e.g. the Sz\'emeredi~\cite{szemeredi} and Frieze--Kannan~\cite{frieze-kannan} Regularity Lemmas, which say that every graph is ``indistinguishable'' from a ``low-complexity'' graph (e.g. one that can be described by a number of parameters that depends only on the level of indistinguishability).  The connection between complexity-theoretic regularity as in \cite{ttv} and regularity in graph theory and combinatorics can be made precise; see \cite{ttv,skorski,dllt}.
The terminology ``Leakage-Simulation Lemma'' comes from cryptographic applications where the function $g^*$ represents side information that an adversary gets, for example by a side channel that leaks information on a secret correlated with the $X$.  In this setting, it is natural and important to allow for multiple bits of leakage ($\ell>1$).

The new realization represented in \Cref{thm:intro-enhanced-informal}, coming from the algorithmic fairness literature, is that it is also possible to fool distinguishers that are given the probability distribution of $S(x)$, and this is very useful in applications.  

The $\ell=1$ case of \Cref{thm:intro-enhanced-informal} was proven under the formulation of ``Sample-Access Outcome Indistinguishability'' by Dwork, Kim, Reingold, Rothblum, and Yona~\cite{oi}, who observed that it is essentially equivalent to Multicalbration.  That observation generalizes to any $\ell=O(1)$: if we round each probability mass in the output distribution $S(x)$ to a multiple of $\eps/2^\ell$, the distinguishing advantage will increase by at most $\eps$.  If $\eps=1/\poly(n)$, then the resulting discretized function $s(x)$ defines an efficiently computable partition of $\zo^n$ into at most $(1/\eps)^{2^\ell} = \poly(n)^{2^\ell}$ pieces. When $\ell=O(1)$, the number of pieces is $\poly(n)$, so $(X,Y)$ and $(X,S(X))$ must be indistinguishable even conditioned on the value of $s(X)$, since a $\poly(n)$-sized distinguisher can incorporate optimal $\poly(n)$-sized distinguishers on each piece.  Note that conditioned on $s(X)=v$, $S(X)$ is independent from $X$, just like \Cref{thm:intro-mc-rv-informal} ensures that $Z$ is independent from $X$ conditioned on the value of $P(X)$.  Moreover, we can then modify $S(X)$ to equal the conditional distribution of $Y$ given $s(X)$, exactly matching the statement of the Multicalibration Theorem.

Thus our Enhanced Regularity/Leakage-Simulation Lemma smoothly interpolates between Multicalibration (with its necessarily exponenital dependence on the size $L=2^\ell$ of the label space) and a natural strenghtening of the guarantees of the earlier Regularity/Leakage-simulation lemmas (while keeping their polynomial dependence on $L$). 
Thus, when seeking applications, a fruitful strategy can be to first use the stronger and more intuitive property of Multicalibration, and afterwards see if it can be replaced with the Enhanced Regularity/Leakage-Simulation Lemma to obtain improved parameters. 

Before moving on to our proof ideas (which requires introducing some new formalism), we provide a precise statement of the Enhanced Regularity/Leakage-Simulation Lemma (with no cheating in the order of quantifiers): 

\begin{theorem}[Enhanced Regularity/Leakage-Simulation Lemma for General Distinguishers]
\label{thm:intro-reg-non-unif-rv}
For every $n, \ell, T\in \mathbb{N}$,  $\varepsilon\in (0,1/2)$, and pair of random variables $(X,Y)$ taking values in $\zo^n\times \zo^\ell$, there exists a randomized simulator $S :\zo^n\to \zo^\ell$ such that
    \begin{enumerate}
        \item $S$ has circuit complexity
        \[
        O\left(\frac{T\cdot \ell}{\varepsilon^2} + 2^\ell\cdot \poly(1/\varepsilon,\ell)\right);
        \]
        \item For every distinguisher $D$ of size at most $T$, we have
        $$\left|\Pr[D(X,Y,s(X))=1] - \Pr[D(X,S(X),s(X))=1]\right|\leq \eps,$$
        where for $x\in \zo^n$, $s(x)\in [0,1]^{2^\ell}$ denotes the probability mass function of $S(x)$,
        which is computable by a deterministic circuit of the same size as promised for $S$.
    \end{enumerate}
\end{theorem}
Note that when the distinguisher size $T$, inverse distinguishing advantage $1/\eps$, and range size $2^\ell$ are bounded by $\poly(n)$, the circuit complexity of the simulator $S$ is bounded by $\poly(n)$, albeit a larger polynomial.  (The same increase in circuit size occurs in the Complexity-Theoretic Regularity Lemma and the Leakage Simulation Lemma, and does not interfere with their applicability in complexity theory and cryptography.)

For completeness, we state a similarly precise version of our formulation of the Multicalibration Theorem (\Cref{thm:intro-mc-rv-informal}), as follows from \Cref{thm:intro-reg-non-unif-rv}, in case it is useful in other applications.
\begin{theorem}[Multicalibration Theorem, informally stated]
\label{thm:intro-mc-rv-informal-ours}
For every $n, \ell, T\in \mathbb{N}$,  $\varepsilon\in (0,1/2)$, and pair of random variables $(X,Y)$ taking values in $\zo^n\times \zo^\ell$, there exists a function $P :\zo^n\to [k]$ such that
\begin{enumerate}
    \item $P$ partitions the domain $\zo^n$ into 
    \[
    k=O\left(\frac{1}{\eps}\right)^{2^\ell-1}
    \]
    pieces.
    \item $P$ has circuit complexity
    \[
    O\left(\frac{T\cdot \ell\cdot 2^\ell}{\varepsilon^2}\right)
    + O\left(\frac{1}{\varepsilon}\right)^{2^\ell+1}
    + 2^\ell\cdot \poly(1/\varepsilon,\ell).
    \]
    \item For every distinguisher $D$ of size at most $T$, we have
        \[
        \sum_{i=1}^k
        \left|\Pr[D(X,Y)=1, P(X)=i] - \Pr[D(X,Z)=1, P(X)=i]\right|
        \le \eps,
        \]
        where for $x\in \zo^n$, the $Z|X=x$ is defined to be identically distributed to 
        $Y|P(X)=P(x)$.    
\end{enumerate}
\end{theorem}
Other precise formulations of the Multicalibration Theorem can be found in \Cref{sec:mc}.

The functional formulation and proof of this statement appear in \Cref{thm:intro-mc-rv-formal}.

\subsection{Example Application}

To illustrate the power of the Enhanced Regularity/Leakage-Simulation Lemma, we sketch here how it can be used to give a short proof of the difficult direction of \Cref{thm:intro-min-rv}, namely that unpredictability (\Cref{itm:intro-min-rv-unpred}) implies pseudoentropy (\Cref{itm:intro-min-rv-pseudo}).  For readability, we will use the informally stated \Cref{thm:intro-enhanced-informal}, but the proof using the precise version follows the same argument.

Let $(X,Y)$ be a random variable distributed on $\zo^n\times \zo^\ell$ with $\ell=O(\log n)$.  Per \Cref{itm:intro-min-rv-unpred} of \Cref{thm:intro-min-rv}, assume that for every nonuniform polynomial-time algorithm $A$,
\begin{equation} \label{eqn:intro-min-rv-unpred}
    \Pr[A(X)= Y] \leq \frac{1}{2^k}+\negl(n),
\end{equation}
where
$\negl(n)$ denotes a negligible function.  Now apply the Enchanced Regularity/Leakage-Simulation Lemma (\Cref{thm:intro-enhanced-informal}) to obtain a randomized polynomial-sized circuit $S : \zo^n\rightarrow \zo^\ell$ with probability mass function $s : \zo^n\rightarrow [0,1]^{2^\ell}$ also computable in polynomial size.

Define $Z=S(X)$.  We are promised by \Cref{thm:intro-enhanced-informal} that $(X,Y)\approx^c (X,Z)$, so we only need to argue that $\Havgmin(Z|X)\geq k$. Here we will use the fact that computational indistinguishability holds even if we provide the distinguishers with $s(X)$.  Indeed, consider the distinguisher $D : \zo^n\times \zo^\ell \times [0,1]^{2^\ell} \rightarrow \zo$ defined by:
$$D(x,y,v) =
\begin{cases}
1 & \text{if $y = \argmax_{z\in \zo^\ell} v_z$ \qquad (breaking ties in lexicographic order)}\\
0 & \text{otherwise.}
\end{cases}
$$
Observe that $\Pr[D(X,Z,s(X))=1] = 2^{-\Havgmin(Z|X)}$, because $D(X,Z,s(X))$ is exactly testing whether the optimal information-theoretic predictor of $Z$ from $X$ succeeds, since $s(X)$ is exactly the probability mass function of $Z$ given $X$.  Thus, to show that $\Havgmin(Z|X)\geq k$, we need to show that $\Pr[D(X,Z,s(X))=1]\leq 2^{-k}$.  Since $D$ is computable by a circuit of size $O(2^\ell)=\poly(n)$, we know that
$\Pr[D(X,Z,s(X))=1]$ differs only negligibly from 
$\Pr[D(X,Y,s(X))=1]$, so let's consider the latter quantity.

Observe that $\Pr[D(X,Y,s(X))=1]=\Pr[A(X)=Y]$ for
$$A(x) = \argmax_{z\in \zo^\ell} s(x)_z \qquad \text{(breaking ties in lexicographic order)}.$$
Since $s$ is computable by a polynomial-sized circuit and $\ell=O(\log n)$, $A$ is also computable by a polynomial-sized circuit.  Thus, by our hypothesis (\Cref{eqn:intro-min-rv-unpred}), $\Pr[A(X)=Y]\leq 2^{-k}+\negl(n)$. 

Thus, we conclude that 
$$2^{-\Havgmin(Z|X)} = \Pr[D(X,Z,s(X))=1] \leq 
\Pr[D(X,Y,s(X))=1] + \negl(n) \leq 2^{-k} +\negl(n),$$
exactly as desired up to the additive $\negl(n)$.  The additive $\negl(n)$ can be easily removed by a modifying $Z$ up to a negligible statistical distance.

\section{Technical Overview}
\subsection{Notation}

To describe our results and techniques more precisely, it will be convenient to switch from random-variable notation as used above to a functional notion.
Let $L$ be a positive integer, which represents $2^\ell$ in the discussion above.
Let $\Delta_L\subseteq [0,1]^L$ be the set of probability distributions over a label space $[L] := \{1,\ldots,L\}$, where we represent discrete probability distributions by their probability mass functions.
Then our random variable $(X,Y)$ can be represented by the probability distribution $\mu$ of $X$ and a function $g^* : \zo^n\rightarrow \Delta_L$, where $g^*(x)$ is the distribution of $Y|_{X=x}$.
Similarly, we can represent another random variable $Z$ jointly distributed with $X$ by a {\em simulator} $s : \zo^n\rightarrow \Delta_L$.

We normally would describe a distinguisher $d$ between $(X,Y)$ and $(X,Z)$ as a function $d : \zo^n\times [L]\rightarrow [0,1]$.  Instead, it will be convenient to describe them as functions $f : \zo^n\rightarrow [-1,1]^L$, where the $y$-th coordinate of $f(x)$ (denoted by
$f(x)_y$) represents $1-2d(x,y)$.  Then computational indistinguishability between $(X,Y)$ and $(X,Z)$ can be conveniently described as follows.
\begin{definition}
\label{def:indistinguishability}
Let $\mu$ be a distribution on $\zo^n$ and $g^*,s : \zo^n\rightarrow \Delta_L$.  We say that $g^*$ and $s$ are {\em $(T,\eps)$-indistinguishable} with respect to $\mu$ if for every $f : \zo^n\rightarrow [-1,1]^L$ with circuit complexity at most $T$, we have:
$$\left|\E_{x\sim \mu}\left[\langle s(x)-g^*(x),f(x)\rangle\right]\right|\leq \eps,$$
where $\langle \cdot, \cdot \rangle$ denotes the standard inner product on $\R^L$.
We write $\ind_\mu(g^*;T,\varepsilon)$ to denote the class of all functions $s : \zo^n\rightarrow \Delta_L$ that are $(T,\eps)$-indistinguishable from $g^*$.
\end{definition}
From now on, we fix an arbitrary distribution $\mu$ over $\zo^n$ and will often omit it from our notation.

We define our entropy measures as follows:
\begin{definition}
\label{def:entropy}
Let $\varphi : \Delta_L\rightarrow \R$ be a convex function.  For a distribution $\mu$ on $\zo^n$ and a function $g : \zo^n\rightarrow \Delta_L$, we define $$\H_\varphi(g) = -\E_{x\sim \mu}\left[\varphi(g(x))\right].$$
\end{definition}
Consider what happens when we apply this definition to a $\mu$ and $g^*$ representing a random variable $(X,Y)$ on $\zo^n\times [L]$.
If we take $\varphi(p) = \sum_b p_b\log p_b$, then $\H_\varphi(g^*) = \HSh(Y|X).$
If we take $\varphi(p) = \|p\|_\infty$, then
$\H_\varphi(g^*) = -2^{-\Havgmin(Y|X)}$, which is indeed a monotone increasing change of variables of the average min-entropy.

The divergence measures we consider are defined as follows.
\begin{definition}[Bregman Divergence]
\label{def:bregman}
Let $\varphi : \Delta_L\rightarrow \R$ be a convex function.
If $\varphi$ is differentiable, then for
$g^*,g : \zo^n\rightarrow \Delta_L$ and a distribution $\mu$ on $\zo^n$, we define the {\em
Bregman divergence} as:
\begin{align*}
\D_\varphi(g^*\|g)
&:= \E_{x\sim\mu}\left[\D_\varphi(g^*(x)\|g(x))\right]\\
&:= \E_{x\sim\mu}\left[\varphi(g^*(x)) - \varphi(g(x)) - \langle g^*(x) - g(x), \nabla \varphi(g(x))\rangle\right]\\
&= \H_\varphi(g)-\H_\varphi(g^*) -
\E_{x\sim\mu}\left[\langle g^*(x) - g(x), \nabla \varphi(g(x))\rangle\right],
\end{align*}
where $\nabla \varphi(v)\in \R^L$ is the gradient of $\varphi$ at $v\in \Delta_L$.
If $\varphi$ is not differentiable, then we replace $\nabla \varphi(p)$ with a fixed choice of {\em subgradient} of $\varphi$ at each $p\in \Delta_L$.   (See \Cref{sec:conjugate}.)
\end{definition}
Consider what happens when we apply this definition to $g^*$ and $g$ representing random variables $(X,Y)$ and $(X,A(X))$ on $\zo^n\times [L]$, respectively.
If we take $\varphi(p) = \sum_b p_b\log p_b$, then $\D_\varphi(g^*\|g) = \Dkl((X,Y)\|(X,A(X))).$
If we take $\varphi(p) = \|p\|_\infty$, this function is not differentiable and its subgradient is not unique. We choose
\[
\nabla \varphi(p) = \be_{\argmax_b p_b},
\]
where $\be_b$ is the standard basis vector whose $b$-th coordinate is one, and $\argmax_b p_b$ is the index of the largest coordinate of $p$. When there are multiple largest coordinates, we define $\argmax_b p_b$ to be the smallest index of those largest coordinates. Now we have for $v^*,v\in \Delta_L$,
\begin{align*}
\D_\varphi(v^*\|v)
&= \|v^*\|_\infty - \|v\|_\infty - \langle v^*-v, \be_{\argmax_jv_j}\rangle\\
&=
\|v^*\|_\infty - \langle v^*, \be_{\argmax_jv_j}\rangle\\
&=
\|v^*\|_\infty - v^*_{\argmax_jv_j}.
\end{align*}
Thus, we calculate the divergence between functions $g^*$ and $g$ as:
\begin{align*}
\D_\varphi(g^*\|g)
&= 2^{-\Havgmin(Y|X)}
- \Pr[Y=\argmax_b g(X)_b].
\end{align*}
Consequently, minimizing $\D_\varphi(g^*\|g)$ over all efficiently computable $g$ amounts to maximizing the probability that an efficiently computable $A'(x)=\argmax_b g(x)_b$ predicts $Y$, aligning with \Cref{thm:intro-min-rv}.

For a size bound $T \ge 0$, we use $\Time (T)$ to denote the class of functions $g:\zo^n\to \Delta_L$ with circuit complexity at most $T$. For a function $r:\Delta_L\to \R^L$, we define
\[
\|r\|_\infty:= \sup_{p\in \Delta_L}\|r(p)\|_\infty.
\]

With this notation in hand, we can restate
\Cref{thm:intro-bregman-rv} of Zheng as follows:
\begin{theorem}[Characterization of generalized pseudoentropy \cite{zheng}]
\label{thm:intro-diff}
Let $\varphi:\Delta_L \to \R$ be a differentiable strictly convex function.
Let $T_\varphi,T_\varphi'$ be size bounds such that given a linear function $\Lambda$ on $\Delta_L$, the $p\in \Delta_L$ minimizing $\varphi(p)+\Lambda(p)$ can be computed by circuits of size at most $T_\varphi$, and given $p\in \Delta_L$, the gradient $\nabla \varphi(p)$ can be computed by circuits of size at most $T_\varphi'$.
Then for every size bound $T > 0$,
every $\varepsilon\in (0,1)$,
there exists $T' = \poly(T,n,L,1/\varepsilon)+T_\varphi$ such that for every $g^*:\zo^n\to \Delta_L$,
\begin{equation}
\label{eq:intro-diff-1}
    \underbrace{\max_{s\in \ind(g^*;T,\varepsilon)}\H_\varphi(s) - \H_\varphi(g^*)}_{\text{pseudoentropy gap}} \ge \underbrace{\min_{g\in \Time(T')}\D_\varphi(g^*\|g)}_{\text{hardness of approximating}}.
\end{equation}
Conversely, for every size bound $T$
and for every $g^*:\zo^n\to \Delta_L$, we have\footnote{The dependency on $\|\nabla \varphi\|_\infty$ in \eqref{eq:intro-diff-converse} ensures that \eqref{eq:intro-diff-converse} remains invariant after scaling $\varphi$ by any positive factor.}
\begin{equation}
\label{eq:intro-diff-converse}
\underbrace{\max_{s\in \ind(g^*;T+T_\varphi',\varepsilon)}\H_\varphi(s) - \H_\varphi(g^*)}_{\text{pseudoentropy gap}} \le \underbrace{\min_{g\in \Time(T)} \D_\varphi(g^*\|g)}_{\text{hardness of approximating}} + \,\varepsilon \cdot \|\nabla \varphi\|_\infty.
\end{equation}
\end{theorem}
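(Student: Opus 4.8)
I would prove the two inequalities by separate arguments. The converse \eqref{eq:intro-diff-converse} is a short, direct computation. Let $s$ attain the maximum on its left-hand side and let $g\in\Time(T)$ attain the minimum on its right-hand side. Unfolding Definition~\ref{def:bregman} for $\D_\varphi(g^*\|g)$ and for $\D_\varphi(s\|g)$ and subtracting yields
\[
\H_\varphi(s)-\H_\varphi(g^*) = \D_\varphi(g^*\|g) - \D_\varphi(s\|g) - \E_{x\sim\mu}\big[\langle s(x)-g^*(x),\,\nabla\varphi(g(x))\rangle\big].
\]
Since $\varphi$ is convex, $\D_\varphi(s\|g)\ge 0$, so it remains to bound the last expectation in absolute value by $\eps\|\nabla\varphi\|_\infty$. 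The function $x\mapsto\nabla\varphi(g(x))/\|\nabla\varphi\|_\infty$ takes values in $[-1,1]^L$ and is computable in nonuniform time $T+T_\varphi'+O(1)$, so applying the definition of $(T+T_\varphi',\eps)$-indistinguishability to $s$ gives exactly this bound, hence \eqref{eq:intro-diff-converse}.

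For the substantive direction \eqref{eq:intro-diff-1}, the plan is to construct a single simulator $s$ that is simultaneously $(T,\eps)$-indistinguishable from $g^*$ and computable in time $T'=\poly(T,n,L,1/\eps)+T_\varphi$, via an iterative mirror-descent / complexity-theoretic-regularity-lemma procedure phrased through the convex conjugate $\varphi^*$. One maintains a dual function $\theta_t\colon\zo^n\to\R^L$ with $\theta_0\equiv 0$ and the induced primal $g_t:=\nabla\varphi^*(\theta_t)\in\Delta_L$; note that $\nabla\varphi^*(\theta)=\argmin_{p\in\Delta_L}\{\varphi(p)-\langle\theta,p\rangle\}$ is exactly the linear-function minimization assumed computable in time $T_\varphi$. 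While some $f_t\in\Time(T)$ witnesses $\big|\E_{x\sim\mu}\langle g_t(x)-g^*(x),f_t(x)\rangle\big|>\eps$, one updates $\theta_{t+1}=\theta_t-\delta f_t$ with a step size $\delta$ chosen to decrease the potential $E_t:=\E_{x\sim\mu}[\varphi^*(\theta_t(x))-\langle g^*(x),\theta_t(x)\rangle]$; otherwise one halts and outputs $s:=g_t$, which by construction lies in $\ind(g^*;T,\eps)$. A Fenchel-duality identity shows $E_t=\D_\varphi(g^*\|g_t)+\H_\varphi(g^*)$, so $E_t$ stays within the range of $\varphi$ over $\Delta_L$, while each update drops $E_t$ by at least an inverse-polynomial amount (depending on $\eps$ and a smoothness parameter of $\varphi^*$ that is $\poly(L)$ for the entropy notions of interest). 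Hence the procedure halts after $m=\poly(L,1/\eps)$ steps, so $\theta_m=-\sum_{t<m}\delta f_t$ is a combination of $m$ functions from $\Time(T)$ and $s=\nabla\varphi^*(\theta_m)\in\Time(T')$. Finally, rearranging the Bregman identity for $\D_\varphi(g^*\|s)$ gives $\H_\varphi(s)-\H_\varphi(g^*)=\D_\varphi(g^*\|s)+\E_{x\sim\mu}\langle g^*(x)-s(x),\nabla\varphi(s(x))\rangle$, and since $s\in\Time(T')$ the first term is at least $\min_{g\in\Time(T')}\D_\varphi(g^*\|g)$.

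The step I expect to be the main obstacle is making the correction term $\E_{x\sim\mu}\langle g^*(x)-s(x),\nabla\varphi(s(x))\rangle=\E_{x\sim\mu}\langle g^*(x)-s(x),\theta_m(x)\rangle$ non-negative, so that no error term appears on the right of \eqref{eq:intro-diff-1}. Merely stopping at the first iterate with no $\eps$-distinguisher only bounds this quantity by $O(\eps\,m\,\delta)$ in absolute value, which is not enough; instead one should push the iteration to the exact optimum of $E(\theta)+\eps\|\theta\|_{\Time(T)}$ over the linear span of $\Time(T)$ --- equivalently, by Sion's minimax theorem, to the optimal simulator realizing $\max_{s\in\ind(g^*;T,\eps)}\H_\varphi(s)$ --- where first-order optimality forces $\E_{x\sim\mu}\langle g^\dagger(x)-g^*(x),f(x)\rangle=0$ for every $f\in\Time(T)$ and hence kills the correction term exactly. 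The iterative procedure is then needed only to certify that (a sufficiently good approximation of) this optimum is computable in time $T'$, and reconciling ``exact optimum'' with ``bounded complexity'' without loss is where the conjugate-duality and sparsification bookkeeping of \cite{zheng} does its real work. One also has to handle $\varphi$ whose gradient is unbounded near $\partial\Delta_L$ (e.g.\ Shannon entropy) by a small perturbation toward the interior, as noted after Theorem~\ref{thm:intro-bregman-rv}.
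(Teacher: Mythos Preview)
Your converse \eqref{eq:intro-diff-converse} is correct and is exactly how the paper proves the analogous bound \eqref{eq:converse} in Theorem~\ref{thm:intro-main-1}: the same three-term identity, the nonnegativity of $\D_\varphi(s\|g)$, and indistinguishability applied to the test $\nabla\varphi\circ g$. (The paper does not give a separate proof of Theorem~\ref{thm:intro-diff}; it is quoted from \cite{zheng}.)

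For the forward direction \eqref{eq:intro-diff-1}, your plan is Zheng's original route and is \emph{different} from what the paper does. The paper only proves the related Theorem~\ref{thm:intro-main-1}, whose forward bound \eqref{eq:intro-main-1} carries an additive $-\eps$. There the mirror descent is run with the \emph{fixed} KL potential (Multiplicative Weights), independently of $\varphi$, and the correction term $\langle s-g^*,\nabla\varphi\circ s\rangle$ is handled not by duality but simply by adding $r_\varphi\approx\nabla\varphi$ as an extra ``distinguisher'' in the boosting loop (weight-restricted calibration, Theorem~\ref{thm:reg-non-unif}); this yields $\langle s-g^*,\nabla\varphi\circ s\rangle\le\eps$ and hence the $\eps$ loss. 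The upside is that the argument needs no minimax, no sparsification, works for non-differentiable $\varphi$, and handles many $\varphi$ simultaneously with a single $s$. Your $\varphi$-specific mirror descent plus minimax is the right tool if one insists on the \emph{exact} inequality \eqref{eq:intro-diff-1}, which the paper's method does not deliver.

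One technical correction to your ``main obstacle'' paragraph. At the optimum $\theta^*$ of $E(\theta)+\eps\|\theta\|_{\Time(T)}$, first-order optimality does \emph{not} force $\langle g^\dagger-g^*,f\rangle=0$ for every $f\in\Time(T)$; that would be the stationarity condition for the unregularized $E$ and would pin $g^\dagger=g^*$. What actually happens is $g^*-g^\dagger\in\eps\,\partial\|\theta^*\|_{\Time(T)}$, and the subdifferential identity for a norm (at $\theta^*\neq 0$) gives both $\sup_{f\in\Time(T)}|\langle g^\dagger-g^*,f\rangle|\le\eps$ (so $g^\dagger\in\ind(g^*;T,\eps)$) and $\langle g^*-g^\dagger,\theta^*\rangle=\eps\|\theta^*\|_{\Time(T)}\ge 0$. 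Since $\nabla\varphi(g^\dagger)=\theta^*$, this is precisely the nonnegativity of the correction term you need. After that, your description --- replace $\theta^*$ by a sparse combination of $\poly$-many $f$'s so that $\nabla\varphi^*(\theta^*)$ lands in $\Time(T')$, and absorb the perturbation for $\varphi$ with unbounded gradient as in Section~\ref{sec:unbounded} --- is accurate.
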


\Cref{thm:intro-diff} indeed implies \Cref{thm:intro-bregman-rv} by the following correspondences:
\begin{align*}
L & = 2^\ell,\\
\H_\varphi(g^*) & = \H_\varphi(Y|X),\\
\H_\varphi(s) & = \H_\varphi(Z|X),\\
\D_\varphi(g^*\|g) & = \D_\varphi((X,Y)\|(X,A(X))).
\end{align*}
The assumption $\ell = O(\log n)$ in \Cref{thm:intro-bregman-rv} gives $L = \poly(n)$. We also have $T_\varphi,T_\varphi' = \poly(n)$ from the assumptions in \Cref{thm:intro-bregman-rv}. These conditions ensure that $T' = \poly(T,n,1/\varepsilon)$ in \Cref{thm:intro-diff}, so we can use \eqref{eq:intro-diff-1} to prove that \Cref{item:bregman-2} (hardness) in \Cref{thm:intro-bregman-rv}  implies \Cref{item:bregman-1} (pseudoentropy). Similarly, the reverse implication follows from \eqref{eq:intro-diff-converse}.

\subsection{Formal Statement of Results}
We state our main results in \Cref{thm:intro-main-1,thm:intro-unif,thm:intro-unif-converse,thm:lb-exp} below.
\subsubsection{Non-uniform Setting}
\begin{theorem}
\label{thm:intro-main-1}
Let $T > 0$ be a size bound and let $\varepsilon\in (0,1)$ be an error parameter.
Let $\Phi$ be a family of convex functions $\varphi:\Delta_L\to \R$. Assume that given $v\in \Delta_L$, the subgradient $\nabla \varphi(v)$ is bounded in $[-1,1]^L$ and can be computed to $\ell_\infty$ accuracy $\eps/4$ with circuit complexity at most $T_\Phi$, for every $\varphi\in \Phi$.
Then there exists
\begin{equation}
\label{eq:blowup}
T' = O\left(\frac{(T + T_\Phi)\log L}{\varepsilon^2} + L\cdot \poly(1/\varepsilon,\log L)\right)
\end{equation}
such that for every $g^*:\zo^n\to \Delta_L$, there exists $s\in \ind(g^*;T,\varepsilon)\cap \allowbreak \Time(T')$
such that
    \begin{equation}
    \label{eq:intro-main-1}
    \H_\varphi(s) - \H_\varphi(g^*) \ge \D_\varphi(g^*\|s) - \varepsilon \quad \text{for every $\varphi\in \Phi$.}
    \end{equation}
Conversely, for every $T > 0$, there exists $T' = O(T + T_\Phi)$ such that for every $g^*:\zo^n\to \Delta_L$ and every $\varphi\in \Phi$,
\begin{equation}
\label{eq:converse}
\max_{s\in \ind(g^*;T',\varepsilon/2)}(\H_\varphi(s) - \H_\varphi(g^*)) \le \min_{g\in \Time(T)}\D_\varphi(g^*\|g) + \varepsilon.
\end{equation}
\end{theorem}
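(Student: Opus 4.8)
The plan is to first rewrite the target inequality \eqref{eq:intro-main-1} using the Bregman identity in \Cref{def:bregman}. Since $\D_\varphi(g^*\|s) = \H_\varphi(s) - \H_\varphi(g^*) - \E_{x\sim\mu}[\langle g^*(x) - s(x), \nabla\varphi(s(x))\rangle]$, the inequality $\H_\varphi(s) - \H_\varphi(g^*) \ge \D_\varphi(g^*\|s) - \varepsilon$ is equivalent to the one-sided \emph{weight-restricted calibration} condition $\E_{x\sim\mu}[\langle s(x) - g^*(x), \nabla\varphi(s(x))\rangle] \le \varepsilon$ for every $\varphi\in\Phi$. So it suffices to produce a single $s$ that is simultaneously (i) $(T,\varepsilon)$-indistinguishable from $g^*$ (multiaccuracy) and (ii) weight-restricted calibrated with respect to the weight maps $\{\nabla\varphi : \varphi\in\Phi\}$.

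\textbf{The joint boosting process.} I would build $s$ by a multiplicative-weights regularity-lemma argument that interleaves the two types of fixes. Maintain a logit function $h_i:\zo^n\to\R^L$ with $h_0\equiv 0$ and set $s_i := \softmax(h_i)$, so $s_0$ is uniform. At round $i$: if $s_i\notin\ind(g^*;T,\varepsilon)$, pick a distinguisher $f_i\in\Time(T)$ with values in $[-1,1]^L$ and (after a possible sign flip) $\E_x[\langle s_i(x)-g^*(x),f_i(x)\rangle] > \varepsilon$; otherwise, if (ii) fails for some $\varphi\in\Phi$, set $f_i(x) := \nabla\varphi(s_i(x))$ computed to $\ell_\infty$-accuracy $\varepsilon/4$, clipped to $[-1,1]^L$, which then satisfies $\E_x[\langle s_i(x)-g^*(x),f_i(x)\rangle] > \varepsilon/2$ (the $\varepsilon/4$-error changes the inner product by at most $\|s_i(x)-g^*(x)\|_1\cdot\varepsilon/4\le\varepsilon/2$). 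In both cases update $h_{i+1} := h_i - \eta f_i$ with $\eta = \varepsilon/4$. The potential $\Psi_i := \E_x[\Dkl(g^*(x)\|s_i(x))]$ is non-negative, starts at $\Psi_0 = \E_x[\log L - \HSh(g^*(x))]\le\log L$, and by the standard multiplicative-weights bound (using $\|f_i(x)\|_\infty\le 1$) drops by at least $\eta\cdot(\varepsilon/2) - \eta^2 = \Omega(\varepsilon^2)$ each round. Hence the process halts after $m = O(\log L/\varepsilon^2)$ rounds with $s := s_m$ satisfying (i); and at termination $\E_x[\langle s(x)-g^*(x),\nabla\varphi(s(x))\rangle]\le\varepsilon/2 + \varepsilon/2 = \varepsilon$ for every $\varphi$ (approximate gradient plus true gradient), which is exactly \eqref{eq:intro-main-1}. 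Note the number of rounds is independent of $|\Phi|$, and since locating the violating $f_i$ or $\varphi$ need not be efficient, we invoke non-uniformity and simply hard-wire the length-$m$ sequence.

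\textbf{Complexity bookkeeping.} The key point for the runtime \eqref{eq:blowup} is to compute the iterates \emph{sequentially} through the running logit: given $h_i(x)$, one obtains $s_i(x) = \softmax(h_i(x))$ in time $\poly(L)$ (carrying $\poly(\log(mL/\varepsilon))$ bits of precision), and then $f_i(x)$ in additional time $T$ (multiaccuracy step) or $T_\Phi + \poly(L)$ (calibration step, running the subgradient algorithm on the freshly computed $s_i(x)$). Crucially, no recursive re-expansion of earlier iterates is needed, so the total cost is $m$ times the per-round cost, giving $T' = O\!\big(\tfrac{(T+T_\Phi)\log L}{\varepsilon^2} + L\cdot\poly(1/\varepsilon,\log L)\big)$. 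One must also check that the accumulated numerical error in $s_m$ is small enough that both conclusions still hold, which is routine with the stated precision.

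\textbf{The converse, and the main obstacle.} For \eqref{eq:converse}, fix $\varphi\in\Phi$, any $g\in\Time(T)$, and any $s\in\ind(g^*;T',\varepsilon/2)$ with $T' = O(T+T_\Phi)$. Non-negativity of the Bregman divergence gives $\D_\varphi(s\|g)\ge 0$, i.e.\ $\H_\varphi(s)\le\H_\varphi(g) - \E_x[\langle s(x)-g(x),\nabla\varphi(g(x))\rangle]$; subtracting $\H_\varphi(g^*)$ and comparing with $\D_\varphi(g^*\|g) = \H_\varphi(g) - \H_\varphi(g^*) - \E_x[\langle g^*(x)-g(x),\nabla\varphi(g(x))\rangle]$ yields $\H_\varphi(s) - \H_\varphi(g^*) - \D_\varphi(g^*\|g) \le \E_x[\langle g^*(x)-s(x),\nabla\varphi(g(x))\rangle]$. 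The map $x\mapsto\nabla\varphi(g(x))$ (computed to $\ell_\infty$-accuracy $\varepsilon/4$) is a $[-1,1]^L$-valued function of complexity $T+T_\Phi\le T'$, so by $(T',\varepsilon/2)$-indistinguishability this expectation is at most $\varepsilon/2$ in absolute value, and the $\varepsilon/4$-gradient error adds at most another $\varepsilon/2$; hence the right-hand side is $\le\varepsilon$. Taking the maximum over $s$ and minimum over $g$ finishes the converse, which is the easy direction once one identifies $\nabla\varphi\circ g$ as the right distinguisher. I expect the main obstacle to be the forward direction: designing one boosting process that enforces both multiaccuracy and weight-restricted calibration at once while keeping the complexity of $s$ polynomial in $T+T_\Phi$ and only \emph{linear} in $L$ — which is precisely what the sequential logit computation together with the $O(\log L/\varepsilon^2)$ round bound buys us.
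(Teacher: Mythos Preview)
Your proposal is correct and follows essentially the same route as the paper. Both directions match: for the forward direction you rewrite \eqref{eq:intro-main-1} via the Bregman identity as weight-restricted calibration, then run a multiplicative-weights boosting loop (with KL potential, initial value $\le \log L$, per-round drop $\Omega(\varepsilon^2)$) that interleaves multiaccuracy and calibration updates, exactly as in the paper's Algorithm~1 and \Cref{thm:reg-non-unif}; for the converse you use $\D_\varphi(s\|g)\ge 0$ and treat the approximate $\nabla\varphi\circ g$ as a $[-1,1]^L$-valued distinguisher of complexity $O(T+T_\Phi)$, which is precisely the paper's argument for \eqref{eq:converse}.
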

In \Cref{thm:intro-main-1}, the \emph{single} simulator $s$ satisfies \eqref{eq:intro-main-1} for \emph{all} $\varphi\in \Phi$, and it simultaneously achieves indistinguishability ($s\in\ind(g^*;T,\varepsilon)$) and computational efficiency ($s\in \Time(T')$). Therefore, \eqref{eq:intro-main-1} is stronger than the following guarantee akin to \eqref{eq:intro-diff-1} in \Cref{thm:intro-diff}:
    \[
    \max_{s\in \ind(g^*;T,\varepsilon)}\H_\varphi(s) - \H_\varphi(g^*)  \ge \min_{g\in \Time(T')} \D_\varphi(g^*\|g) - \varepsilon \quad \text{for every $\varphi\in \Phi$.}
    \]
Here, the entropy maximizer $s$ on the left (the \emph{simulator}) and the divergence minimizer $g$ on the right (the \emph{approximator}) can be different and can change based on the choice of $\varphi$. By contrast, in \eqref{eq:intro-main-1}, the single function $s$ plays the roles of the simulator and the approximator simultaneously for all $\varphi\in \Phi$.
This gives a rigorous confirmation for the aforementioned informal intuition from the Multicalibration Theorem (\Cref{thm:intro-mc-rv-informal}) that there is a close correspondence between the best efficiently computable approximator $g$ for $g^*$ (with respect to any $\D_\varphi$) and the highest-entropy simulator $s$ that is indistinguishable from $g^*$.

The increase in complexity at \eqref{eq:blowup} in \Cref{thm:intro-main-1} is polynomial in $L = 2^\ell$, a significant improvement over the doubly exponential dependence on $\ell$ in \Cref{thm:intro-mc-rv-informal}. We achieve this by focusing on a class $\Phi$ of functions $\varphi$ whose subgradients $\nabla \varphi$ are efficiently computable (up to small $\ell_\infty$ error). This efficiency assumption is naturally satisfied by common entropy notions such as the Shannon entropy ($\varphi(v) = \sum_i v_i\ln v_i$), the min entropy ($\varphi(v) = \|v\|_\infty$), and the collision probability ($\varphi(v) = \sum_i v_i^2$, see \Cref{sec:examples}). One caveat is that \Cref{thm:intro-main-1} also requires the subgradient $\nabla\varphi(v)$ to be bounded, which is not satisfied by the Shannon entropy when $v$ is close to the boundary of the probability simplex $\Delta_L$. However, this can be addressed by slightly perturbing $v$ away from the boundary of $\Delta_L$ (see \Cref{sec:unbounded}).

In \Cref{thm:lb-exp} (which we present shortly), we show that some efficiency assumption on the convex functions $\varphi$ is necessary to avoid an exponential dependence on $L$ (i.e.\ to avoid a doubly exponential dependence on $\ell$).

In contrast to \Cref{thm:intro-diff}, our \Cref{thm:intro-main-1} does not require $\varphi$ to be differentiable or strictly convex. When $\varphi$ is not differentiable, its subgradient at some $v\in \Delta_L$ may not be unique, and \Cref{thm:intro-main-1} holds as long as we use any fixed choice of subgradient to define the Bregman divergence $\D_\varphi$ (assuming that the subgradient is bounded and can be computed up to small $\ell_\infty$ error in time $T$). Therefore, our \Cref{thm:intro-main-1} encompasses both \Cref{thm:intro-min-rv,thm:intro-shannon-rv}.

By \eqref{eq:converse}, the function $s\in \ind(g^*;T,\varepsilon)\cap \allowbreak \Time(T')$ in \eqref{eq:intro-main-1} satisfies
\[
\H_\varphi(s) - \H_\varphi(g^*) \ge \max_{s'\in \ind(g^*;T'',\varepsilon/2)} (\H_\varphi(s') - \H_\varphi(g^*)) - 2\varepsilon \quad \text{for every $\varphi\in \Phi$},
\]
where
\[
T'' = O(T' + T_\Phi) = O(T') = O\left(\frac{(T + T_\Phi)\log L}{\varepsilon^2} + L\cdot \poly(1/\varepsilon,\log L)\right).
\]
This means that the \emph{single} indistinguishable function $s\in \ind(g^*;T,\varepsilon)$ achieves comparable or higher entropy than \emph{every} $s'\in \ind(g^*;T'',\varepsilon/2)$ w.r.t.\ \emph{every} entropy notion $\H_\varphi$ for $\varphi\in \Phi$. In this sense, $s$ is a \emph{universal simulator} for the pseudoentropy of $g^*$.

Combining the two directions \eqref{eq:intro-main-1} and \eqref{eq:converse} of \Cref{thm:intro-main-1} in a different order, we get that for every size bound $T > 0$ and every $g^*:\zo^n\to \Delta_L$, there exist $T' = O(T + T_\Phi)$,
\[
T'' = O\left(\frac{(T' + T_\Phi)\log L}{\varepsilon^2} + L\cdot \poly(1/\varepsilon,\log L)\right)
\]
and a function $s\in \ind(g^*;T',\varepsilon/2)\cap \Time(T'')$ such that
\[
\D_\varphi(g^*\|s) \le \min_{g\in \Time(T)}\D_\varphi(g^*\|g) + 2\varepsilon \quad \text{for every $\varphi\in \Phi$}.
\]
This means that the \emph{single} low-complexity function $s\in \Time (T'')$ achieves comparable or better approximation error than \emph{every} $g\in \Time (T)$ w.r.t.\ \emph{every} divergence notion $\D_\varphi$ for $\varphi\in \Phi$.
In the recent learning theory literature, a function $s$ satisfying this property is termed an \emph{omnipredictor} \cite{omni}. Our \Cref{thm:intro-main-1} thus recovers a result of \cite{loss-oi} showing a construction of omnipredictors based on \emph{calibrated multiaccuracy}. (See \Cref{sec:proof-idea} for more discussion.)

We discuss our proof ideas for \Cref{thm:intro-main-1} in \Cref{sec:proof-idea} and present the formal proof in \Cref{sec:proof-main-non-unif}.

\subsubsection{Uniform Setting}

In \Cref{thm:intro-unif} below, we prove a uniform version of \Cref{thm:intro-main-1} that encompasses the uniform versions of \Cref{thm:intro-min-rv,thm:intro-shannon-rv} in \cite{characterize,zheng} and extends \Cref{thm:intro-bregman-rv} also to the uniform setting (which has not been explicitly achieved in prior work).
In the uniform setting, instead of having a single function $g^*:\zo^n\to \Delta_L$, we are interested in a family of functions $g^*_{n,L}:\zo^n\to \Delta_L$ defined for varying choices of input length $n$ and label space size $L$. \Cref{thm:intro-main-1} only allows us to construct an efficiently computable function $s$ satisfying \eqref{eq:intro-main-1} separately for each choice of $(n,L)$, so it does not guarantee the existence of a single uniform algorithm (e.g.\ a Turing machine) that efficiently computes $s$ for all choices of $(n,L)$. Our \Cref{thm:intro-unif} addresses this limitation.

To state \Cref{thm:intro-unif}, we use terminology similar to the uniform versions of \Cref{thm:intro-min-rv,thm:intro-shannon-rv} in \cite{characterize,zheng}, where we formalize a uniform distinguisher as an efficient uniform oracle $\cA$ that, when given sample access to $g^*,g:\zo^n\to \Delta_L$ on a distribution $\mu$,\footnote{This means that $\cA$ can obtain random samples $(x,y,y^*)\in \zo^n\times [L]\times [L]$, where we first draw $x$ from $\mu$ and then independently draw $y\sim g(x),y^*\sim g^*(x)$.} aims to output a function $f:\zo^n\to [-1,1]^L$ that witnesses the violation of indistinguishability:
\begin{equation}
\label{eq:intro-unif-ind}
\langle g - g^*, f \rangle > \varepsilon.
\end{equation}
We say $g$ is $(\cA,\varepsilon)$-\emph{distinguishable} from $g^*$ if \eqref{eq:intro-unif-ind} holds with high probability, and otherwise we say  $g$ is $(\cA,\varepsilon)$-indistinguishable. We defer the formal definition to \Cref{def:dist-calib-oracles}.

Our goal is to construct an $(\cA,\varepsilon)$-indistinguishable function $s$ satisfying \eqref{eq:intro-main-1} using an efficient uniform algorithm that takes $\cA$ and $\cB$ as subroutines, and we still want $s$ to satisfy \eqref{eq:intro-main-1} simultaneously for all $\varphi\in \Phi$. To achieve this, we assume that the function class $\{\nabla \varphi\}_{\varphi\in \Phi}$ is \emph{weakly-agnostically learnable} by $\cB$,
which is trivially satisfied when $\Phi$ only contains a single $\varphi$ whose subgradient $\nabla \varphi$ can be efficiently approximated (as in \Cref{thm:intro-min-rv,thm:intro-shannon-rv}). Roughly speaking, we assume that $\cB$ can weakly agnostically find a calibration function $r(v)$ whenever some $r(v)=\nabla\varphi(v)$ has large correlation with $g-g^*$. In the proof, we combine $\cA$ and $\cB$ into a single general distinguishing oracle. We defer the formal definition to \Cref{def:weak-ag}.

\begin{theorem}[Uniform version of \Cref{thm:intro-main-1}; informal statement of \Cref{thm:unif}]
\label{thm:intro-unif}
Let $\Phi$ be a class of convex functions $\varphi:\Delta_L\to \R$, and let $\mu$ be an arbitrary distribution on $\zo^n$.
    Let $\cA$ be an arbitrary distinguishing oracle, and let $\cB$ be an $\varepsilon_1$-weak agnostic calibration oracle for the family $\{\nabla \varphi\}_{\varphi\in\Phi}$. We can use $\cA$ and $\cB$ as subroutines to construct an efficient (uniform) algorithm $\cS$ that, given sample access to $g^*:\zo^n\to \Delta_L$ on $\mu$, with high probability outputs a (low-complexity) function $s$ that is $(\cA,\varepsilon)$-indistinguishable from $g^*$, and such that
    \[
    \H_\varphi(s) - \H_\varphi(g^*) \ge \D_\varphi(g^*\|s) - \varepsilon_1  \quad \text{for every $\varphi\in \Phi$}.
    \]
\end{theorem}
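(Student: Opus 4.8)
The plan is to construct $s$ as a single function that is simultaneously \emph{multiaccurate} (i.e., $(\cA,\varepsilon)$-indistinguishable from $g^*$) and \emph{weight-restricted calibrated} against the weight class $\{\nabla\varphi : \varphi\in\Phi\}$, and then to read off the conclusion from a one-line identity. Indeed, the definition of the Bregman divergence (\Cref{def:bregman}) gives, for every $\varphi\in\Phi$ and every $s:\zo^n\to\Delta_L$,
\[
\H_\varphi(s) - \H_\varphi(g^*) - \D_\varphi(g^*\|s) = \E_{x\sim\mu}\langle g^*(x) - s(x),\, \nabla\varphi(s(x))\rangle,
\]
where $\nabla\varphi$ denotes the fixed subgradient choice (so no differentiability of $\varphi$ is needed). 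Hence it suffices to produce a low-complexity, $(\cA,\varepsilon)$-indistinguishable $s$ with $\sup_{\varphi\in\Phi}\E_{x\sim\mu}\langle s(x) - g^*(x),\, \nabla\varphi(s(x))\rangle \le \varepsilon_1$. The key structural point is that this supremum is exactly the largest correlation attainable by a weight function from $\{\nabla\varphi\}_{\varphi\in\Phi}$ on the distribution of the pair $(v,z):=(s(X),\, s(X)-g^*(X))$, which is precisely the object that the $\varepsilon_1$-weak agnostic learner $\cB$ is built to search for; moreover $z\in[-1,1]^L$ as $\cB$ requires, since $s(x),g^*(x)\in\Delta_L$, and $z$ can be accessed from sample access to $g^*$ via the unbiased proxy $s(x)-\be_{y^*}$ with $y^*\sim g^*(x)$. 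Thus if $\cB$, run on such samples, fails to return a weight function whose correlation exceeds (up to the slack built into its guarantee) $\varepsilon_1$, then the desired weight-restricted calibration inequality holds.

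The construction is an iterative potential-descent (boosting) procedure in the spirit of the Complexity-Theoretic Regularity Lemma and the Leakage Simulation Lemma. I would initialize $s_0$ to the uniform distribution on $[L]$ for every input, and track the potential $\Psi_t:=\E_{x\sim\mu}\,\Dkl(g^*(x)\,\|\,s_t(x))$, which starts at $\le\log L$ and is always nonnegative. In round $t$, the algorithm $\cS$ first feeds $\cA$ sample access to the pair $(s_t,g^*)$ on $\mu$ (it can sample $y\sim s_t(x)$ since $s_t$ is efficiently evaluable, and $y^*\sim g^*(x)$ from its own sample access); if $\cA$ returns $f_t:\zo^n\to[-1,1]^L$ whose empirical correlation $\E_x\langle s_t(x)-g^*(x),f_t(x)\rangle$ — which lies in $[-2,2]$ per sample (using $\be_{y^*}$ as an unbiased proxy for $g^*(x)$) and so is estimable to accuracy $\varepsilon/10$ from $O(1/\varepsilon^2)$ samples — exceeds $\varepsilon/2$, it performs the multiplicative-weights update $s_{t+1}(x)_i\propto s_t(x)_i\exp(-\eta f_t(x)_i)$. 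Otherwise it runs $\cB$ on samples of $(s_t(X),\, s_t(X)-\be_{y^*})$; if $\cB$ returns a weight function $r_t$ whose empirical correlation against $z$ exceeds (up to the guarantee's slack) $\varepsilon_1/2$, it performs $s_{t+1}(x)_i\propto s_t(x)_i\exp(-\eta r_t(s_t(x))_i)$. If neither step fires, it halts and outputs $s:=s_t$.

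A standard multiplicative-weights calculation (using $e^a\le 1+a+a^2$ for $|a|\le 1$, valid here since $\eta\le 1$ and the updating vectors lie in $[-1,1]^L$) shows that each multiaccuracy update decreases $\Psi_t$ by $\Omega(\varepsilon^2)$ and each calibration update decreases it by $\Omega(\varepsilon_1^2)$, with \emph{no} dependence on $L$ because the updating vectors are $\ell_\infty$-bounded. Hence $\cS$ halts within $O(\log L/\min(\varepsilon,\varepsilon_1)^2)$ rounds. Upon halting: $\cA$ found no $\varepsilon/2$-violation, so — after accounting for the $\varepsilon/10$ sampling error and union-bounding $\cA$'s failure probability over the polynomially many rounds — $s$ is $(\cA,\varepsilon)$-indistinguishable; and $\cB$ found no weight function correlating above (roughly) $\varepsilon_1/2$, so by its weak-agnostic guarantee $\sup_{\varphi\in\Phi}\E_x\langle s(x)-g^*(x),\nabla\varphi(s(x))\rangle\le\varepsilon_1$, which together with the identity above yields $\H_\varphi(s)-\H_\varphi(g^*)\ge\D_\varphi(g^*\|s)-\varepsilon_1$ for all $\varphi\in\Phi$. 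Finally, $s$ is low-complexity because it is a composition of $O(\log L/\min(\varepsilon,\varepsilon_1)^2)$ layers, each an evaluation of one hardwired $\cA$- or $\cB$-output followed by an elementary $O(L)$-time exponential-weights step; and $\cS$ is a uniform algorithm calling $\cA$ and $\cB$ a polynomial number of times on polynomially many samples.

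The main obstacle will be making the halting condition on the calibration side interface cleanly with the weak agnostic learner: I need "$\cB$ reports no violating $r_t$" to genuinely certify $\sup_{\varphi\in\Phi}\E_x\langle s_t(x)-g^*(x),\nabla\varphi(s_t(x))\rangle\le\varepsilon_1$, which forces a careful calibration of the threshold in the second step against the precise quantitative promise in the definition of an $\varepsilon_1$-weak-agnostic learner and against the sampling error, and a check that the distribution handed to $\cB$ — the law of $(s_t(X),\, s_t(X)-g^*(X))$, accessible only through the unbiased estimate $s_t(X)-\be_{y^*}$ — is one that $\cB$ can learn over given only sample access to $g^*$. A secondary nuisance is uniformity bookkeeping: the randomized subroutines $\cA,\cB$ are invoked $\poly$ many times, so their failure probabilities must be driven down by a union bound, and the resulting $s$ (with all $f_t,r_t$ hardwired into its description) must be exhibited as the output of a single uniform procedure, exactly matching the form required by \Cref{thm:unif}.
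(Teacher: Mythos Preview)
Your proposal is correct and follows essentially the same approach as the paper: reduce via the identity $\H_\varphi(s)-\H_\varphi(g^*)-\D_\varphi(g^*\|s)=\langle g^*-s,\nabla\varphi\circ s\rangle$ to constructing an indistinguishable, weight-restricted-calibrated $s$, then build $s$ by a multiplicative-weights boosting loop that alternates between invoking $\cA$ and $\cB$ and tracks the KL potential, exactly as in \Cref{thm:reg-unif}/Algorithm~2 and \Cref{thm:unif}. The only cosmetic differences are that the paper maintains the dual variable $h_k$ (with $s=\softmax\circ h_k$) rather than $s_t$ directly, and parameterizes the calibration oracle so that a successful $\cB$-call yields correlation $>\varepsilon$ (hence every update gains $\Omega(\varepsilon^2)$ and the round count is $O((\log L)/\varepsilon^2)$ rather than your $O((\log L)/\min(\varepsilon,\varepsilon_1)^2)$).
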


\Cref{thm:intro-unif} encompasses the uniform versions of \Cref{thm:intro-min-rv,thm:intro-shannon-rv} in \cite{characterize,zheng} by choosing the entropy notion $\H_\varphi$ accordingly. In the uniform versions of \Cref{thm:intro-min-rv,thm:intro-shannon-rv}, we assume that every function $s'$ with $\H_\varphi(s') - \H_\varphi(g^*)$ exceeding some threshold $\beta$ is $(\cA,\varepsilon)$-distinguishable from $g^*$. Under this assumption, since the function $s$ from \Cref{thm:intro-unif} is $(\cA,\varepsilon)$-indistinguishable from $g^*$, it must hold that $\H_\varphi(s) - \H_\varphi(g^*) \le \beta$, which implies that $\D_\varphi(g^*\|s) \le \beta + \varepsilon_1$. In this case, the algorithm $\cS$ from \Cref{thm:intro-unif} becomes an efficient uniform algorithm that approximates $g^*$ within low $\D_\varphi$ error, exactly as needed to prove the uniform versions of \Cref{thm:intro-min-rv,thm:intro-shannon-rv}.

We remark that the uniform versions of \Cref{thm:intro-min-rv,thm:intro-shannon-rv} in \cite{characterize,zheng} also have a reverse direction, which extends to general entropy notions $\H_\varphi$ and gives a uniform version of the reverse direction \eqref{eq:converse} of \Cref{thm:intro-main-1}. This generalization can be proved straightforwardly using the same proof idea as its nonuniform counterpart. We state this result as follows:

\begin{theorem}[Reverse direction of \Cref{thm:intro-unif}]
\label{thm:intro-unif-converse}
Let $\cA$ be an arbitrary uniform-time-$T$ oracle computing a function $g:\zo^n\to\Delta_L$.
Let $\varphi:\Delta_L\to \R$ be a convex function. Assume, similarly to \Cref{thm:intro-main-1},  that given $v\in \Delta_L$, the subgradient $\nabla \varphi(v)$ is bounded in $[-1,1]^L$ and can be computed to $\ell_\infty$ accuracy $\eps/4$ in uniform time $T_\Phi$ using an oracle $\cB$.
There exists an $O(T + T_\Phi)$-uniform-time algorithm that computes a function $f:\zo^n\to[-1,1]^L$ using $\cA$ and $\cB$ as subroutines with the following property.
For every distribution $\mu$ on $\zo^n$, every function $g^*:\zo^n\to\Delta_L$,
and every function $s:\zo^n\to\Delta_L$ that is $(\varepsilon/2)$-indistinguishable from $g^*$ by $f$, meaning
\[
\langle s-g^*,f\rangle\le \varepsilon/2,
\]
it holds
\[
\H_\varphi(s)-\H_\varphi(g^*) \le \D_\varphi(g^*\|g)+\varepsilon.
\]
\end{theorem}
We present the proofs of \Cref{thm:intro-unif,thm:intro-unif-converse} in \Cref{sec:proof-main-unif}.

\subsubsection{Exponential Lower Bound}
In our \Cref{thm:intro-main-1}, we make a computational efficiency assumption about the convex functions $\varphi\in \Phi$ corresponding to the entropy notions considered in the theorem: we assume that the subgradient function $\nabla \varphi$ can be approximated by a circuit of size at most $T_\Phi$ for every $\varphi\in \Phi$. Correspondingly, the circuit complexity of the function $s$ guaranteed by the theorem depends polynomially on $T_\Phi$, in addition to the polynomial dependency on $T,L$ and $1/\varepsilon$.

In this section, we prove the following lower-bound theorem showing that some complexity assumption on $\varphi$ is necessary to avoid an exponential dependency on $L$ in the circuit complexity of $s$. This lower bound holds even when $\Phi$ only contains a single convex function $\varphi$ and there is only a single, extremely simple distinguisher $f$:
\begin{theorem}
\label{thm:lb-exp}
    For every sufficiently large positive integer $n$, choosing $L = n$, there exist a distribution $\mu$ over $\zo^n$, a function $g^*:\zo^n\to \Delta_L$, an $O(n)$-sized circuit $f:\zo^n\to [-1,1]^L$, and a convex function $\varphi:\Delta_L\to \R$ with bounded subgradient $\nabla \varphi(v)\in [-1,1]^L$ for every $v\in \Delta_L$ that satisfy the following property. Let $s:\zo^n\to \Delta_L$ be an arbitrary function that is $(\{f\},0.05)$-indistinguishable from $g^*$ and satisfies the following inequality:
\begin{equation}
\label{eq:larger-entropy}
\H_\varphi(s) \ge \H_\varphi(g^*) - 0.05.
\end{equation}
    Then $s$ must have circuit complexity $\exp(\Omega(n)) = \exp(\Omega(L))$.
    Moreover, we can simply choose $f$ to be the identity function: $f(x) = x\in \zo^n\subseteq[-1,1]^L$ for every $x\in \zo^n$.
\end{theorem}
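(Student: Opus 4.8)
The plan is to exhibit a hard instance by a counting/entropy argument: we pick $g^*$ to be essentially a point distribution that encodes the identity of $x$, choose $\varphi$ so that both the entropy constraint \eqref{eq:larger-entropy} and the indistinguishability constraint $s\in\ind(g^*;\{f\},0.05)$ force $s(x)$ to place almost all of its mass on a coordinate that reveals a lot of information about $x$, and then observe that any function mapping $\zo^n$ to such informative distributions must have large circuit complexity because most $n$-bit strings require exponential-size circuits to single out. Concretely, I would let $\mu$ be uniform on $\zo^n$ and, identifying $[L]=[n]$, define $g^*(x)$ to be (close to) the point mass $\be_{j(x)}$ for a coordinate $j(x)$ chosen so that the bit-pattern of $j(x)$ together with a small amount of side information pins down $x$; since $L=n$ only gives $\log n$ bits per label, the right move is instead to make $g^*(x)_i$ encode the $i$-th bit of $x$ in a normalized way, e.g.\ $g^*(x) \propto x$ viewed as a nonnegative vector (handling the all-zeros string separately). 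Then $f=\mathrm{id}$ measures $\E_x\langle s(x)-g^*(x),x\rangle = \E_x\langle s(x),x\rangle - \E_x\|x\|_2^2/\|x\|_1$, so the indistinguishability constraint says $\E_x\langle s(x),x\rangle$ is within $0.05$ of its ``ideal'' value, which for a typical $x$ with $\|x\|_1\approx n/2$ forces $\langle s(x),x\rangle$ to be bounded away from what a distribution ignorant of $x$ could achieve.

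The key steps, in order, are: (1) fix $\mu=\unif(\zo^n)$ and the explicit $g^*$ above, and compute the ``honest'' value $\alpha^* := \E_{x\sim\mu}\langle g^*(x),x\rangle$, noting $\langle g^*(x),x\rangle = \|x\|_2^2/\|x\|_1 = 1$ for every $x\neq 0$ (since $x$ is a $0/1$ vector, $x_i^2=x_i$), so $\alpha^*$ is essentially $1$; (2) choose a convex $\varphi$ with $\|\nabla\varphi\|_\infty\le 1$ — a natural candidate is $\varphi(v)=\|v\|_\infty$ or a smoothed variant like $\varphi(v)=\max_i v_i$, or even $\varphi(v)=-H_{\mathrm{something}}$ — engineered so that the constraint $\H_\varphi(s)\ge\H_\varphi(g^*)-0.05$ says $\E_x\varphi(s(x))$ is close to $\E_x\varphi(g^*(x))$, which for $g^*(x)$ spread over roughly $n/2$ coordinates makes $\E_x\varphi(g^*(x))$ a specific small quantity, and then argue this forces $s(x)$ to be ``spread out'' for typical $x$; (3) combine (1) and (2): indistinguishability says $\langle s(x),x\rangle\approx 1$ on average, meaning $s(x)$ must put almost all its weight on coordinates $i$ with $x_i=1$, i.e.\ $s(x)$ ``knows'' the support of $x$ up to small error; (4) conclude that the map $x\mapsto\Supp_{\ge\delta}(s(x))$ (coordinates of weight $\ge\delta$) recovers $x$ (up to a bounded number of errors) on a $0.9$-fraction of inputs, so a circuit computing $s$ yields, with small overhead, a circuit that computes (a good approximation to) the identity function on $\zo^n$ while only being allowed to "look at" $x$ through — wait, that is automatic; the real point is a counting bound: if $s$ had circuit size $s_0$, then the number of distinct functions $x\mapsto$ (rounded support of $s(x)$) realizable is at most $2^{O(s_0\log s_0)}$, yet our constraints force this map to take $\ge 2^{0.9\cdot 2^n}$ distinct values (it must approximately recover $x$), giving $s_0=\exp(\Omega(n))$.

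I should be careful that the argument in step (4) is not circular: computing the identity map is trivial, so largeness cannot come from ``approximating identity'' per se. The actual source of hardness must be that $s$ is constrained to be a \emph{specific} informative function simultaneously satisfying both an $\ell_\infty$-type entropy bound and a linear (inner-product) bound, and I will instead argue as follows: let $\delta>0$ be small; the two constraints imply that for a $0.9$-fraction of $x$, the distribution $s(x)$ has all but $\delta$ of its mass on $\{i: x_i=1\}$ \emph{and} has $\ell_\infty$ norm at most (roughly) $2/n + o(1/n)$ (from the $\varphi=\|\cdot\|_\infty$ entropy bound), hence its support of weight-$\ge \delta/n$ coordinates is a subset of $\Supp(x)$ of size $\Theta(n)$; rounding $s(x)$ coordinatewise to multiples of $1/n^2$ and reading off which of the $n$ coordinates exceed $\delta/n$ yields a function $h:\zo^n\to\zo^n$ with $h(x)\preceq x$ and $\|x\|_1-\|h(x)\|_1\le \delta n + o(n)$; the number of such Boolean functions $h$ realizable by size-$s_0$ circuits is $2^{O(s_0\log s_0)}$, but the number of functions $h$ with $h(x)\preceq x$ and $h(x)$ agreeing with $x$ on a constant fraction of coordinates for $\ge 0.9\cdot 2^n$ inputs is $2^{\Omega(2^n)}$ (choose $h(x)=x$ on a random such large set; these give distinct $h$ whenever the sets differ, and there are doubly-exponentially many such sets yielding distinct restrictions). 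The main obstacle will be calibrating $\varphi$, the thresholds $\delta$, and the two slack parameters $0.05$ so that both constraints are simultaneously binding enough to force the near-recovery of $x$ on a $0.9$-fraction of inputs while keeping $\|\nabla\varphi\|_\infty\le 1$ and $f$ exactly the identity; I expect this parameter juggling, rather than any deep idea, to be where the real work lies, and I would handle it by first proving a clean version with $f$ a scaled identity and generous constants, then tightening.
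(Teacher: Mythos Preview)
Your proposal has a fatal gap: with your choices of $\mu$, $g^*$, and $\varphi$, the function $s=g^*$ itself satisfies both constraints (trivially $(\{f\},0)$-indistinguishable from $g^*$, trivially achieving $\H_\varphi(s)=\H_\varphi(g^*)$) and yet has circuit complexity $\poly(n)$, since $g^*(x)=x/\|x\|_1$ is computed by normalizing a $0/1$ vector. So no lower bound can possibly follow. Your step~(4) counting argument confuses ``there exist many functions $h$ with $h(x)\preceq x$ close to $x$'' with ``the particular $h$ induced by $s$ must be one of the hard ones''; the identity $h(x)=x$ is in your class and is easy, and nothing in your constraints prevents $s$ from realizing exactly this $h$.

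The missing idea is that $g^*$ (and here also $\varphi$) must themselves encode hard-to-compute information, so that merely matching the entropy of $g^*$ while remaining indistinguishable from it forces $s$ to ``know'' that information. In the paper's proof, one first takes a combinatorial design: a family $\cX'\subseteq\zo^n$ of size $\exp(\Omega(n))$ whose members, viewed as subsets of $[n]$, each have size $\ge\alpha n$ but pairwise intersection $\le 2\alpha^2 n$. Each set $S_x$ is split arbitrarily into two equal halves $S_x^{(0)},S_x^{(1)}$, and a \emph{random} function $\eta:\cX'\to\zo$ selects which half is ``correct''. One sets $\mu=\unif(\cX')$, $g^*(x)=\unif(S_x^{(\eta(x))})$, $f=\mathrm{id}$, and $\varphi(v)=\max_{x\in\cX'}\langle v,x^{(1-\eta(x))}\rangle$, whose subgradient is the indicator of an argmax and hence lies in $\zo^L\subseteq[-1,1]^L$. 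Indistinguishability forces $\E_x\langle s(x),x\rangle\ge 0.95$, while the entropy bound (using the small-intersection property of the design) forces $\E_x\langle s(x),x^{(1-\eta(x))}\rangle$ to be small, so $\E_x\langle s(x),x^{(\eta(x))}\rangle\ge 0.6$. For any \emph{fixed} $s$ and uniformly random $\eta$, the expectation of this quantity is at most $1/2$ and the $\eta(x)$'s are independent across $x\in\cX'$, so by Chernoff the probability it exceeds $0.6$ is $\exp(-\Omega(|\cX'|))$; a union bound over the $\exp(\poly(T))$ circuits of size $T$ then yields an $\eta$ defeating all of them unless $T=\exp(\Omega(n))$. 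The crucial point your proposal lacks is this injection of a random ``secret'' $\eta$ into both $g^*$ and $\varphi$, which is what makes every compliant $s$ an $\eta$-predictor.
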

The condition \eqref{eq:larger-entropy} is very mild. In particular, by the nonnegativity of Bregman divergences, \eqref{eq:larger-entropy} is a necessary condition for the following guarantee of our main theorem (\Cref{thm:intro-main-1}):
\begin{equation}
\label{eq:guarantee-main}
\H_\varphi(s) - \H_\varphi(g^*) \ge \D_\varphi(g^*\|s) - 0.05.
\end{equation}
Therefore, \Cref{thm:lb-exp} implies an exponential lower bound on the circuit complexity of every function $s\in \ind(g^*;\{f\},0.05)$ that satisfies the guarantee \eqref{eq:intro-main-1} of \Cref{thm:intro-main-1}, even when $\Phi$ only contains a single function $\varphi$ and there is only a single, extremely simple distinguisher $f$ (the identity function).

There is no contradiction between the $\exp(\Omega(L))$ lower bound (\Cref{thm:lb-exp}) and the $\poly(L)$ upper bound (\Cref{thm:intro-main-1}) on the circuit complexity of $s$ because the upper bound assumes that the convex function $\varphi$ has low complexity (specifically, its subgradient can be approximated by a circuit of size $T_\Phi$). Thus, \Cref{thm:lb-exp} shows that some assumptions on the complexity of $\varphi$ are necessary for \Cref{thm:intro-main-1} to hold, even when $\Phi$ only contains a single convex function $\varphi$.

Our proof of \Cref{thm:lb-exp} uses coding-theoretic ideas (specifically, the existence of a sufficiently large \emph{design}) as well as a probabilistic counting argument. See \Cref{sec:lb} for the formal proof.

As a corollary of \Cref{thm:lb-exp}, we give an exponential lower bound on the circuit complexity for achieving multiaccuracy (i.e.\ indistinguishability) plus weight-restricted calibration, even for a single, extremely simple distinguisher $f$ (the identity function) and a single weight function $\nabla \varphi:\Delta_L\to [-1,1]^L$. Consequently, this lower bound also holds for stronger notions such as calibrated multiaccuracy and multicalibration. See \Cref{sec:proof-idea} for a more detailed discussion about these notions from the algorithmic fairness literature (multiaccuracy, weight-restricted calibration, calibrated multiaccuracy, and multicalibration).

\begin{corollary}
\label{cor:lb}
    For every sufficiently large positive integer $n$, choosing $L = n$, the same distribution $\mu$, function $g^*$, distinguisher $f$ and convex function $\varphi$ from \Cref{thm:lb-exp} have the following property. Let $s:\zo^n\to \Delta_L$ be an arbitrary function satisfying the following two conditions:
    \begin{align*}
        |\E_{x\sim\mu}\langle s(x) - g^*(x), f(x)\rangle| & \le 0.05, \tag{\textbf{Indistinguishability, a.k.a.\ multiaccuracy}}\\
        \E_{x\sim\mu}\langle s(x) - g^*(x),\nabla \varphi (s(x)) \rangle & \le 0.05. \tag{\textbf{Weight-restricted calibration}}
    \end{align*}
Then $s$ must have circuit complexity $\exp(\Omega(n)) = \exp(\Omega(L))$.
\end{corollary}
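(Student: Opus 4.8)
The plan is to deduce Corollary~\ref{cor:lb} directly from Theorem~\ref{thm:lb-exp} by checking that its two displayed hypotheses on $s$ imply the two hypotheses on $s$ required by Theorem~\ref{thm:lb-exp}, namely $s \in \ind(g^*;\{f\},0.05)$ and the entropy bound \eqref{eq:larger-entropy}; once both are verified, the claimed $\exp(\Omega(n)) = \exp(\Omega(L))$ circuit lower bound is exactly the conclusion of Theorem~\ref{thm:lb-exp}.

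First I would observe that the \textbf{Indistinguishability} (a.k.a.\ multiaccuracy) hypothesis $|\E_{x\sim\mu}\langle s(x) - g^*(x), f(x)\rangle| \le 0.05$ is nothing but the specialization of Definition~\ref{def:indistinguishability} to the singleton distinguisher class $\{f\}$, i.e.\ it says precisely that $s \in \ind(g^*;\{f\},0.05)$. So the first hypothesis of Theorem~\ref{thm:lb-exp} is free.

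Next I would derive \eqref{eq:larger-entropy} from the \textbf{Weight-restricted calibration} hypothesis together with the nonnegativity of Bregman divergences. Since $\varphi$ is convex and $\nabla \varphi(v)$ is a fixed subgradient of $\varphi$ at $v$ --- the same fixed choice used to define $\D_\varphi$ in Definition~\ref{def:bregman} and hence in Theorem~\ref{thm:lb-exp} --- applying the subgradient inequality pointwise at each $x$ and averaging over $x \sim \mu$ gives $\D_\varphi(g^*\|s) \ge 0$. Unfolding Definition~\ref{def:bregman}, this is the statement
\begin{equation*}
\H_\varphi(s) - \H_\varphi(g^*) \;=\; \D_\varphi(g^*\|s) + \E_{x\sim\mu}\langle g^*(x) - s(x), \nabla \varphi(s(x))\rangle \;\ge\; -\,\E_{x\sim\mu}\langle s(x) - g^*(x), \nabla \varphi(s(x))\rangle \;\ge\; -0.05,
\end{equation*}
where the last inequality is the weight-restricted calibration hypothesis. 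Rearranging gives $\H_\varphi(s) \ge \H_\varphi(g^*) - 0.05$, which is exactly \eqref{eq:larger-entropy}.

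Having verified both hypotheses, I would simply invoke Theorem~\ref{thm:lb-exp} to conclude. I do not expect any real obstacle: the corollary is a reformulation of Theorem~\ref{thm:lb-exp} in the vocabulary of the algorithmic-fairness notions (multiaccuracy and weight-restricted calibration), and the only point that warrants a sentence of care is that the subgradient $\nabla\varphi$ appearing in the weight-restricted calibration constraint must be the very same fixed subgradient used in the definition of $\D_\varphi$, so that $\D_\varphi(g^*\|s)\ge 0$ holds and the translation above goes through verbatim.
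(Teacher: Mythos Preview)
Your proposal is correct and follows essentially the same approach as the paper: the paper also reduces to Theorem~\ref{thm:lb-exp} by noting that multiaccuracy is literally $s\in\ind(g^*;\{f\},0.05)$ and that weight-restricted calibration, via the identity \eqref{eq:gap-calibration} (which you unfold directly from Definition~\ref{def:bregman}) together with $\D_\varphi(g^*\|s)\ge 0$, yields \eqref{eq:larger-entropy}.
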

\subsection{Proof Idea: Enhanced Regularity/Leakage-Simulation Lemma with Weight-Restricted Calibration}
\label{sec:proof-idea}

We now give a high-level description of our proof idea for the first part of \Cref{thm:intro-main-1}, i.e.\ the forward direction \eqref{eq:intro-main-1}. Our proof of the reverse direction \eqref{eq:converse} follows the same ideas as in prior work \cite{characterize,zheng}.
We present the formal proof in \Cref{sec:proof-main-non-unif}.

Our idea is simple in hindsight. As mentioned earlier, a key difference between our guarantee \eqref{eq:intro-main-1} and previous ones (e.g.\ \eqref{eq:intro-diff-1}) is that the same function $s$ appears on both sides of the inequality, playing the roles of the simulator and the approximator simultaneously. This makes it natural to calculate the difference between the two sides. Specifically, using the definitions of $\H_\varphi$ and $\D_\varphi$ in \Cref{def:entropy,def:bregman}, we obtain the following identity, which is the central technical insight of our work:

\begin{equation}
\label{eq:gap-calibration}
\underbrace{\H_\varphi(s) - \H_\varphi(g^*)}_{\text{pseudoentropy gap}}
- \underbrace{\D_\varphi(g^*\|s)}_{\text{hardness of approximating}}
= \langle g^* - s,  \nabla \varphi \circ s \rangle.
\end{equation}
In the right-hand side of \eqref{eq:gap-calibration} and for the rest of the paper, whenever a distribution $\mu$ on $\zo^n$ is clear from context, we define the inner product of any two functions $g,h:\zo^n\to \R^L$ with respect to $\mu$ as
\[
\langle g,h\rangle:= \E_{x\sim\mu}\langle g(x),h(x) \rangle.
\]

By \eqref{eq:gap-calibration}, our goal \eqref{eq:intro-main-1} simplifies to constructing $s$ satisfying
\begin{equation}
\label{eq:proof-idea-1}
\langle s - g^*,\nabla \varphi\circ s \rangle \le \varepsilon \quad \text{for every $\varphi\in \Phi$},
\end{equation}
such that $s$ is efficiently computable and indistinguishable from $g^*$.
In \Cref{thm:intro-main-1}, we assume that $\nabla \varphi$ is bounded in $[-1,1]^L$ and can be approximated within $\ell_\infty$ error $\varepsilon/4$ by circuits of size at most $T_\Phi$. Thus for every $\varphi\in \Phi$, there exists $r_\varphi:\Delta_L\to [-1,1]^L$ of circuit complexity at most $T_\Phi$ such that $\|r_\varphi - \nabla\varphi\|_\infty \le \varepsilon/4$. This gives a sufficient condition for \eqref{eq:proof-idea-1}:
\begin{equation}
\label{eq:proof-idea-2}
\langle s - g^*, r_\varphi\circ s \rangle \le \varepsilon/2 \quad \text{for every $\varphi\in \Phi$}. \quad \textbf{(Weight-restricted calibration)}
\end{equation}
As we will explain shortly, the condition \eqref{eq:proof-idea-2} is termed \emph{weight-restricted calibration} in the recent literature on algorithmic fairness.
On the other hand, the requirement that $s$ is indistinguishable from $g^*$ (\Cref{def:indistinguishability}) amounts to
\begin{equation}
\label{eq:proof-idea-3}
|\langle s - g^*, f\rangle| \le \varepsilon \quad \text{for every $f\in \cF$}, \quad \textbf{(Indistinguishability, a.k.a.\ multiaccuracy)}
\end{equation}
where $\cF$ denotes the class of functions $f:\zo^n\to [-1,1]^L$ with circuit complexity at most $T$. Again, as we will explain below, the indistinguishability condition \eqref{eq:proof-idea-3} is termed \emph{multiaccuracy} in the recent literature on algorithmic fairness.

The first half of \Cref{thm:intro-main-1} requires us to find a function $s\in \ind(g^*;T,\varepsilon)\cap \Time(T')$ satisfying \eqref{eq:intro-main-1}. Based on the discussion above, the requirement $s\in \ind(g^*;T,\varepsilon)$ is equivalent to \eqref{eq:proof-idea-3}, and a sufficient condition for \eqref{eq:intro-main-1} is given by \eqref{eq:proof-idea-2}. Thus our goal becomes finding $s\in \Time(T')$ satisfying both \eqref{eq:proof-idea-2} and \eqref{eq:proof-idea-3}.

The structure of \eqref{eq:proof-idea-2} and \eqref{eq:proof-idea-3} points us to previous ideas about \emph{regularity/leakage-simulation lemmas}.
Specifically, the Complexity-Theoretic Regularity Lemma of Trevisan, Tulsiani, and Vadhan \cite{ttv} and its extension to $L > 2$ in the Leakage Simulation Lemma of Jetchev and Pietrzak \cite{leakage}, allow us to construct $s\in \Time (T')$ that satisfies \eqref{eq:proof-idea-3} alone. This result can be proved using a boosting algorithm. We start with a simple $s:\zo^n\to \Delta_L$ (e.g.\ $s(x)$ is the uniform distribution over $[L]$ for every $x\in \zo^n$). Whenever $s$ does not satisfy \eqref{eq:proof-idea-3}, there exists a (low-complexity) $f\in \cF$ that witnesses this violation, which we use to update $s$ to reduce a nonnegative potential function by a significant amount. By the nonnegativity of the potential function, there can only be a bounded number of updates, after which $s$ has to satisfy \eqref{eq:proof-idea-3}.
Since each $f\in \cF$ has circuit complexity at most $T$, each update increases the complexity of $s$ only by a bounded amount, and since the number of total updates is also bounded, we can ensure that the final $s$ belongs to $\Time(T')$.
Following the work of Vadhan and Zheng \cite{unif-minmax} which proves a uniform version of the regularity/leakage-simulation lemma, we apply the classic Multiplicative Weights algorithm (a special case of Mirror Descent) to extend the boosting idea from \cite{ttv} to general $L$ (see \Cref{sec:mirror,sec:mw}).

A limitation of these regularity/leakage-simulation lemmas is that they only establish \eqref{eq:proof-idea-3}, but we need \eqref{eq:proof-idea-2} and \eqref{eq:proof-idea-3} simultaneously.
The two conditions \eqref{eq:proof-idea-2} and \eqref{eq:proof-idea-3} look very similar, with both $r_\varphi$ in \eqref{eq:proof-idea-2} and $f$ in \eqref{eq:proof-idea-3} having bounded circuit complexity.
The difference is that \eqref{eq:proof-idea-2} is \emph{circular}: the ``distinguishing function'' $r_\varphi\circ s$ that $s$ tries to ``fool'' depends on the function $s$ itself.

A key insight from the recent algorithmic fairness literature about \emph{multicalibration} \cite{hkrr,low-degree-cal} is that the circularity in \eqref{eq:proof-idea-2} does not incur any new major technical challenge: the same boosting idea from the regularity/leakage-simulation lemma allows us to construct $s\in \Time(T')$ satisfying both \eqref{eq:proof-idea-2} and \eqref{eq:proof-idea-3}, which is exactly what we need to establish \Cref{thm:intro-main-1}.
We view condition \eqref{eq:proof-idea-2} as an instantiation of \emph{weight-restricted calibration}, as introduced by Gopalan, Kim, Singhal, and Zhao \cite{low-degree-cal}.\footnote{They called it \emph{weighted calibration} (see also \cite{GHR}). We modify the terminology to more clearly convey that it is a weaker condition than calibration.}
This is a relaxed version of the following condition called \emph{calibration}:
\begin{equation}
\label{eq:cal}
\E_{x\sim\mu}[s(x) - g^*(x)|s(x)] \approx \mathbf 0, \quad \textbf{(Calibration)}
\end{equation}
where $\mathbf 0\in \R^L$ is the zero vector.
Indeed, calibration \eqref{eq:cal} is equivalent to
\begin{equation}
\label{eq:cal-equiv}
\langle s - g^*, r\circ s\rangle \approx 0 \quad \text{for \emph{all} $r:\Delta_L\to [-1,1]^L$}, \quad \textbf{(Equivalent def.\ of calibration)}
\end{equation}
where each $r:\Delta_L\to [-1,1]^L$ is termed a \emph{weight function} \cite{low-degree-cal}. The weight-restricted calibration condition \eqref{eq:proof-idea-2} is a relaxation of (full) calibration \eqref{eq:cal-equiv} in that it only considers a restricted family of weight functions $\{r_\varphi\}_{\varphi\in \Phi}$ instead of all weight functions (and it replaces ``$\approx 0$'' with ``$\le \varepsilon$'').

As pointed out by Dwork, Lee, Lin, and Tankala \cite{dllt},
the indistinguishability condition \eqref{eq:proof-idea-3} (previously studied as in the Complexity-Theoretic Regularity Lemma \cite{ttv})
was rediscovered in the algorithmic fairness literature under the term \emph{multiaccuracy} \cite{hkrr,ma} and it is natural to ask whether stronger notions from the algorithmic fairness literature have implications in complexity theory and other domains where similar regularity lemmas are used (such as graph theory and cryptography).
This question has been explored by many works in recent research \cite{generative-huge,dllt,cdv}.
Our work (especially the observation \eqref{eq:gap-calibration}) demonstrates the power of adding weight-restricted calibration \eqref{eq:proof-idea-2} to the standard indistinguishability/multiaccuracy condition \eqref{eq:proof-idea-3} guaranteed by the regularity/leakage-simulation lemma for generalizing and strengthening pseudoentropy characterizations.

The combination of multiaccuracy \eqref{eq:proof-idea-3} and (full) calibration \eqref{eq:cal} (or \eqref{eq:cal-equiv}) is termed \emph{calibrated multiaccuracy} in \cite{loss-oi}. That work uses calibrated multiaccuracy to achieve omni\-prediction more efficiently than the \emph{multicalibration}-based approach of \cite{omni} (we discuss multi\-calibration in the next paragraph). The notion of multiaccuracy \eqref{eq:proof-idea-3} plus weight-restricted calibration \eqref{eq:proof-idea-2} (as we use in our work) is weaker than calibrated multiaccuracy and suffices to ensure omniprediction in many cases, leading to further improved sample complexity and computational efficiency even when $L = 2$ \cite{omni-gap,optimal-omni}.
Independent work \cite{global-cal-ma,Tankala25} demonstrates the power of (variants of) calibrated multiaccuracy for other cryptographic, complexity-theoretic, and learning-theoretic applications (as we discussed in \Cref{sec:contributions}).

As mentioned earlier, the work of Casacuberta, Dwork, and Vadhan \cite{cdv} uses \emph{multicalibration} to establish a characterization for pseudo-min-entropy and pseudo-Shannon-entropy. Introduced by \cite{hkrr}, multicalibration is the following much stronger condition than \eqref{eq:proof-idea-2}, \eqref{eq:proof-idea-3} and \eqref{eq:cal} combined:
\[
\E_{x\sim \mu}[\langle s(x) - g^*(x), f(x)\rangle|s(x)] \approx 0 \quad \text{for every $f\in \cF$}. \quad \textbf{(Multicalibration)}
\]
A contribution of our work is to show that multicalibration is stronger than necessary for pseudoentropy characterizations and leads to the unnecessary doubly exponential dependence on $\ell$ in \Cref{thm:intro-mc-rv-informal} (our \Cref{cor:lb} shows that this doubly exponential dependence cannot be avoided for multicalibration and even for the weaker notion of calibrated multiaccuracy).
Based on our observation \eqref{eq:gap-calibration}, the much weaker condition of multiaccuracy \eqref{eq:proof-idea-3} plus weight-restricted calibration \eqref{eq:proof-idea-2} is sufficient even for more general entropy notions $\H_\varphi$, giving the improved polynomial dependence on $L = 2^\ell$ in \Cref{thm:intro-main-1}.

\subsection{Enhanced Regularity/Leakage-Simulation Lemma in Functional Notation}

We obtain our combination of multiaccuracy and weight-restricted calibration by considering a more general form of distinguisher
$\tau:\zo^n\times\Delta_L\to [-1,1]^L$ that takes {\em both} an input $x\in \zo^n$ and a label distribution $v\in \Delta_L$.
Given such a $\tau$ and a function $s:\zo^n\to \Delta_L$, we write $\tau_s:\zo^n\to [-1,1]^L$ for the function $\tau_s(x):=\tau(x,s(x))$.


\begin{theorem}[Enhanced Regularity/Leakage-Simulation Lemma for General Distinguishers]
\label{thm:intro-reg-non-unif}
    Let $\gendistfamily_T$ consist of all distinguishers $\tau:\zo^n\times \Delta_L\to [-1,1]^L$ with circuit complexity at most $T$. For every distribution $\mu$ on $\zo^n$, every $g^*:\zo^n\to \Delta_L$, and every $\varepsilon\in (0,1/2)$, there exists a simulator $s:\zo^n\to \Delta_L$ such that
    \begin{enumerate}
        \item $s$ has circuit complexity
        \[
        O\left(\frac{T\log L}{\varepsilon^2} + L\cdot \poly(1/\varepsilon,\log L)\right);
        \]
        \item $s$ is indistinguishable from $g^*$ w.r.t.\ $\gendistfamily_T$:
        \[
        \langle s - g^*,\tau_s\rangle \le \varepsilon\quad \text{for every $\tau\in \gendistfamily_T$}.
        \]
    \end{enumerate}
\end{theorem}

Notice that if we restrict to a class of functions $\tau(x,v)=f(x)$ that depend only on their first argument, we get exactly the multiaccuracy condition \eqref{eq:proof-idea-3}. On the other hand, if we restrict to a class of functions $\tau(x,v)=r(v)$ that depend only on their second argument, we get exactly the weight-restricted calibration condition \eqref{eq:proof-idea-2}.  Thus, the indistinguishability condition in \Cref{thm:intro-reg-non-unif} captures both conditions and more.

Translating \Cref{thm:intro-reg-non-unif} to random-variable notation gives precisely \Cref{thm:intro-reg-non-unif-rv} stated earlier.

\subsection{Examples beyond Shannon and Min Entropy}
\label{sec:examples}
Our pseudoentropy characterization (\Cref{thm:intro-main-1}) holds for a general family $\Phi$ of entropy notions. Here we give two examples of entropy notions beyond Shannon and min entropy.

Consider the following choice of $\varphi$ with its corresponding Bregman divergence:
\begin{equation}
\label{eq:example}
    \varphi(v) = \|v\|_2^2 = \sum_{i\in [L]}v_i^2 , \quad \D_\varphi(u\|v) = \|u - v\|_2^2 = \|u\|_2^2+\|v\|_2^2 - 2\langle u,v\rangle.
\end{equation}
To interpret the entropy and divergence notions from this choice of $\varphi$, let us consider a pair of jointly distributed random variables $(X,Y)\in \zo^n\times [L]$ with $X$ drawn from a distribution $\mu$ over $\zo^n$, and define $g^*:\zo^n\to \Delta_L$ so that $g^*(x)$ is the conditional distribution of $Y$ given $X = x$. In this case, the entropy notion $\H_\varphi$ is the negative conditional collision probability $\CP(g^*(X)|X)$ used, for example, in \cite{ChungMiVa} for the analysis of hash functions:
\[
\H_\varphi(g^*) = -\E_{x\sim \mu}[\varphi(g^*(x))] = -\CP(g^*(X)|X).
\]
Here,
\[
\CP(g^*(X)|X) := \Pr[Y' = Y],
\]
where $(X,Y')$ is identically distributed as $(X,Y)$, and $Y'$ is conditionally independent from $Y$ given $X$.

By \eqref{eq:example}, for a fixed $g^*:\zo^n \to \Delta_L$, minimizing the Bregman divergence $\D_\varphi(g^*\|g)$ over $g:\zo^n\to \Delta_L$ corresponds to maximizing
\begin{equation}
\label{eq:example-2}
2\langle g^*,g\rangle - \E_{x\sim \mu}\|g(x)\|_2^2 = 2\Pr[Y = \hat Y] - \CP(g(X)|X),
\end{equation}
where the first term $\Pr[Y = \hat Y]$ is the prediction accuracy of $g$ w.r.t.\ $g^*$ (the same as in \Cref{thm:intro-min-rv}): here $X\sim \mu$, and conditioned on $X$, we independently draw $Y\sim g^*(X)$ and $\hat Y\sim g(X)$.
The second term on the right-hand side of \eqref{eq:example-2} can be viewed as a regularizer that encourages $g$ to have lower collision probability (and thus higher entropy $\H_\varphi$).
Our characterization \Cref{thm:intro-main-1} implies that the best that an efficient $g$ can do for this regularized optimization task is equivalent to the conditional pseudo-collision-probability of $g^*(X)$ given $X$ (i.e., the pseudo / computational analogue of $\CP(g^*(X)|X)$).

Similarly, we can take
\[
\varphi(v) = \|v\|_2 = \sqrt{\sum_i v_i^2},  \quad \D_\varphi(u\|v) = \frac{\|u\|_2\|v\|_2 - \langle u,v\rangle}{\|v\|_2}
\]
to get an equivalence between maximizing the following objective over efficient $g$
\[
\Pr[Y=\hat Y]/\CP^{1/2}(g(X)|X),
\]
and the conditional pseudo-square-root-collision-probability of $g^*(X)$ given $X$ (i.e., the pseudo / computational analogue of $\CP^{1/2}(g^*(X)|X)$ used, for example, in \cite{HaitnerNgOnReVa} to construct statistically hiding commitments
and statistical zero-knowledge arguments).

\section{Preliminaries}

For simplicity, we consider general and abstract vectors $g,h,f\in \R^L$ in this section. Later we will instantiate the vector $g$ here as the function value $g(x)\in \Delta_L\subseteq \R^L$ (as in \Cref{def:indistinguishability}) assigned to a specific input $x\in \zo^n$, and similarly instantiate the vector $f$ here as the function value $f(x)\in [-1,1]^L$ of a distinguisher $f:\zo^n\to [-1,1]^L$ on a specific input $x\in \zo^n$.

\subsection{Convex Analysis Basics}
\label{sec:conjugate}
Let $\cG\subseteq \R^L$ be a non-empty bounded convex set. Let $\varphi:\cG\to\R$ be a convex function on $\cG$. We say $h\in \R^L$ is a \emph{subgradient} of $\varphi$ at $g\in \cG$ if
\[
\varphi(g') - \varphi(g) \ge \langle g' - g, h \rangle \quad \text{for every $g'\in \cG$},
\]
or equivalently,
\begin{equation}
\label{eq:subgradient}
\langle g,h\rangle - \varphi(g) = \max_{g'\in \cG}(\langle g',h \rangle - \varphi(g')).
\end{equation}
The \emph{convex conjugate} of $\varphi$, denoted by $\psi:\R^L \to \R$, is defined as follows:
\begin{equation}
\label{eq:conjugate}
\psi(h) := \sup_{g\in \cG}(\langle g,h\rangle - \varphi(g)) \quad \text{for every $h\in \R^L$}.
\end{equation}
By the boundedness of $\cG$ and the convexity of $\varphi$, the supremum above is always well-defined (i.e., it is neither $+\infty$ nor $-\infty$).
It is easy to verify that $\psi$ is convex and closed.\footnote{
As standard terminology in convex analysis, we say $\psi$ is \emph{closed} if its epigraph $\{(h,y)\in \R^L\times \R:y \ge \psi(h)\}$ is closed (see e.g.\ \cite{conv-ana}).
} The \emph{Fenchel-Young divergence} $\Gamma_{\varphi,\psi}:\cG\times \R^L\to \R$ is defined as follows:
\[
\Gamma_{\varphi,\psi}(g,h) := \varphi(g) + \psi(h) - \langle g, h\rangle \ge 0 \quad \text{for every $g\in \cG$ and $h\in \R^L$}.
\]
The nonnegativity of the divergence follows immediately from the definition of $\psi$ in \eqref{eq:conjugate}. We will often omit the subscripts and simply write $\Gamma$ in place of $\Gamma_{\varphi,\psi}$.  By the characterization of subgradients in \eqref{eq:subgradient}, for any $g\in \cG$ and $h\in \R^L$,
\begin{equation}
\label{eq:conjugate-implications}
\text{$h$ is a subgradient of $\varphi$ at $g$} \quad \Longleftrightarrow \quad \Gamma (g,h) = 0 \quad \Longrightarrow \quad \text{$g$ is a subgradient of $\psi$ at $h$}.
\end{equation}
Note that the converse of the second implication is not generally true when $\varphi$ is discontinuous at the boundary of $\cG$. We say a pair $(g,h)\in \cG\times \R^L$ is a \emph{conjugate pair} w.r.t.\ $(\varphi,\psi)$ if $
\Gamma_{\varphi,\psi}(g,h) = 0.
$

The subgradient of $\varphi$ at any $g\in \cG$ may not be unique (and it may not exist when $g$ is at the boundary of $\cG$). When we have a predetermined choice $\nabla \varphi(g)$ among the subgradients, we define the \emph{Bregman divergence} $\D_\varphi(g'\|g)$ as follows:
\begin{align*}
\D_\varphi (g'\|g) := \Gamma(g', \nabla \varphi(g)) & = \varphi(g') + \psi(\nabla \varphi(g)) - \langle g', \nabla \varphi(g)\rangle\\
& = \varphi(g') - \varphi(g) - \langle g' - g, \nabla \varphi(g) \rangle,
\end{align*}
where the last equation uses the fact that $(g,\nabla \varphi(g))$ is a conjugate pair (by the equivalence in \eqref{eq:conjugate-implications}):
\[
0 = \Gamma(g,\nabla \varphi(g)) = \varphi(g) + \psi(\nabla \varphi(g)) - \langle g, \nabla \varphi(g)\rangle.
\]

If we additionally assume that $\varphi$ is closed (which holds, for example, when $\varphi$ is continuous on a closed domain $\cG$), the supremum in \eqref{eq:conjugate} can always be attained by some $g\in \cG$.
Moreover, it is a standard result that for a closed convex function $\varphi$, taking convex conjugate twice gives back the same function. That is, the closedness assumption of $\varphi$ ensures that $\varphi$ and $\psi$ are convex conjugates of each other:
\begin{align}
\psi(h) & = \max_{g\in \cG}(\langle g,h\rangle - \varphi(g)) \quad \text{for every $h\in \R^L$},\label{eq:conjugate-1}\\
\varphi(g) & = \sup_{h\in \R^L}(\langle g,h\rangle - \psi(h)) \quad \text{for every $g\in \cG$}.\label{eq:conjugate-2}
\end{align}
In this case, by the characterization of subgradients in \eqref{eq:subgradient}, for any $g\in \cG$ and $h\in \R^L$, the following three statements are equivalent:
\begin{enumerate}
    \item $\Gamma(g,h) = 0$, i.e., $(g,h)$ is a conjugate pair;
    \item $h$ is a subgradient of $\varphi$ at $g$;
    \item  $g$ is a subgradient of $\psi$ at $h$.
\end{enumerate}
\subsection{Mirror Descent}
\label{sec:mirror}
As in the previous section, we consider a non-empty bounded convex set $\cG\subseteq \R^L$ and a convex function $\varphi$ defined on it, with $\psi$ being the convex conjugate of $\varphi$.
Now we additionally assume that $\psi$ is smooth:
\begin{definition}[Smoothness]
    For $\lambda \ge 0$ and a norm $\|\cdot \|$ on $\R^L$, we say a convex function $\psi:\R^L \to \R$ is $\lambda$-\emph{smooth} w.r.t.\ $\|\cdot\|$ if for every $h_0,h\in \R^L$, letting $g_0$ be a subgradient of $\psi$ at $h_0$, we have
\begin{equation}
\label{eq:smooth}
\psi(h) \le \psi(h_0) + \langle g_0, h - h_0 \rangle + \frac \lambda 2 \|h - h_0\|^2.
\end{equation}
\end{definition}

If a convex function $\psi:\R^L\to \R$ is $\lambda$-smooth, it is necessarily differentiable with
the \emph{subgradient} $g_0$ being the \emph{gradient} at $h_0$, implying the uniqueness of the subgradient. We can thus uniquely define $\nabla \psi(h_0) := g_0$.

We additionally assume that $\varphi$ is a closed function on the bounded domain $\cG$ (and recall that $\psi$ is the convex conjugate of $\varphi$). This ensures that $\nabla \psi(h)\in \cG$ for every $h\in \R^L$ because the maximizer $g\in \cG$ in \eqref{eq:conjugate-1} achieves $\Gamma(g,h) = 0$, and thus $\nabla \psi(h) = g\in \cG$.

The Mirror Descent algorithm runs in iterations $k = 1,\ldots,K$, where each iteration $k$ updates a conjugate pair $(g_{k},h_{k})\in \cG\times \R^L$ to a new conjugate pair $(g_{k+1},h_{k+1})$ given a signal vector $f\in \R^L$ as follows:
\begin{align}
h_{k+1} & \gets h_{k} - f;\notag\\
g_{k+1} & \gets \nabla \psi(h_{k + 1}).\label{eq:mirror-descent}
\end{align}

The following standard lemma is central to analyzing mirror descent. It allows us to track the progress of the update from $(g_k,h_k)$ to $(g_{k + 1},h_{k + 1})$ using the nonnegative quantity $\Gamma(g^*,h_k)$ as a potential function for an arbitrary $g^*\in \cG$. The lemma shows that the potential function $\Gamma(g^*,h_k)$ must significantly decrease if $\langle g_{k} - g^*, f\rangle$ is large relative to $\|f\|^2$.

\begin{lemma}[Mirror Descent]
\label{lm:mirror-descent}
Let $\varphi:\cG\to \R$ be a closed convex function on a bounded convex set $\cG\subseteq \R^L$, and let $\psi:\R^L\to \R$ be its convex conjugate.
Assume that $\psi$ is $\lambda$-smooth w.r.t.\ norm $\|\cdot \|$ for some $\lambda \ge 0$.
Assume conjugate pairs $(g_k,h_k), (g_{k + 1}, h_{k + 1})\in \cG\times\R^L$ satisfy \eqref{eq:mirror-descent} for some $f\in \R^L$.
Then for every $g^*\in \cG$,
\begin{equation}
\label{eq:mirror}
\Gamma(g^*,h_k) - \Gamma(g^*,h_{k + 1}) \ge \langle g_{k} - g^*, f\rangle  - \frac{\lambda}{2} \|f\|^2.
\end{equation}
\end{lemma}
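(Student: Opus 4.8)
The plan is to verify the inequality by unwinding the definition of the Fenchel-Young divergence $\Gamma$ and applying the smoothness assumption on $\psi$ exactly once. Writing out the potential difference,
\[
\Gamma(g^*,h_k) - \Gamma(g^*,h_{k+1}) = \big(\varphi(g^*) + \psi(h_k) - \langle g^*,h_k\rangle\big) - \big(\varphi(g^*) + \psi(h_{k+1}) - \langle g^*,h_{k+1}\rangle\big),
\]
the $\varphi(g^*)$ terms cancel, leaving $\psi(h_k) - \psi(h_{k+1}) + \langle g^*, h_{k+1} - h_k\rangle$. Since $h_{k+1} = h_k - f$, the last inner product is $-\langle g^*, f\rangle$, so it remains to lower bound $\psi(h_k) - \psi(h_{k+1})$.

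To bound $\psi(h_k) - \psi(h_{k+1})$, I would apply the $\lambda$-smoothness inequality \eqref{eq:smooth} with $h_0 = h_{k+1}$ and $h = h_k$. The gradient $\nabla\psi(h_{k+1})$ equals $g_{k+1}$ by the update rule \eqref{eq:mirror-descent}; however, I actually want the bound in terms of $g_k = \nabla\psi(h_k)$, so instead I apply \eqref{eq:smooth} with $h_0 = h_k$ (subgradient $g_k$) and $h = h_{k+1}$, obtaining
\[
\psi(h_{k+1}) \le \psi(h_k) + \langle g_k, h_{k+1} - h_k\rangle + \frac{\lambda}{2}\|h_{k+1} - h_k\|^2 = \psi(h_k) - \langle g_k, f\rangle + \frac{\lambda}{2}\|f\|^2,
\]
using $h_{k+1} - h_k = -f$. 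Rearranging gives $\psi(h_k) - \psi(h_{k+1}) \ge \langle g_k, f\rangle - \frac{\lambda}{2}\|f\|^2$.

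Combining the two computations,
\[
\Gamma(g^*,h_k) - \Gamma(g^*,h_{k+1}) = \psi(h_k) - \psi(h_{k+1}) - \langle g^*, f\rangle \ge \langle g_k, f\rangle - \frac{\lambda}{2}\|f\|^2 - \langle g^*, f\rangle = \langle g_k - g^*, f\rangle - \frac{\lambda}{2}\|f\|^2,
\]
which is exactly \eqref{eq:mirror}. There is essentially no serious obstacle here: the only point requiring a little care is making sure the conjugate-pair hypothesis is actually used correctly — specifically, that $(g_k,h_k)$ and $(g_{k+1},h_{k+1})$ being conjugate pairs means $g_k = \nabla\psi(h_k)$ and $g_{k+1} = \nabla\psi(h_{k+1})$ (which holds because smoothness of $\psi$ forces the subgradient to be unique and equal to the gradient, and $\Gamma(g_k,h_k)=0$ identifies $g_k$ as that subgradient), so that the smoothness inequality can be invoked at $h_k$ with the correct gradient. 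Everything else is routine bookkeeping with the definition of $\Gamma$ and the substitution $h_{k+1} = h_k - f$.
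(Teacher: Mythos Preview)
Your proof is correct and follows essentially the same approach as the paper: expand $\Gamma(g^*,h_k) - \Gamma(g^*,h_{k+1})$ using the definition of the Fenchel--Young divergence, cancel $\varphi(g^*)$, and apply the $\lambda$-smoothness inequality \eqref{eq:smooth} at the base point $h_k$ with subgradient $g_k = \nabla\psi(h_k)$ (justified by the conjugate-pair hypothesis) to bound $\psi(h_{k+1})$. The paper's proof is slightly more streamlined in that it first reduces the target inequality to $\psi(h_k) - \psi(h_{k+1}) \ge \langle g_k, h_k - h_{k+1}\rangle - \frac{\lambda}{2}\|h_k - h_{k+1}\|^2$ and then observes this is exactly the smoothness bound, but the content is identical.
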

\begin{proof}
We start by simplifying \eqref{eq:mirror} using the following equations:
\begin{align*}
f & = h_k - h_{k + 1},\\
\Gamma (g^*,h_k) & = \varphi(g^*) + \psi(h_k) - \langle g^*, h_k \rangle,\\
\Gamma (g^*,h_{k + 1}) & = \varphi(g^*) + \psi(h_{k + 1}) - \langle g^*, h_{k + 1} \rangle.
\end{align*}
Our goal \eqref{eq:mirror} simplifies to
\[
\psi(h_k) - \psi(h_{k + 1}) \ge \langle g_k, h_k - h_{k + 1}\rangle  - \frac \lambda 2 \|h_k - h_{k + 1}\|^2.
\]
Since $(g_k, h_k)$ is a conjugate pair, we have $g_k = \nabla \psi(h_k)$. The inequality above then follows immediately from \eqref{eq:smooth} by setting $(g_0,h_0) = (g_k,h_k), h = h_{k + 1}$.
\end{proof}

\subsection{The Multiplicative Weights Algorithm}
\label{sec:mw}
The Multiplicative Weights algorithm is a special case of Mirror Descent when we choose $\cG = \Delta_L$ and choose $\varphi$ to be the negative Shannon entropy
\begin{equation}
\label{eq:mw-varphi}
\varphi(g) := \sum_{y = 1}^L g_y \ln g_y \le 0 \quad \text{for every $g= (g_1,\ldots,g_L)\in \Delta_L$}.
\end{equation}
Its convex conjugate $\psi$ is the following function:
\begin{equation}
\label{eq:mw-psi}
\psi(h) = \ln\left(\sum_{y = 1}^L e^{h_y}\right) \quad \text{for every $h = (h_1,\ldots,h_L)\in \R^L$}.
\end{equation}
The gradient of $\psi$ is the softmax function:
\[
\nabla \psi(h) = \softmax(h):= \frac{1}{\sum_{y = 1}^L e^{h_y}}(e^{h_1},\ldots,e^{h_L})\in \Delta_L.
\]
Each $e^{h_y}$ is often interpreted as the (unnormalized) \emph{weight} on label $y\in [L]$, and the softmax function normalizes these weights to a probability distribution over $[L]$. Thus the additive update on $h_y$ in \eqref{eq:mirror-descent} becomes a multiplicative update on the weight $e^{h_y}$, hence the name multiplicative weights.
\begin{lemma}
\label{lm:mw-smooth}
The function $\psi$ in \eqref{eq:mw-psi} is $1$-smooth in $\|\cdot\|_\infty$.
Moreover, the softmax function is a $1$-Lipschitz multi-variate function from $\|\cdot \|_\infty$ to $\|\cdot \|_1$. That is, for every $h,\hat h\in \R^L$,
\[
\|\softmax(h) - \softmax(\hat h)\|_1 \le \|h - \hat h\|_\infty.
\]
\end{lemma}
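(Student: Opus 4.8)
The plan is to derive both assertions from a single, uniform bound on the Hessian of $\psi$. First I would record the textbook log-sum-exp computation: $\psi(h) = \ln\sum_{y\in[L]}e^{h_y}$ is $C^\infty$, with gradient $\nabla\psi(h) = \softmax(h)$ and Hessian $\nabla^2\psi(h) = \mathrm{diag}(p) - pp^\top$, where $p := \softmax(h)\in\Delta_L$. The key observation is that, reading $p$ as a distribution over $[L]$, for every $v\in\R^L$ we have the identity $v^\top(\mathrm{diag}(p) - pp^\top)v = \mathrm{Var}_{i\sim p}(v_i)$, and coordinatewise $\bigl((\mathrm{diag}(p) - pp^\top)v\bigr)_i = p_i\bigl(v_i - \E_{j\sim p}[v_j]\bigr)$, so that $\|(\mathrm{diag}(p) - pp^\top)v\|_1 = \E_{i\sim p}\bigl|v_i - \E_{j\sim p}[v_j]\bigr|$.

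Next I would bound these two quantities by $\|v\|_\infty^2$ and $\|v\|_\infty$ respectively. Since $|v_i|\le\|v\|_\infty$ for all $i$, we get $\mathrm{Var}_{i\sim p}(v_i)\le\E_{i\sim p}[v_i^2]\le\|v\|_\infty^2$; and by Cauchy--Schwarz $\E_{i\sim p}\bigl|v_i - \E_{j\sim p}[v_j]\bigr|\le\sqrt{\mathrm{Var}_{i\sim p}(v_i)}\le\|v\|_\infty$. Hence, uniformly over $h\in\R^L$, the Hessian satisfies $v^\top\nabla^2\psi(h)\,v\le\|v\|_\infty^2$ and $\|\nabla^2\psi(h)\,v\|_1\le\|v\|_\infty$ for every $v$.

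Both conclusions then follow by one-dimensional integration along the segment from $h_0$ to $h$. For $1$-smoothness, Taylor's theorem with the integral form of the remainder gives, writing $u := h - h_0$,
\[
\psi(h) - \psi(h_0) - \langle\nabla\psi(h_0),u\rangle \;=\; \int_0^1 (1-t)\,u^\top\nabla^2\psi(h_0 + tu)\,u\,\dd t \;\le\; \Bigl(\int_0^1 (1-t)\,\dd t\Bigr)\|u\|_\infty^2 \;=\; \tfrac12\|u\|_\infty^2,
\]
which is exactly the defining inequality \eqref{eq:smooth} with $\lambda = 1$ (the unique subgradient of the differentiable $\psi$ at $h_0$ being $\nabla\psi(h_0)$). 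For the Lipschitz bound, the fundamental theorem of calculus gives $\softmax(h) - \softmax(\hat h) = \int_0^1\nabla^2\psi\bigl(\hat h + t(h-\hat h)\bigr)(h-\hat h)\,\dd t$; taking $\|\cdot\|_1$ norms and using the operator-norm bound above together with the triangle inequality for integrals yields $\|\softmax(h) - \softmax(\hat h)\|_1\le\|h-\hat h\|_\infty$. (Alternatively, since $\psi$ is convex, the Lipschitz statement is the standard reformulation of $1$-smoothness, with $\|\cdot\|_1$ being the dual of $\|\cdot\|_\infty$; I would use whichever derivation reads more cleanly in the final write-up.)

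This lemma is standard and I do not anticipate a real obstacle. The only step deserving a moment's attention is the passage from the $\ell_2$-type variance estimate to the mean-absolute-deviation bound $\E_{i\sim p}\bigl|v_i - \E_{j\sim p}[v_j]\bigr|\le\|v\|_\infty$ needed for the $\|\cdot\|_\infty\to\|\cdot\|_1$ Lipschitz claim, and this is handled by a single application of Cauchy--Schwarz followed by $\mathrm{Var}_{i\sim p}(v_i)\le\E_{i\sim p}[v_i^2]\le\|v\|_\infty^2$.
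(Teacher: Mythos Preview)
Your proof is correct, but the paper takes a different route. The paper argues via convex duality and Pinsker's inequality: from $\D_\varphi(g\|g_0)=\Dkl(g\|g_0)\ge\tfrac12\|g-g_0\|_1^2$ it concludes that the negative Shannon entropy $\varphi$ is $1$-strongly convex in $\|\cdot\|_1$, then invokes a general duality claim (strong convexity of $\varphi$ implies smoothness of its conjugate $\psi$ in the dual norm) to obtain $1$-smoothness of $\psi$ in $\|\cdot\|_\infty$. For the Lipschitz statement the paper sandwiches the Fenchel--Young divergence: for conjugate pairs $(g,h),(\hat g,\hat h)$, strong convexity gives $\Gamma(g,\hat h)\ge\tfrac12\|g-\hat g\|_1^2$ and smoothness gives $\Gamma(g,\hat h)\le\tfrac12\|h-\hat h\|_\infty^2$, whence $\|g-\hat g\|_1\le\|h-\hat h\|_\infty$. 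Your Hessian-and-integration argument is more direct and self-contained---it never mentions Pinsker, Fenchel--Young divergences, or the strong-convexity/smoothness duality---whereas the paper's approach reuses the convex-analysis machinery already set up in the preliminaries and makes explicit the information-theoretic content (Pinsker) behind the bound.
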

We give a proof of \Cref{lm:mw-smooth} using Pinsker's inequality in \Cref{sec:proof-mw-smooth}.
\section{Enhanced Regularity/Leakage-Simulation Lemma}
\label{sec:enhanced-regularity}

As we discuss in \Cref{sec:proof-idea}, the key to proving our main theorems (\Cref{thm:intro-main-1,thm:intro-unif}) is to establish an enhanced regularity/leakage-simulation lemma that allows us to construct a low-complexity function $s:\zo^n\to \Delta_L$ satisfying a unified distinguishing condition that captures both multiaccuracy \eqref{eq:proof-idea-3} and weight-restricted calibration \eqref{eq:proof-idea-2}. We formally state and prove this result in both the nonuniform and uniform settings.
\subsection{Nonuniform Setting}
We consider general distinguishers $\tau:\zo^n\times\Delta_L\to [-1,1]^L$ that take $x\in \zo^n$ as input together with a corresponding label distribution $v\in \Delta_L$.
Given a distinguisher $\tau:\zo^n\times\Delta_L\to [-1,1]^L$ and a function $g:\zo^n\to \Delta_L$, we write $\tau_g:\zo^n\to [-1,1]^L$ for the function $\tau_g(x):=\tau(x,g(x))$.
For a family $\gendistfamily$ of distinguishers, a $\gendistfamily$-\emph{oracle circuit} is a circuit that, in addition to ordinary gates, may use oracle gates labelled by functions $\tau\in\gendistfamily$.
We measure such a circuit by its ordinary circuit size and by the number of $\gendistfamily$-oracle gates.\footnote{We remark that in contrast to the Complexity Theoretic-Regularity Lemma of \cite{ttv}, the simulator $s$ in our Enhanced Regularity Lemma makes {\em adaptive} calls to the $\gendistfamily$ oracles.  In fact, in \cite{ttv} $s$ is of the form $s(x)=C(\tau_1(x),\ldots,\tau_k(x))$, where $\tau_1,\ldots,\tau_k\in \gendistfamily$, $k$ is the number of oracle gates, and $C$ is a small circuit. (Recall that in the ordinary Complexity-Theoretic Regularity Lemma, the distinguishers have the same domain as $g^*$ and $s$.)  This distinction may make the Enhanced Lemma difficult to use in contexts where it is important to preserve parallel complexity, e.g. when want $s$ to be computed by a small constant-depth threshold circuit ($\mathrm{TC}^0$) when the distinguishers are.}

\begin{theorem}[Nonuniform Enhanced Regularity/Leakage-Simulation Lemma for General Distinguishers]
\label{thm:reg-non-unif}
    Let $\gendistfamily$ be a family of distinguishers $\tau:\zo^n\times \Delta_L\to [-1,1]^L$. For every distribution $\mu$ on $\zo^n$, every $g^*:\zo^n\to \Delta_L$, and every $\varepsilon\in (0,1/2)$, there exists a simulator $s:\zo^n\to \Delta_L$ such that
    \begin{enumerate}
        \item $s$ has a $\gendistfamily$-oracle circuit of ordinary circuit size
        \[
        L\cdot \poly(1/\varepsilon,\log L)
        \]
        and uses $O((\log L)/\varepsilon^2)$ $\gendistfamily$-oracle gates;
        \item $s$ is indistinguishable from $g^*$ w.r.t.\ $\gendistfamily$:
        \[
        \langle s - g^*,\tau_s\rangle \le \varepsilon\quad \text{for every $\tau\in \gendistfamily$}.
        \]
    \end{enumerate}
    In particular, if every $\tau\in\gendistfamily$ has circuit complexity at most $T$, then $s$ has ordinary circuit complexity
    \[
    O\left(\frac{T\log L}{\varepsilon^2} + L\cdot \poly(1/\varepsilon,\log L)\right).
    \]
\end{theorem}

We prove \Cref{thm:reg-non-unif} by constructing $s$ using \Cref{alg:1} based on the Multiplicative Weights technique discussed in \Cref{sec:mw}.
In particular, our analysis tracks the Fenchel-Young divergence $\Gamma = \Gamma_{\varphi,\psi}$ defined by the negative Shannon entropy $\varphi$ in \eqref{eq:mw-varphi} and its convex conjugate $\psi$ in \eqref{eq:mw-psi}. For every $g:\zo^n\to \Delta_L$ and $h:\zo^n\to \R^L$, we define
\[
\Gamma(g,h):=\E_{x\sim\mu}[\Gamma(g(x),h(x))].
\]
For any norm $\|\cdot \|$ on $\R^L$ and any function $h:\zo^n\to \R^L$, we define
\[
\|h\|:= \max_{x\in \zo^n}\|h(x)\|.
\]

As we will show, \Cref{thm:reg-non-unif} follows from \Cref{thm:terminate,lm:circuit-comp} below.

\SetKw{KwGiven}{Given:}
\begin{algorithm}
\caption{Nonuniform Enhanced Regularity/Leakage-Simulation Lemma for General Distinguishers}
\label{alg:1}
\KwGiven{A family $\gendistfamily$ of functions $\tau:\zo^n\times\Delta_L\to [-1,1]^L$; a function $g^*:\zo^n \to \Delta_L$; parameter $\varepsilon\in (0,1)$; distribution $\mu$ on $\zo^n$;}\\
Initialize $h_0:\zo^n\to \R^L$ as the constant function where $h_0(x) = (0,\ldots,0)$ for every $x\in \zo^n$\;
Compute $\hat g_0:\zo^n\to \Delta_L$ s.t.\ $\|\hat g_0 - g_0\|_1 \le \varepsilon/10$, where $g_0:= \softmax \circ h_0$ \label{line:update-g-0}\;
$k\gets 0$\;
$\mathtt{updated}\gets \mathtt{TRUE}$\;
\While{$\mathtt{updated}$}{
    $\mathtt{updated}\gets \mathtt{FALSE}$\;
    \If{there exists $\tau\in \gendistfamily$ such that $\langle \hat g_k - g^*,\tau_{\hat g_k}\rangle > \varepsilon$ \label{line:if-tau}}{
        $h_{k + 1} \gets h_k - \varepsilon \tau_{\hat g_k}$ \label{line:update-tau}\;
        Compute $\hat g_{k + 1}:\zo^n\to \Delta_L$ s.t.\ $\|\hat g_{k + 1} - g_{k + 1}\|_1 \le \varepsilon/10$, where $g_{k + 1}:= \softmax \circ h_{k + 1}$ \label{line:update-g-tau}\;
        $k\gets k + 1$\;
        $\mathtt{updated}\gets \mathtt{TRUE}$\;
    }
}
\Return $s:=\hat g_k$\;
\end{algorithm}

\begin{lemma}
\label{thm:terminate}
    \Cref{alg:1} terminates after $k = O((\log L)/\varepsilon^2)$ updates. Moreover, the output $s$ produced by the algorithm satisfies
    \begin{equation}
    \label{eq:boosting-goal}
        \langle s - g^*,\tau_s \rangle \le \varepsilon \quad \text{for every $\tau\in \gendistfamily$}.
    \end{equation}
\end{lemma}
\begin{proof}
As we will show, throughout the algorithm, for every $k$, we have
\begin{equation}
\label{eq:potential-decrease}
\Gamma(g^*,h_k) < \Gamma(g^*,h_0) - k \cdot \varepsilon^2/4.
\end{equation}
The definition of the Fenchel-Young divergence $\Gamma$ ensures that  $\Gamma(g^*,h_k) \ge 0$.
We also have
\begin{align*}
\Gamma(g^*,h_0) = \E_{x\sim\mu}\Gamma(g^*(x),h_0(x)) & = \E_{x\sim\mu}[\varphi(g^*(x)) + \psi(h_0(x)) - \langle g^*(x), h_0(x) \rangle]\\
& = \E_{x\sim\mu}[\varphi(g^*(x)) + \psi(h_0(x))]\\
& \le \E_{x\sim\mu}[\psi(h_0(x))]\\
& = \ln L. \tag{by the definitions of $\varphi$ and $\psi$ in \eqref{eq:mw-varphi} and \eqref{eq:mw-psi}}
\end{align*}
Therefore, \eqref{eq:potential-decrease} implies that $k = O((\log L)/\varepsilon^2)$. Moreover, when \Cref{alg:1} terminates, the condition at \Cref{line:if-tau} is violated, implying \eqref{eq:boosting-goal}.

It remains to prove \eqref{eq:potential-decrease}.
We will prove the stronger claim that whenever we update from $h_k$ to $h_{k + 1}$ at \Cref{line:update-tau},
\begin{equation}
\label{eq:potential-decrease-1}
\Gamma(g^*,h_{k + 1}) < \Gamma(g^*,h_k) - \varepsilon^2/4.
\end{equation}
Let $\tau\in \gendistfamily$ be the distinguisher used in the update. By \Cref{lm:mirror-descent}, \Cref{lm:mw-smooth} and the fact that $\|\tau_{\hat g_k}\|_\infty \le 1$, we have
\begin{equation}
\label{eq:potential-decrease-2}
\langle g_k - g^*, \varepsilon \tau_{\hat g_k}\rangle \le \Gamma(g^*,h_k) - \Gamma(g^*,h_{k + 1}) + \frac 1 2\cdot \varepsilon^2.
\end{equation}
By the condition at \Cref{line:if-tau} and the fact that $\|\hat g_k - g_k\|_1\le \varepsilon/10$, we have
\begin{equation}
\label{eq:potential-decrease-3}
\langle g_k - g^*, \tau_{\hat g_k} \rangle > (9/10)\varepsilon.
\end{equation}
Plugging \eqref{eq:potential-decrease-3} into \eqref{eq:potential-decrease-2} proves \eqref{eq:potential-decrease-1}.
\end{proof}
\begin{lemma}[Controlling oracle circuit complexity]
\label{lm:circuit-comp}
    There exists an implementation of \Cref{alg:1} such that, for every $k$, the function $\hat g_k$ has a $\gendistfamily$-oracle circuit of ordinary circuit size $(k + 1)L \cdot \poly(1/\varepsilon, \log L)$ using at most $k$ $\gendistfamily$-oracle gates, and each coordinate of $\hat g_k(x)$ for every $x\in \zo^n$ is represented in binary format with $O(\log (L/\varepsilon))$ bits.
\end{lemma}
We will use the following lemma to prove \Cref{lm:circuit-comp}:
\begin{lemma}[Approximate computation of softmax]
\label{lm:approx-softmax}
    For $\varepsilon\in (0,1/2), B \ge 2$, let $q,\hat q\in [-B,B]^L$ be vectors such that $\|\hat q - q\|_\infty \le \varepsilon/3$.
    Assume that each coordinate of $\hat q$ is rational and is represented in binary format with finitely many bits.
    There exists a circuit of size $L \cdot \poly(B,\log(1/\varepsilon),\log L)$ that, given $\hat q$, computes $v\in \Delta_L$ such that $\|v - \softmax(q)\| \le \varepsilon$, where each coordinate of $v$ is represented in binary format with $O(\log(L/\varepsilon))$ bits.
\end{lemma}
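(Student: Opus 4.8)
\emph{Plan.} I would prove this in three stages: (i) throw away all but the top $O(\log(B/\varepsilon))$ bits of each coordinate of $\hat q$, reducing to a low-precision input; (ii) approximate each exponential $e^{\hat q_y}$ by a truncated Taylor series evaluated in fixed-point arithmetic; and (iii) normalize and round the result so that it lands \emph{exactly} in $\Delta_L$ with $O(\log(L/\varepsilon))$ bits per coordinate. Throughout I will measure error in $\|\cdot\|_1$ (which dominates $\|\cdot\|_\infty$ and $\|\cdot\|_2$, so it suffices for whichever norm is intended in the statement).

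\emph{Step 1: coarsening the input.} A circuit need not read all its input bits, so I would first pass to $\tilde q\in[-B,B]^L$ obtained from $\hat q$ by truncating each coordinate to a multiple of $2^{-j}$, where $j$ is smallest with $2^{-j}\le\varepsilon/10$. This is a size-$O(L\log(B/\varepsilon))$ circuit that simply forgets the low-order input bits; it gives $\|\tilde q-\hat q\|_\infty\le\varepsilon/10$ and makes each $\tilde q_y$ a rational of bit-length $O(\log(B/\varepsilon))$. By the second part of \Cref{lm:mw-smooth} ($\softmax$ is $1$-Lipschitz from $\|\cdot\|_\infty$ to $\|\cdot\|_1$) together with the hypothesis $\|\hat q-q\|_\infty\le\varepsilon/3$, we get $\|\softmax(\tilde q)-\softmax(q)\|_1\le\varepsilon/3+\varepsilon/10$, so it suffices to produce $v\in\Delta_L$ with $\|v-\softmax(\tilde q)\|_1\le\varepsilon/2$ by a circuit of size $L\cdot\poly(B,\log L,\log(1/\varepsilon))$.

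\emph{Step 2: evaluating the exponentials.} Using shift-invariance of $\softmax$, I would compute $c:=\max_y\tilde q_y$ (a tournament of $O(L)$ comparisons of $O(\log(B/\varepsilon))$-bit numbers) and work with $t_y:=\tilde q_y-c\in[-2B,0]$, so that $u_y:=e^{t_y}\in[e^{-2B},1]$ and $\max_y u_y=1$. I would approximate $e^{t_y}$ by Horner's rule on $\sum_{k=0}^{K}t_y^{\,k}/k!$ with $K=O(B+\log(L/\varepsilon))$: for $t_y\in[-2B,0]$ and $K$ at least a constant multiple of $B$ the truncation error is at most $(2B)^{K+1}/(K+1)!\le 2^{-(K+1)}$. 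Running the recursion in fixed point with $P=\poly(B,\log(L/\varepsilon))$ bits of precision (the coefficients $1/k!$ hard-wired, all true intermediate values of magnitude at most $e^{2B}$, and the accumulated roundoff bounded by $2^{-\Omega(P)}(K+1)(2B)^{K}$) yields $w_y$ with $|w_y-u_y|\le\eta$, where I set $\eta:=\varepsilon/(80L)$. This is $O(K)$ multiply--adds of $P$-bit numbers per coordinate, i.e.\ $\poly(B,\log L,\log(1/\varepsilon))$ gates per coordinate and $L\cdot\poly(B,\log L,\log(1/\varepsilon))$ in total. (Reducing $t_y$ into $[-1,0]$ by $O(\log B)$ halvings and undoing them by $O(\log B)$ squarings would improve the dependence on $B$, but is not needed here.)

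\emph{Step 3: normalization, rounding, and the main obstacle.} I would replace each $w_y$ by $\max(w_y,0)$ (which only decreases $|w_y-u_y|$ since $u_y\ge0$), set $S:=\sum_y w_y$, and note $S\ge 1-\eta\ge 1/2$ because $\max_y u_y=1$, while $|S-\sum_y u_y|\le L\eta$. For $y<L$, I would set $\tilde v_y$ to $w_y/S$ truncated downward to $O(\log(L/\varepsilon))$ bits, and set $v_L:=1-\sum_{y<L}\tilde v_y$; since $\sum_{y<L}w_y/S\le 1$ and the truncation is downward, $v_L\ge0$, so $v:=(\tilde v_1,\dots,\tilde v_{L-1},v_L)\in\Delta_L$ with every coordinate (including $v_L$) having $O(\log(L/\varepsilon))$ bits. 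A routine estimate (using $\sum_y u_y/\sum_y u_y=1$) bounds $\sum_y|w_y/S-\softmax(\tilde q)_y|$ by $O(L\eta)\le\varepsilon/20$, and the downward truncations add at most $L\cdot 2^{-\Omega(\log(L/\varepsilon))}\le\varepsilon/20$, giving $\|v-\softmax(\tilde q)\|_1\le\varepsilon/10$; combined with Step 1, $\|v-\softmax(q)\|_1\le\varepsilon/3+\varepsilon/10+\varepsilon/10<\varepsilon$, with total circuit size $L\cdot\poly(B,\log L,\log(1/\varepsilon))$. Nothing here is conceptually deep; the one thing needing genuine care is the bit-complexity bookkeeping — the circuit size must be \emph{independent} of the a priori unbounded bit-length of $\hat q$ (hence the coarsening step, which is legitimate precisely because $\softmax$ is $1$-Lipschitz from $\|\cdot\|_\infty$ to $\|\cdot\|_1$), and one must verify that evaluating $e^{t}$ to the relative precision $\varepsilon/L$ required for an $\ell_1$-accurate normalization needs only $\poly(B,\log L,\log(1/\varepsilon))$ bits and $O(B+\log(L/\varepsilon))$ series terms, so that the error amplification in Horner's rule is absorbed by a polynomial increase in precision.
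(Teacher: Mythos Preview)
Your proof is correct and follows the same three-stage outline as the paper (coarsen the input, Taylor-expand the exponentials, normalize and round into $\Delta_L$); the difference is only in how the normalization error is controlled. You shift by the max so that all $u_y\in[0,1]$, approximate each to \emph{absolute} precision $\Theta(\varepsilon/L)$, and bound $\sum_y|w_y/S-u_y/U|$ by an explicit calculation. The paper instead approximates each $e^{\hat q_y}$ only to \emph{relative} precision $\Theta(\varepsilon)$, observes that this means the computed value equals $e^{\tilde q_y}$ exactly for some $\tilde q$ with $\|\tilde q-q\|_\infty=O(\varepsilon)$, so the normalized vector is literally $\softmax(\tilde q)$, and finishes with one more application of the Lipschitz bound from \Cref{lm:mw-smooth}. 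That log-domain trick is slightly slicker---it sidesteps any direct analysis of the division and avoids the extra factor of $L$ in the precision target---while your shift-by-max and the explicit $v_L:=1-\sum_{y<L}\tilde v_y$ step are more careful about producing an output that lies exactly in $\Delta_L$. Both routes are valid.
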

\begin{proof}
Since $\hat q\in [-B,B]^L$, the integer part of each coordinate of $\hat q$ has $O(\log B)$ bits. We keep only the first $O(\log (1/\varepsilon) )$ bits in the fractional part of each coordinate of $\hat q$. We can make sure that this truncated $\hat q$ still satisfies $\|\hat q - q\|_\infty \le \varepsilon/2$.
Suppose $\hat q = (\hat q_1,\ldots,\hat q_L)$.
For each $y = 1,\ldots,L$,
    using Taylor's expansion of $e^x$, we can compute $p_y$ in $\poly(B,\log(1/\varepsilon))$ circuit complexity such that
    \[
    p_y / e^{\hat q_y} \in [1 - \varepsilon/10, 1 + \varepsilon/10].
    \]
This implies that $p_y = e^{\tilde q_y}$ for some $\tilde q_y\in [\hat q_y - \varepsilon/4, \hat q_y + \varepsilon/4]\subseteq [q_y - 3\varepsilon/4, q_y + 3\varepsilon/4]$. That is,
\begin{equation}
\label{eq:q-tilde}
\|\tilde q - q\|_\infty \le 3\varepsilon/4.
\end{equation}
Define
\[
v := \frac{1}{\sum_{y\in [L]}p_y}(p_1,\ldots,p_L) = \softmax(\tilde q).
\]
Since each $p_y$ can be computed in circuit complexity $\poly(B,\log(1/\varepsilon))$, we have that $v$ can be computed in circuit complexity $L \cdot \poly(B,\log(1/\varepsilon),\log L)$.
By \eqref{eq:q-tilde} and the Lipschitzness of $\softmax$ (\Cref{lm:mw-smooth}), we have
\[
\|v - \softmax(q)\|_1 = \|\softmax(\tilde q) - \softmax (q)\|_1 \le 3\varepsilon/4.
\]
The proof is completed by noting that we can round each coordinate of $v$ to the first $O(\log(L/\varepsilon))$ bits while ensuring that the updated $v$ satisfies $v\in \Delta_L$ and $\|v - \softmax(q)\|_1 \le \varepsilon.$
\end{proof}

We are now ready to prove \Cref{lm:circuit-comp}.

\begin{proof}[Proof of \Cref{lm:circuit-comp}]
Let $\kappa = O((\log L)/\varepsilon^2)$ be an upper bound on the total number of updates in \Cref{alg:1}, as guaranteed by \Cref{thm:terminate}.
Throughout the algorithm, we maintain a circuit $C_k$ that computes $x\mapsto (\hat h_k(x),\hat g_k(x))$ where $\hat h_k:\zo^n\to \R^L,\hat g_k:\zo^n \to \Delta_L$ satisfy the following:
\begin{enumerate}
    \item $\|\hat h_k\|_\infty \le k\varepsilon \le \kappa \varepsilon$, so the integer part of each coordinate of $\hat h_k$ has at most $O(\log(\kappa\varepsilon))$ bits;
    \item $\|\hat h_k - h_k\|_\infty \le k\cdot (\varepsilon/30 \kappa) \le \varepsilon/30$;
    \item $\|\hat g_k - g_k\|_1 \le \varepsilon/10$ as required by \Cref{line:update-g-0,line:update-g-tau};
    \item Each coordinate of $\hat h_k(x)$ for every $x\in \zo^n$ is represented in binary format with $O(\log(\kappa/\varepsilon))$ bits after the decimal point;
    \item Each coordinate of $\hat g_k(x)$ for every $x\in \zo^n$ is represented in binary format with $O(\log(L/\varepsilon))$ bits.
\end{enumerate}

We show that the ordinary circuit size of $C_k$ need not exceed $(k + 1)L \cdot \poly(1/\varepsilon, \log L)$ while using at most $k$ $\gendistfamily$-oracle gates.
It suffices to inductively prove an ordinary circuit size upper bound $T_k$ for $C_k$ such that
\begin{align*}
T_0 & = L \cdot \poly(1/\varepsilon, \log L),\\
T_{k + 1} & \le T_k + L \cdot \poly(1/\varepsilon, \log L),
\end{align*}
with one additional $\gendistfamily$-oracle gate at each update.

We prove this by induction on $k$. When $k = 0$, the algorithm initializes $h_0$ to be the constant zero function, so $\hat h_0$ can be computed in circuit complexity $O(L)$. By \Cref{lm:approx-softmax}, $\hat g_0$ can be computed in circuit complexity $L \cdot \poly(\log(1/\varepsilon),\log L)$.

Now suppose we have a $\gendistfamily$-oracle circuit $C_k$ with ordinary size at most $T_k$ and at most $k$ $\gendistfamily$-oracle gates that computes $x\mapsto (\hat h_k(x),\hat g_k(x))$ satisfying the five conditions above, and let us construct the circuit $C_{k + 1}$ that computes $x\mapsto (\hat h_{k + 1}(x), \hat g_{k + 1}(x))$. In \Cref{alg:1}, $h_{k + 1}$ is updated from $h_k$ at \Cref{line:update-tau} using $\tau_{\hat g_k}$ for some $\tau\in\gendistfamily$. We first compute $\hat \tau$ with $\|\hat \tau\|_\infty \le \varepsilon$ such that $\|\hat \tau - \varepsilon \tau_{\hat g_k}\|_\infty \le \varepsilon / 30 \kappa$, where each coordinate of $\hat \tau(x)$ need only have $O(\log(\kappa/\varepsilon))$ bits in binary representation. This can be done by first computing $\hat g_k(x)$ using $C_k$ and then using one $\gendistfamily$-oracle gate to evaluate $\tau(x,\hat g_k(x))$, followed by $L \cdot \poly(\log(\kappa/\varepsilon))$ ordinary gates for truncation and scaling. Then we subtract $\hat \tau$ from $\hat h_k$ to obtain $\hat h_{k + 1}$. The subtraction takes circuit complexity $L\cdot \poly(\log(\kappa/\varepsilon))$ and  ensures that $h_{k + 1}$ satisfies conditions 1,2, and 4.  Finally, by \Cref{lm:approx-softmax}, we can compute $\hat g_{k + 1}$ satisfying conditions 3 and 5 from $\hat h_{k + 1}$ in circuit complexity $O(L \cdot \poly(\kappa\varepsilon,\log(1/\varepsilon), \log L))$. The total ordinary circuit size of $C_{k + 1}$ is
\begin{align*}
& T_k + L\cdot \poly(\log(\kappa/\varepsilon)) + O(L \cdot \poly(\kappa \varepsilon,\log(1/\varepsilon), \log L)) \\
\le {} & T_k + L\cdot \poly(1/\varepsilon,\log L),
\end{align*}
where we used the fact that $\kappa = O((\log L)/\varepsilon^2)$. The number of $\gendistfamily$-oracle gates increases by one, as required.
\end{proof}

We now complete the proof of \Cref{thm:reg-non-unif} using \Cref{thm:terminate} and \Cref{lm:circuit-comp}.
\begin{proof}[Proof of \Cref{thm:reg-non-unif}]
\Cref{thm:terminate} ensures that the output function $s$ satisfies Condition 2 in \Cref{thm:reg-non-unif}. By \Cref{lm:circuit-comp}, we can make sure that $s$ has ordinary circuit size at most
\[
(k + 1)L\cdot \poly(1/\varepsilon,\log L) = L\cdot \poly(1/\varepsilon,\log L),
\]
and at most $k=O((\log L)/\varepsilon^2)$ $\gendistfamily$-oracle gates by \Cref{thm:terminate}. This establishes Condition 1 required by \Cref{thm:reg-non-unif}. If every $\tau\in\gendistfamily$ has circuit complexity at most $T$, replacing each oracle gate by such a circuit gives the claimed ordinary circuit complexity bound.
\end{proof}

\subsection{Uniform Setting}
To formally state and prove our enhanced regularity/leakage-simulation lemma in the uniform setting, we extend the unified distinguishing condition from \Cref{thm:reg-non-unif} to the uniform setting. To that end, we make the following definitions.
\begin{definition}[Induced Distribution]
    Let $\mu$ be a distribution over $\zo^n$. For a function $g:\zo^n\to \Delta_L$, we define $\mu_g$ as the distribution of $(x,y)\in \zo^n\times [L]$ where we first draw $x$ from $\mu$ and then draw $y$ from the distribution $g(x)$. For a pair of functions $g,g^*:\zo^n\to \Delta_L$, we define $\mu_{g,g^*}$ as the distribution of $(x,y,y^*)\in \zo^n\times [L] \times [L]$, where we first draw $x$ from $\mu$, and then independently draw $y\sim g(x),y^*\sim g^*(x)$.
\end{definition}
\begin{definition}[General Distinguishing Oracle]
\label{def:oracle-dist}
    An $(m,T)$-\emph{general distinguishing oracle} is a time-$T$ algorithm $\cA$ that takes $m$ data points $(x_1,v_1,y_1,y_1^*),\ldots,(x_m,v_m,y_m,y_m^*)\in \zo^n\times \Delta_L\times [L]\times [L]$ as input, and outputs (a succinct description of) a function $\tau:\zo^n\times\Delta_L\to [-1,1]^L$ such that given any $(x,v)\in \zo^n\times\Delta_L$, the function value $\tau(x,v)\in [-1,1]^L$ can be evaluated in time $T$.
\end{definition}

\begin{definition}[Distinguishability]
\label{def:unif-distinguishability}
Let $\mu$ be a distribution over $\zo^n$ and let $\cA$ be an $(m,T)$-general distinguishing oracle. For a pair of functions $g,g^*:\zo^n\to \Delta_L$, we say $g$ is $(\cA,\varepsilon,\delta)$-\emph{distinguishable} from $g^*$ if with probability at least $1-\delta$ over the random draw of $m$ i.i.d.\ data points $(x_1,y_1,y^*_1),\ldots,\allowbreak (x_m,y_m,y^*_m) \sim \mu_{g,g^*}$, when $\cA$ is given $(x_1,g(x_1),y_1,y_1^*),\ldots,(x_m,g(x_m),y_m,y_m^*)$ as input, the output function $\tau:\zo^n\times\Delta_L\to [-1,1]^L$ satisfies
    \[
    \langle g - g^*,\tau_g\rangle > \varepsilon.
    \]
Correspondingly, we say $g$ is $(\cA,\varepsilon,\delta)$-\emph{indistinguishable} from $g^*$ if it is not $(\cA,\varepsilon,\delta)$-distinguishable.
\end{definition}

In \Cref{def:unif-distinguishability} above, the success probability of an oracle can be amplified by independent repetition. Suppose we have an oracle with success probability $\alpha > 0$. We can construct an oracle with success probability at least $1-\delta$ by independently running the oracle $u=O(\alpha^{-1}\log(1/\delta))$ times to ensure that with probability at least $1-\delta/2$, at least one of the runs is successful. We can then perform a simple test using $O(\varepsilon^{-2}\log(u/\delta))$ additional fresh data points to select a successful run with probability at least $1 - \delta/2$ (with a constant-factor loss in $\varepsilon$). The overall success probability is at least $1-\delta$ by the union bound.

We are now ready to state the enhanced regularity/leakage-simulation lemma in the uniform setting:

\begin{theorem}[Uniform Enhanced Regularity/Leakage-Simulation Lemma for General Distinguishers]
\label{thm:reg-unif}
    Let $\cA$ be an $(m,T)$ general distinguishing oracle. Then there exists a simulating algorithm $\cS$ using $\cA$ as a subroutine such that for every $g^*:\zo^n\to\Delta_L$ the following holds:
    \begin{enumerate}
        \item Given $\varepsilon,\delta\in (0,1/2)$ as input, $\cS$ additionally takes $\poly(m,\log L, 1/\varepsilon,\log(1/\delta))$ i.i.d.\ data points from $\mu_{g^*}$ as input, and returns a function $s:\zo^n\to \Delta_L$ in time $\poly(m,T,L,1/\varepsilon, \log(1/\delta))$. Moreover, given an arbitrary $x\in \zo^n$, the function value $s(x)$ can also be evaluated in time $\poly(m,T,L,1/\varepsilon, \log(1/\delta))$.
        \item With probability at least $1-\delta$ over the random draw of the data points, the output function $s$ is $(\cA,\varepsilon, \Omega(\delta \varepsilon^2/\log L))$-indistinguishable from $g^*$.
    \end{enumerate}
\end{theorem}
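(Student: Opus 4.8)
The plan is to replace the exhaustive searches in \Cref{line:if-1,line:if-2} of \Cref{alg:1} by oracle calls combined with empirical estimates on fresh samples, and otherwise to mimic the multiplicative-weights boosting analysis behind \Cref{thm:terminate,lm:circuit-comp}. Throughout, I maintain an explicit short program for the current function $\hat g^{(t)}:\zo^n\to\Delta_L$ — a layered composition of the previously returned functions with approximate-softmax layers, exactly as in the proof of \Cref{lm:circuit-comp} — initialized to the constant uniform distribution. At iteration $t$ I draw $m$ fresh points $(x_i,y_i^*)$ from $\mu_{g^*}$ and turn each into a sample of $\mu_{\hat g^{(t)},g^*}$ by additionally drawing $y_i\sim\hat g^{(t)}(x_i)$; feeding these to $\cA$ produces a candidate distinguisher $f$. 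With $O(\varepsilon^{-2}\log(1/\delta_0))$ further fresh points I estimate the bounded quantity $\langle\hat g^{(t)}-g^*,f\rangle = \E_{x,\,y\sim\hat g^{(t)}(x)}[f(x)_y] - \E_{x,\,y^*\sim g^*(x)}[f(x)_{y^*}]$ to additive error $\varepsilon/10$ by Hoeffding; if the estimate exceeds $0.8\varepsilon$, I run the multiplicative-weights update $h_{t+1}\gets h_t - \varepsilon f$ followed by an approximate softmax (\Cref{lm:approx-softmax}) and proceed to iteration $t+1$. Otherwise I run the analogous step with $\cB$ and the signal $r\circ\hat g^{(t)}$ in place of $f$, estimating $\langle\hat g^{(t)}-g^*,r\circ\hat g^{(t)}\rangle$. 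If \emph{neither} oracle returns a function whose estimated inner product exceeds $0.8\varepsilon$, the algorithm halts and outputs $s:=\hat g^{(t)}$. The iteration count is capped at $K := O((\log L)/\varepsilon^2)$.

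For termination and complexity I would condition on the event that every one of the (at most $2K$) empirical estimates made during the run is accurate to within $\varepsilon/10$, which has probability at least $1-2K\delta_0$ by a union bound. On this event, whenever an update uses a signal $v\in\{f,\,r\circ\hat g^{(t)}\}$ whose estimated inner product exceeds $0.8\varepsilon$, the true inner product $\langle g_t-g^*,v\rangle$ is at least $0.8\varepsilon-\varepsilon/10-\varepsilon/10 = 0.6\varepsilon$, absorbing both the estimation error and the $\ell_1$ gap between $\hat g^{(t)}$ and $g_t=\softmax(h_t)$; so \Cref{lm:mirror-descent} together with the $1$-smoothness of $\psi$ (\Cref{lm:mw-smooth}) and $\|\varepsilon v\|_\infty\le\varepsilon$ gives $\Gamma(g^*,h_t)-\Gamma(g^*,h_{t+1})\ge 0.6\varepsilon^2 - \tfrac12\varepsilon^2 = 0.1\varepsilon^2$. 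Since $0\le\Gamma(g^*,h_t)\le\ln L$ as in the proof of \Cref{thm:terminate}, on this event there are fewer than $K$ updates, so the algorithm halts before the cap. The circuit-size and evaluation-time bookkeeping is identical to that in \Cref{lm:circuit-comp} — each of the $\le K$ layers costs $T + L\cdot\poly(\log(L/\varepsilon))$ to build and to evaluate — giving running time and per-point evaluation time $\poly(m,T,L,1/\varepsilon,\log(1/\delta))$ and total sample usage $\poly(m,\log L,1/\varepsilon,\log(1/\delta))$.

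The subtle step — and the one I expect to be the main obstacle — is to show that halting actually certifies indistinguishability and calibration in the precise probabilistic sense of \Cref{def:unif-distinguishability,def:unif-calibration}, given that a single call to $\cA$ need not reveal a violation even when $\hat g^{(t)}$ is distinguishable, and that the intermediate functions $\hat g^{(t)}$ are themselves random. I would argue via a conditioning-and-union-bound over iterations. Fix the history $H_t$ of all randomness before iteration $t$; it determines $\hat g^{(t)}$. If $\hat g^{(t)}$ were $(\cA,\varepsilon,\delta_0)$-distinguishable, then with probability at least $1-\delta_0$ over iteration $t$'s fresh samples $\cA$ returns an $f$ with $\langle\hat g^{(t)}-g^*,f\rangle > \varepsilon$, and then, if additionally the estimate is accurate (probability $\ge 1-\delta_0$), the estimate exceeds $0.9\varepsilon > 0.8\varepsilon$ and the algorithm updates rather than halts at $t$; symmetrically if $\hat g^{(t)}$ were $(\cB,\varepsilon,\delta_0)$-miscalibrated. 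Hence $\Pr[\text{the algorithm halts at iteration }t\mid H_t]\le 2\delta_0$ on any $H_t$ for which $\hat g^{(t)}$ violates one of the two guarantees. Summing over $t\le K$ (the halting times are mutually exclusive) and adding the $2K\delta_0$ probability that some estimate is inaccurate, the probability that $\cS$ outputs a function that is $(\cA,\varepsilon,\delta_0)$-distinguishable or $(\cB,\varepsilon,\delta_0)$-miscalibrated is $O(K\delta_0)$; choosing $\delta_0 := \Theta(\delta/K) = \Theta(\delta\varepsilon^2/\log L)$ makes this at most $\delta$, which is exactly the claimed guarantee (and when the algorithm halts at a $t$ whose $\hat g^{(t)}$ already satisfies both guarantees, there is nothing to prove). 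Aligning the quantifier over $\cA$'s internal sample randomness with the $\delta_0$ built into \Cref{def:unif-distinguishability}, and carrying this union bound over the random intermediate functions $\hat g^{(t)}$, is where the care is needed; everything else transfers essentially verbatim from the non-uniform argument and from the uniform regularity lemma of \cite{unif-minmax}.
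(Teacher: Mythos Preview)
Your proposal is correct and follows essentially the same approach as the paper's proof: the paper uses exactly this algorithm (\Cref{alg:2}), with oracle calls on fresh samples replacing the exhaustive searches, empirical tests with a threshold (they use $3\varepsilon/4$ where you use $0.8\varepsilon$) to decide whether to update, and the same multiplicative-weights potential argument plus a union bound over $O((\log L)/\varepsilon^2)$ iterations to set $\delta' = \Theta(\delta\varepsilon^2/\log L)$. The paper's presentation differs only cosmetically---it runs both $\cA$ and $\cB$ in every while-loop pass rather than skipping $\cB$ when $\cA$ already triggers, and it computes $\langle \hat g_k(x_i),f(x_i)\rangle$ exactly rather than via a sampled $y_i$---but the analysis and the resulting bounds are the same.
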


\begin{algorithm}
\caption{Uniform Enhanced Regularity/Leakage-Simulation Lemma for General Distinguishers}
\label{alg:2}
\KwGiven{An $(m,T)$-general distinguishing oracle $\cA$ on domain $\zo^n$ and label space $[L]$; access to i.i.d.\ data points drawn from $\mu_{g^*}$ for distribution $\mu$ on $\zo^n$ and function $g^*:\zo^n\to \Delta_L$; parameters $\varepsilon\in (0,1), m'\in \Z_{>0}$;} \\
Initialize $h_0:\zo^n\to \R^L$ as the constant function where $h_0(x) = (0,\ldots,0)$ for every $x\in \zo^n$\;
Compute $\hat g_0:\zo^n\to \Delta_L$ s.t.\ $\|\hat g_0 - g_0\|_1 \le \varepsilon/20$, where $g_0:= \softmax \circ h_0$\;
$k\gets 0$\;
$\mathtt{updated}\gets \mathtt{TRUE}$\;
\While{$\mathtt{updated}$}{
    $\mathtt{updated}\gets \mathtt{FALSE}$\;
    Draw fresh i.i.d.\ data points $(x_1,y_1^*),\ldots,(x_m,y_m^*)$ from $\mu_{g^*}$\label{line:draw-1}\;
    Draw $y_i\sim \hat g_k(x_i)$ independently for $i = 1,\ldots,m$ \label{line:draw-1-1}\;
    Invoke oracle $\cA$ on data points $(x_i,\hat g_k(x_i),y_i,y_i^*)$ and obtain its output $\tau:\zo^n\times\Delta_L\to [-1,1]^L$ \label{line:invoke-tau}\;
    Draw fresh i.i.d.\ data points $(x_1,y^*_1),\ldots,(x_{m'},y^*_{m'})$ from $\mu_{g^*}$ \label{line:draw-2}\;
    \If{$\frac 1{m'} \sum_{i=1}^{m'} \langle \hat g_k(x_i) - \be_{y_i^*}, \tau(x_i,\hat g_k(x_i))\rangle > 3\varepsilon/4$ \label{line:unif-if-tau}}{
        $h_{k + 1} \gets h_k - (\varepsilon/2) \tau_{\hat g_k}$\;
        Compute $\hat g_{k+1}:\zo^n\to \Delta_L$ s.t.\ $\|\hat g_{k + 1} - g_{k + 1}\|_1 \le \varepsilon/20$, where $g_{k + 1}:= \softmax \circ h_{k + 1}$\;
        $k\gets k + 1$\;
        $\mathtt{updated}\gets \mathtt{TRUE}$\;
    }
}
\Return $s:=\hat g_k$\;
\end{algorithm}

We prove \Cref{thm:reg-unif} using \Cref{alg:2} as the simulating algorithm $\cS$ based on the same multiplicative weights idea as \Cref{alg:1}.
The difference is that instead of directly searching over a nonuniform family of distinguishers as in \Cref{line:if-tau} of \Cref{alg:1}, we obtain the relevant distinguisher efficiently by invoking the oracle $\cA$ at \Cref{line:invoke-tau} in \Cref{alg:2}.
Recall that for $y\in [L]$, we use $\be_y$ to denote the standard basis vector whose $y$-th coordinate is one.
This ensures that for every $v\in \Delta_L$, we have $\E_{y\sim v}[\be_y] = v$.

\Cref{thm:reg-unif} follows immediately from the following lemma about \Cref{alg:2}:

\begin{lemma}
\label{lm:terminate-unif}
For every $\varepsilon\in (0,1/2)$, there exist
\[
\kappa = O((\log L)/\varepsilon^2),\quad
\delta' = \Omega(\delta/\kappa),\quad
m' = O((\log(1/\delta'))/\varepsilon^2)
\]
such that when we run \Cref{alg:2} with parameters $(\varepsilon,m')$, with probability at least $1-\delta$, the following holds. \Cref{alg:2} terminates after $k \le \kappa$ updates and returns a function $s:\zo^n\to \Delta_L$ that is $(\cA,\varepsilon,\delta')$-indistinguishable from $g^*$. Also, \Cref{alg:2} runs in time $\poly(m,T,L,1/\varepsilon,\log(1/\delta))$, and for every $x\in \zo^n$, the function value $s(x)$ can be evaluated in time $\poly(m,T,L,1/\varepsilon,\log(1/\delta))$.
\end{lemma}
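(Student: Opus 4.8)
The plan is to mirror the non-uniform analysis (\Cref{thm:terminate,lm:circuit-comp}), tracking the Fenchel--Young divergence $\Gamma = \Gamma_{\varphi,\psi}$ of the negative Shannon entropy and its conjugate as a potential function, and layering on a union bound over the (random, but boundedly many) empirical checks performed by \Cref{alg:2}. All the conclusions will be shown to hold on a single ``good event'' $G$ of probability at least $1-\delta$.

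First, the potential decrease. An update at Line~\ref{line:unif-if-1} applies the signal $(\varepsilon/2)f$ with $\|f\|_\infty\le 1$, and an update at Line~\ref{line:unif-if-2} applies $(\varepsilon/2)\,r\circ\hat g_k$, again of $\infty$-norm at most $1$; in either case \Cref{lm:mirror-descent} together with the $1$-smoothness of $\psi$ in $\|\cdot\|_\infty$ (\Cref{lm:mw-smooth}), applied pointwise in $x$ and then averaged over $\mu$, gives $\Gamma(g^*,h_k)-\Gamma(g^*,h_{k+1})\ge \tfrac{\varepsilon}{2}\langle g_k-g^*,f\rangle-\tfrac{\varepsilon^2}{8}$. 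An update fires only when the relevant empirical average exceeds $3\varepsilon/4$; since $\E[\be_{y^*}\mid x]=g^*(x)$, that empirical average is an unbiased estimator of $\langle \hat g_k-g^*,f\rangle$, so when it is within $\varepsilon/4$ of its mean we get $\langle\hat g_k-g^*,f\rangle>\varepsilon/2$, and combined with $\|\hat g_k-g_k\|_1\le\varepsilon/20$ this yields $\langle g_k-g^*,f\rangle>9\varepsilon/20$, hence a potential drop of at least $\varepsilon^2/10$ per update. Since $\Gamma(g^*,h_0)=\E_x[\varphi(g^*(x))]+\ln L\le\ln L$ and $\Gamma\ge 0$, at most $\kappa:=10(\ln L)/\varepsilon^2$ updates can occur, hence at most $\kappa+1$ while-loop iterations (the last doing no update) and at most $2(\kappa+1)$ empirical checks --- provided every check so far has been accurate.

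Next, the union bound. For each iteration $t\le\kappa+1$, let $A_t$ (resp.\ $B_t$) be the event that iteration $t$ is reached and its $\cA$-check (resp.\ $\cB$-check) deviates from its mean by more than $\varepsilon/4$; conditioned on the history up to that check (which fixes the current $\hat g_k$ and the oracle's output), each summand lies in $[-2,2]$, so Hoeffding over the $m'$ fresh samples of Line~\ref{line:draw-2} (resp.\ Line~\ref{line:draw-4}) gives probability at most $\delta'$ as soon as $m'=O((\log(1/\delta'))/\varepsilon^2)$. Let $C_t$ (resp.\ $D_t$) be the event that iteration $t$ is reached, the function $\hat g_k$ fed to $\cA$ (resp.\ $\cB$) there is $(\cA,\varepsilon,\delta')$-distinguishable (resp.\ $(\cB,\varepsilon,\delta')$-miscalibrated), yet the returned $f$ (resp.\ $r$) fails to witness this, i.e.\ $\langle\hat g_k-g^*,f\rangle\le\varepsilon$ (resp.\ $\langle\hat g_k-g^*,r\circ\hat g_k\rangle\le\varepsilon$); since \Cref{alg:2} feeds the oracles $m$ fresh i.i.d.\ samples from $\mu_{\hat g_k,g^*}$ (Lines~\ref{line:draw-1}--\ref{line:invoke-1} and Lines~\ref{line:draw-3}--\ref{line:invoke-2}), the definitions of distinguishability and miscalibration give $\Pr[C_t],\Pr[D_t]\le\delta'$. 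Take $\delta'=\delta/(4(\kappa+1))=\Omega(\delta\varepsilon^2/\log L)$ and let $G$ be the event that none of $A_t,B_t,C_t,D_t$ occurs for any $t\le\kappa+1$; by a union bound $\Pr[G]\ge1-\delta$, and on $G$ the potential argument above is valid throughout, so the run really does make $\le\kappa$ updates and the union bound is self-consistent.

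It remains, on $G$, to extract the guarantees and bound the runtime. The last iteration performs no update, so its $\cA$-check's empirical average is $\le3\varepsilon/4$, whence $\neg A_t$ gives $\langle s-g^*,f\rangle\le\varepsilon$ for the returned $f$, where $s=\hat g_k$ is unchanged during the last iteration; then $\neg C_t$ forces $s$ to be $(\cA,\varepsilon,\delta')$-indistinguishable from $g^*$, and symmetrically $\neg B_t$ and $\neg D_t$ force $s$ to be $(\cB,\varepsilon,\delta')$-calibrated w.r.t.\ $g^*$. For the running time, each iteration draws $O(m+m')=\poly(m,\log L,1/\varepsilon,\log(1/\delta))$ samples, invokes each oracle once (time $T$), and evaluates $f$, $r$, and $\hat g_k$ on those samples; maintaining the list of signals and computing $\hat g_0(x),\dots,\hat g_k(x)$ in order --- each step one oracle evaluation plus one approximate softmax via \Cref{lm:approx-softmax}, with bit-complexity controlled by $\|h_k\|_\infty\le\kappa\varepsilon/2$ exactly as in the proof of \Cref{lm:circuit-comp} --- evaluates $\hat g_k(x)$ in time $O(\kappa(T+L\cdot\poly(1/\varepsilon,\log L)))$; multiplying by the $O(m+m')$ samples and $O(\kappa)$ iterations yields total time $\poly(m,T,L,1/\varepsilon,\log(1/\delta))$, and the same bound for evaluating $s$ afterward. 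The main obstacle I anticipate is the bookkeeping in the probabilistic argument: the number of empirical checks is itself random, so the union bound must be arranged to self-terminate (the potential bound caps the number of checks so long as prior checks were accurate), and the events $C_t,D_t$ linking the oracles' behavior on the algorithm's own samples to the intrinsic $(\cA,\varepsilon,\delta')$- and $(\cB,\varepsilon,\delta')$-properties of the current $\hat g_k$ must be framed so that their probabilities are controlled conditionally on the history.
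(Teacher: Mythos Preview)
Your proposal is correct and follows essentially the same approach as the paper's proof: both use Hoeffding/Chernoff on the $m'$ fresh samples to guarantee the empirical checks are $\varepsilon/4$-accurate, both invoke the definitions of $(\cA,\varepsilon,\delta')$-distinguishability and $(\cB,\varepsilon,\delta')$-miscalibration to control the oracle outputs, union-bound over the first $\kappa+1$ iterations with $\delta'=\Omega(\delta/\kappa)$, and then reuse the mirror-descent potential argument from \Cref{thm:terminate} (with the halved step size $\varepsilon/2$) to bound the number of updates and infer the indistinguishability/calibration of the final $s$ from the failure of both if-conditions in the last iteration. Your write-up is in fact more explicit than the paper's about the self-consistency of the union bound (that the potential cap on updates holds \emph{given} the accuracy of prior checks, so the run cannot escape the first $\kappa+1$ iterations on $G$) and about the recursive cost of evaluating $\hat g_k(x)$, but these are exactly the details the paper sweeps into ``following the same idea in the proof of \Cref{thm:terminate}'' and ``can be proved similarly to \Cref{lm:circuit-comp}.''
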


\begin{proof}
By the Chernoff bound, choosing $m' = O((\log(1/\delta'))/\varepsilon^2)$ is sufficient to ensure that each time \Cref{line:draw-2} is performed, with probability at least $1-\delta'$ we have
\begin{equation}
\label{eq:terminate-unif-1}
\left|\frac {1}{m'}\sum_{i = 1}^{m'}\langle \hat g_k(x_i) - \be_{y_i^*}, \tau(x_i,\hat g_k(x_i))\rangle - \langle \hat g_k - g^*,\tau_{\hat g_k} \rangle\right| < \varepsilon/4.
\end{equation}
By the definition of $(\cA,\varepsilon,\delta')$-distinguishability (\Cref{def:unif-distinguishability}), each time when \Cref{line:invoke-tau} is performed with $\hat g_k$ being $(\cA,\varepsilon,\delta')$-distinguishable, with probability at least $1-\delta'$ we have
\begin{equation}
\label{eq:terminate-unif-2}
\langle \hat g_k - g^*,\tau_{\hat g_k}\rangle > \varepsilon.
\end{equation}
By the union bound, we can choose $\delta' = \Omega(\delta/\kappa)$ to ensure that with probability at least $1-\delta$, all of the above are satisfied \emph{simultaneously} in the first $\kappa + 1$ iterations of the while loop. It thus remains to show that when this occurs, the following two conditions required by \Cref{lm:terminate-unif} are both satisfied:
\begin{enumerate}
    \item \label{item:terminate-unif-1} \Cref{alg:2} terminates after $k \le \kappa = O((\log L)/\varepsilon^2)$ updates and returns a function $s:\zo^n\to \Delta_L$ that is $(\cA,\varepsilon,\delta')$-indistinguishable from $g^*$.
    \item \label{item:terminate-unif-2} \Cref{alg:2} runs in time $\poly(m,T,L,1/\varepsilon,\log(1/\delta))$, and for every $x\in \zo^n$, the function value $s(x)$ can be evaluated in time $\poly(m,T,L,1/\varepsilon,\log(1/\delta))$.
\end{enumerate}
To prove that the output function $s = \hat g_k$ is $(\cA,\varepsilon,\delta')$-indistinguishable from $g^*$ as required by \Cref{item:terminate-unif-1}, it suffices to show that \eqref{eq:terminate-unif-2} is violated. By \eqref{eq:terminate-unif-1}, it suffices to show that the condition in the ``if'' clause at \Cref{line:unif-if-tau} is violated. This is indeed true for the final output $s = \hat g_k$ because otherwise \Cref{alg:2} would proceed to the next iteration of the ``while'' loop. Similarly, by \eqref{eq:terminate-unif-1}, if the condition at \Cref{line:unif-if-tau} is satisfied, we have
\[
\langle \hat g_k - g^*,\tau_{\hat g_k} \rangle > \varepsilon/2.
\]
This allows us to prove that \Cref{alg:2} terminates after $k\le \kappa = O((\log L)/\varepsilon^2)$ updates following the same idea in the proof of \Cref{thm:terminate}, completing the proof of \Cref{item:terminate-unif-1}.
Also, \Cref{item:terminate-unif-2} can be proved similarly to \Cref{lm:circuit-comp} using the fact that $\cA$ runs in time $T$ and its output function $\tau$ can also be evaluated in time $T$, according to \Cref{def:oracle-dist}. We omit the details.
\end{proof}

\section{Generalized and Unified Pseudoentropy Characterizations}
In this section, we complete the proofs of our main results \Cref{thm:intro-main-1,thm:intro-unif,thm:intro-unif-converse}.
\subsection{Nonuniform Setting}
\label{sec:proof-main-non-unif}
\begin{proof}[Proof of \Cref{thm:intro-main-1}]
For $\varphi \in \Phi$, by our assumption that every $\nabla \varphi$  can be efficiently approximated within $\ell_\infty$ error $\varepsilon/4$, there exists a function $r_\varphi:\Delta_L\to [-1,1]^L$ of circuit complexity at most $T_\Phi$ such that
\begin{equation}
\label{eq:rphi}
\|r_\varphi(v) - \nabla \varphi(v)\|_\infty \le \varepsilon /4 \quad \text{for every $v\in \Delta_L$}.
\end{equation}
    By \eqref{eq:gap-calibration}, the guarantee \eqref{eq:intro-main-1} is equivalent to
    \[
    \langle s - g^*, \nabla \varphi\circ s \rangle \le \varepsilon \quad \text{for every $\varphi\in \Phi$}.
    \]
   By \eqref{eq:rphi}, a sufficient condition for the guarantee \eqref{eq:intro-main-1} is
    \[
    \langle s - g^*,  r_\varphi \circ s \rangle \le \varepsilon /2 \quad \text{for every $\varphi\in \Phi$}.
    \]
    Now the first half of \Cref{thm:intro-main-1} (up to Equation~\eqref{eq:intro-main-1}) follows directly from \Cref{thm:reg-non-unif}, where we let $\gendistfamily$ contain the functions $\tau_f^+(x,v):=f(x)$ and $\tau_f^-(x,v):=-f(x)$ for every $f:\zo^n\to [-1,1]^L$ with circuit complexity at most $T$, and the functions $\tau_\varphi(x,v):=r_\varphi(v)$ for every $\varphi\in\Phi$. Every function in $\gendistfamily$ has circuit complexity at most $T+T_\Phi$, and we replace $\varepsilon$ with $\varepsilon/2$.

    To prove the second half of \Cref{thm:intro-main-1} (i.e.\ the reverse direction), we note the following identity for every $g,g^*,s:\zo^n\to \Delta_L$:
    \[
\D_\varphi(g^*\|g) = \D_\varphi(s\|g) + \langle s - g^*,\nabla \varphi \circ g \rangle + (\H_\varphi(s) - \H_\varphi(g^*)).
    \]
    This identity follows immediately from the definitions of $\H_\varphi$ and $\D_\varphi$ in \Cref{def:entropy,def:bregman}.
    By \eqref{eq:rphi} and the fact that $\D_\varphi(s\|g) \ge 0$, we have
    \begin{align}
    \D_\varphi(g^*\|g) & \ge \langle s - g^*,\nabla \varphi \circ g \rangle + (\H_\varphi(s) - \H_\varphi(g^*))\notag \\
    & \ge \langle s - g^*, r_\varphi \circ g \rangle + (\H_\varphi(s) - \H_\varphi(g^*)) - \varepsilon/2.\label{eq:converse-1}
    \end{align}
Since $r_\varphi$ has circuit complexity at most $T_\Phi$, when $g$ has circuit complexity at most $T$, we have that $r_\varphi\circ g$ has circuit complexity at most $O(T + T_\Phi)$, so for $s\in \ind(g^*, O(T + T_\Phi),\varepsilon/2)$, we have
\[
\langle s - g^*, r_\varphi \circ g \rangle \ge -\varepsilon/2.
\]
Plugging this into \eqref{eq:converse-1} proves \eqref{eq:converse}.
\end{proof}

\subsection{Uniform Setting}
\label{sec:proof-main-unif}
We state and prove \Cref{thm:unif}, which is the formal version of \Cref{thm:intro-unif}. After that we prove \Cref{thm:intro-unif-converse}.
We need the following definitions to state \Cref{thm:unif}.
\begin{definition}[Distinguishing and Calibration Oracles]
\label{def:dist-calib-oracles}
Let $\mu$ be a fixed distribution on $\zo^n$.
An $(m,T)$-\emph{distinguishing oracle} is a time-$T$ algorithm $\cA$ that takes data points
\[
(x_i,y_i,y_i^*)\in \zo^n\times [L]\times [L]\quad (i=1,\ldots,m)
\]
as input, and outputs (a succinct description of) a function $f:\zo^n\to [-1,1]^L$ such that given any $x\in \zo^n$, the function value $f(x)\in [-1,1]^L$ can be evaluated in time $T$.
For such an oracle $\cA$ and functions $g,g^*:\zo^n\to \Delta_L$, we say $g$ is $(\cA,\varepsilon,\delta)$-\emph{distinguishable} from $g^*$ if with probability at least $1-\delta$ over the random draw of $((x_i,y_i,y_i^*))_{i=1}^m\sim \mu_{g,g^*}^{\otimes m}$, when $\cA$ is given these data points as input, the output function $f:\zo^n\to [-1,1]^L$ satisfies
\[
\langle g - g^*, f\rangle > \varepsilon.
\]
Correspondingly, we say $g$ is $(\cA,\varepsilon,\delta)$-\emph{indistinguishable} from $g^*$ if it is not $(\cA,\varepsilon,\delta)$-distinguishable.

An $(m,T)$-\emph{calibration oracle} is a time-$T$ algorithm $\cB$ that takes data points
\[
(v_i,y_i,y_i^*)\in \Delta_L\times [L]\times [L]\quad (i=1,\ldots,m)
\]
as input, and outputs (a succinct description of) a function $r:\Delta_L\to [-1,1]^L$ such that given any $v\in \Delta_L$, the function value $r(v)\in [-1,1]^L$ can be evaluated in time $T$.
\end{definition}

\begin{definition}[Weak Agnostic Calibration Oracle]
\label{def:weak-ag}
Let $\mu$ be a fixed distribution on $\zo^n$.
Let $\cR$ be a family of functions $r:\Delta_L\to [-1,1]^L$, and let $\varepsilon_1,\varepsilon,\delta\in (0,1)$ be parameters.
We say an $(m,T)$-calibration oracle $\cB$ is an $(\cR,\varepsilon_1,\varepsilon,\delta)$-\emph{weak agnostic calibration oracle} if the following holds. For every pair $g,g^*:\zo^n\to \Delta_L$, if there exists $r\in \cR$ such that
\[
\langle g - g^*, r\circ g\rangle> \varepsilon_1,
\]
then with probability at least $1-\delta$ over the random draw of $((x_i,y_i,y_i^*))_{i=1}^m\sim \mu_{g,g^*}^{\otimes m}$, when $\cB$ is given $((g(x_i),y_i,y_i^*))_{i=1}^m$ as input, the output function $r_{\mathsf{output}}:\Delta_L\to [-1,1]^L$ satisfies
\[
\langle g - g^*,r_{\mathsf{output}}\circ g\rangle> \varepsilon.
\]
\end{definition}

The above definition is analogous to the standard definition of weak agnostic learning \cite{weak-agnostic-1,weak-agnostic-2}. When $\cR$ is a finite class, we can implement an $(\cR, \varepsilon_1, \varepsilon_1/2,\delta)$-weak agnostic calibration oracle using $m = O(\varepsilon_1^{-2}\log(|\cR|/\delta))$ data points by performing \emph{empirical risk minimization (ERM)}, where we output $r\in \cR$ that maximizes the empirical correlation:
\[
\frac 1m \sum_{i = 1}^m\langle v_i - \be_{y_i^*},r(v_i)\rangle.
\]

\begin{theorem}[Formal statement of \Cref{thm:intro-unif}]
\label{thm:unif}
Consider domain $\zo^n$ and label space $[L]$. Let $\Phi$ be a family of convex functions $\varphi:\Delta_L\to \R$ with a fixed choice of subgradient satisfying $\nabla\varphi(v)\in [-1,1]^L$ for every $\varphi\in\Phi$ and $v\in\Delta_L$, and let
\[
\cR=\{\nabla\varphi:\Delta_L\to [-1,1]^L\mid \varphi\in\Phi\}.
\]
For every $\varepsilon,\delta\in (0,1/2)$, there exists $\delta' = \Omega(\delta \varepsilon^2/\log L)$ and a simulating algorithm $\cS$ such that the following holds for every $g^*:\zo^n\to\Delta_L$, distribution $\mu$ on $\zo^n$, and oracles $\cA,\cB$:

    \begin{enumerate}
        \item If $\cA$ is an $(m,T)$ distinguishing oracle and $\cB$ is an $(m,T)$ calibration oracle, then $\cS$ takes $\poly(m,\log L, 1/\varepsilon,\log(1/\delta))$ i.i.d.\ data points from $\mu_{g^*}$ as input, uses $\cA$ and $\cB$ as oracles, and returns a function $s:\zo^n\to \Delta_L$ in time $\poly(m,T,L,1/\varepsilon, \log(1/\delta))$. The value $s(x)$ for an arbitrary $x\in \zo^n$ can also be evaluated within the same time bound.
        \item If in addition $\cB$ is an $(\cR,\varepsilon_1,\varepsilon,\delta')$ weak agnostic calibration oracle for some $\varepsilon_1 > 0$, then with probability at least $1-\delta$ over the random draw of the input data points, the output function $s$ is $(\cA,\varepsilon, \delta')$-indistinguishable from $g^*$ and satisfies
        \begin{equation}
        \label{eq:unif}
        \H_\varphi(s) - \H_\varphi(g^*) \ge \D_\varphi(g^*\|s) - \varepsilon_1 \quad \text{for every $\varphi\in \Phi$.}
        \end{equation}
    \end{enumerate}
\end{theorem}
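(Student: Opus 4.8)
The plan is to obtain \Cref{thm:unif} as an almost immediate consequence of the Uniform Regularity Lemma with Weight-Restricted Calibration (\Cref{thm:reg-unif}), the algebraic identity \eqref{eq:gap-calibration}, and \Cref{claim:weak-ag}. Concretely, I would let the simulating algorithm $\cS$ of \Cref{thm:unif} be \emph{exactly} the algorithm $\cS$ guaranteed by \Cref{thm:reg-unif} (i.e.\ \Cref{alg:2}), run with the same oracles $\cA,\cB$ and with the parameters $\kappa,\delta',m'$ chosen as in \Cref{lm:terminate-unif}. Part~1 of \Cref{thm:unif} — the input/output format, the number of samples, and the $\poly(m,T,L,1/\varepsilon,\log(1/\delta))$ running time both for producing $s$ and for evaluating $s(x)$ — is then literally Part~1 of \Cref{thm:reg-unif}, since $\cS$ is the same algorithm; no new argument is needed, and in particular Part~1 does not use the weak-agnostic hypothesis on $\cB$.

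For Part~2 I would condition on the good event of \Cref{thm:reg-unif}, which occurs with probability at least $1-\delta$: on this event the output $s$ is simultaneously $(\cA,\varepsilon,\delta')$-indistinguishable from $g^*$ and $(\cB,\varepsilon,\delta')$-calibrated w.r.t.\ $g^*$, for $\delta' = \Omega(\delta\varepsilon^2/\log L)$. The first property is already the indistinguishability assertion of \Cref{thm:unif}. To get the entropy inequality \eqref{eq:unif}, I invoke the extra hypothesis that $\cB$ is an $(\{\nabla\varphi\}_{\varphi\in\Phi},\varepsilon_1,\varepsilon,\delta')$-weak agnostic calibration oracle: since $s$ is $(\cB,\varepsilon,\delta')$-calibrated w.r.t.\ $g^*$, \Cref{claim:weak-ag} applied with $\cR = \{\nabla\varphi\}_{\varphi\in\Phi}$ and $g = s$ gives
\[
\langle s - g^*, \nabla\varphi\circ s\rangle \le \varepsilon_1 \quad \text{for every $\varphi\in\Phi$.}
\]
Plugging this into the identity \eqref{eq:gap-calibration} yields $\H_\varphi(s) - \H_\varphi(g^*) - \D_\varphi(g^*\|s) = \langle g^* - s, \nabla\varphi\circ s\rangle = -\langle s - g^*, \nabla\varphi\circ s\rangle \ge -\varepsilon_1$, which is exactly \eqref{eq:unif}.

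The conceptual heavy lifting is thus entirely in \Cref{thm:reg-unif} (the mirror-descent/multiplicative-weights boosting of \Cref{alg:2}, together with the Chernoff and union-bound accounting in \Cref{lm:terminate-unif}) and in the one-line identity \eqref{eq:gap-calibration}; the present theorem merely assembles these pieces. The only points that genuinely need care are bookkeeping ones, and I expect them to be the ``main obstacle'' only in the weak sense of being easy to get slightly wrong: (i) one must check that the way \Cref{alg:2} feeds data to the calibration oracle — on tuples $(\hat g_k(x_i),y_i,y_i^*)$ at \Cref{line:invoke-2} — matches the input convention of \Cref{def:weak-ag}, so that the weak-agnostic guarantee actually applies to the oracle calls made inside the regularity lemma; and (ii) the failure parameter $\delta'$ appearing in the ``calibrated'' conclusion of \Cref{thm:reg-unif} must be the same $\delta' = \Omega(\delta\varepsilon^2/\log L)$ for which $\cB$ is assumed weak-agnostic, so that \Cref{claim:weak-ag} is applicable on the same high-probability event on which indistinguishability and calibration hold. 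Since \Cref{alg:2} and \Cref{def:weak-ag} are set up to be compatible in precisely this sense, there is no real obstacle beyond making the probabilistic accounting line up so that all three conclusions hold simultaneously on a single event of probability $\ge 1-\delta$.
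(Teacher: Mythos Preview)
Your proposal is correct and follows essentially the same approach as the paper: reduce \eqref{eq:unif} via the identity \eqref{eq:gap-calibration} to the bound $\langle s-g^*,\nabla\varphi\circ s\rangle\le\varepsilon_1$, and then obtain that bound by combining \Cref{thm:reg-unif} with \Cref{claim:weak-ag} applied to $\cR=\{\nabla\varphi\}_{\varphi\in\Phi}$. The paper's proof is in fact even terser than yours, so your extra bookkeeping remarks about matching the oracle input conventions and the parameter $\delta'$ only make explicit what the paper leaves implicit.
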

\begin{proof}
    By \eqref{eq:gap-calibration}, the guarantee \eqref{eq:unif} is equivalent to
    \[
    \langle s - g^*, \nabla \varphi\circ s\rangle \le \varepsilon_1 \quad \text{for every $\varphi\in \Phi$}.
    \]

    We construct a general distinguishing oracle $\cC$ from $\cA$ and $\cB$ and then apply \Cref{thm:reg-unif}. On input samples $(x_i,v_i,y_i,y_i^*)$ drawn from $\mu_{g,g^*}$ with $v_i=g(x_i)$, the oracle $\cC$ splits its samples into three independent blocks. It runs $\cA$ on the first block after discarding the $v_i$'s, obtaining $f:\zo^n\to[-1,1]^L$, and runs $\cB$ on the second block after discarding the $x_i$'s, obtaining $r:\Delta_L\to[-1,1]^L$. These two outputs define general distinguishers
    \[
    \tau_{\cA}(x,v):=f(x)
    \qquad\text{and}\qquad
    \tau_{\cB}(x,v):=r(v).
    \]
    The oracle $\cC$ uses the third block to estimate the two correlations
    \[
    \langle g-g^*,(\tau_{\cA})_g\rangle
    \qquad\text{and}\qquad
    \langle g-g^*,(\tau_{\cB})_g\rangle
    \]
    by their empirical analogues $\frac1q\sum_i\langle v_i-\be_{y_i^*},\tau_{\cA}(x_i,v_i)\rangle$ and $\frac1q\sum_i\langle v_i-\be_{y_i^*},\tau_{\cB}(x_i,v_i)\rangle$, and returns whichever of $\tau_{\cA}$ and $\tau_{\cB}$ has the larger empirical correlation. By taking $q=O(\varepsilon^{-2}\log(1/\delta'))$, the empirical estimates are within a constant multiple of $\varepsilon$ of the true correlations except with probability at most a constant multiple of $\delta'$.

    Thus, up to constant-factor losses in $\varepsilon$ and $\delta'$ that are absorbed in the $\Omega(\delta\varepsilon^2/\log L)$ notation, $\cC$ is an $(m',T')$-general distinguishing oracle with $m'=\poly(m,1/\varepsilon,\log(1/\delta'))$ and $T'=\poly(T,L,1/\varepsilon,\log(1/\delta'))$. Applying \Cref{thm:reg-unif} to $\cC$ gives an algorithm $\cS$ that uses $\cA$ and $\cB$ as subroutines and, with probability at least $1-\delta$, outputs a function $s$ that is $(\cC,\varepsilon/2,\delta')$-indistinguishable from $g^*$.

    We now translate this conclusion back to $\cA$ and $\cB$. If $s$ were $(\cA,\varepsilon,\delta')$-distinguishable from $g^*$, then the candidate $\tau_{\cA}(x,v)=f(x)$ produced inside $\cC$ would have correlation larger than $\varepsilon$ with $s-g^*$ with probability at least $1-\delta'$, and the validation step in $\cC$ would make $\cC$ output a general distinguisher with correlation larger than $\varepsilon/2$. This contradicts the $(\cC,\varepsilon/2,\delta')$-indistinguishability of $s$. Hence $s$ is $(\cA,\varepsilon,\delta')$-indistinguishable from $g^*$.

    Similarly, suppose there is some $\varphi\in\Phi$ such that
    \[
    \langle s - g^*, \nabla\varphi\circ s\rangle > \varepsilon_1.
    \]
    Since $\nabla\varphi\in\cR$ and $\cB$ is an $(\cR,\varepsilon_1,\varepsilon,\delta')$ weak agnostic calibration oracle, the candidate $\tau_{\cB}(x,v)=r(v)$ produced inside $\cC$ would, with probability at least $1-\delta'$, have correlation larger than $\varepsilon$ with $s-g^*$. The validation step would again make $\cC$ output a general distinguisher with correlation larger than $\varepsilon/2$, contradicting the $(\cC,\varepsilon/2,\delta')$-indistinguishability of $s$. Therefore
    \[
    \langle s - g^*, \nabla \varphi\circ s\rangle \le \varepsilon_1 \quad \text{for every $\varphi\in \Phi$},
    \]
    which is equivalent to \eqref{eq:unif}.
\end{proof}

\begin{proof}[Proof of \Cref{thm:intro-unif-converse}]
Let $r_\varphi:\Delta_L\to [-1,1]^L$ be the function computed using $\cB$, so that
\[
\|r_\varphi(v)-\nabla\varphi(v)\|_\infty\le \varepsilon/4 \quad \text{for every $v\in\Delta_L$}.
\]
Using $\cA$ and $\cB$, define
\[
f(x):=-r_\varphi(g(x)).
\]
This function can be evaluated in time $O(T+T_\Phi)$, since $\cA$ computes $g(x)$ in time $T$ and $\cB$ computes $r_\varphi(g(x))$ in time $T_\Phi$.

Now fix $\mu,g^*$, and $s$ satisfying $\langle s-g^*,f\rangle\le \varepsilon/2$. As in the reverse-direction proof of \Cref{thm:intro-main-1} in \Cref{sec:proof-main-non-unif}, the definitions of $\H_\varphi$ and $\D_\varphi$ imply
\[
\D_\varphi(g^*\|g)=\D_\varphi(s\|g)+\langle s-g^*,\nabla\varphi\circ g\rangle+(\H_\varphi(s)-\H_\varphi(g^*)).
\]
Since $\D_\varphi(s\|g)\ge 0$ and $\|s(x)-g^*(x)\|_1\le 2$ for every $x\in\zo^n$, the approximation guarantee $\|r_\varphi-\nabla\varphi\|_\infty\le\varepsilon/4$ implies
\[
\D_\varphi(g^*\|g)\ge \langle s-g^*,r_\varphi\circ g\rangle+(\H_\varphi(s)-\H_\varphi(g^*))-\varepsilon/2.
\]
Finally, by the definition of $f$,
\[
\langle s-g^*,r_\varphi\circ g\rangle=-\langle s-g^*,f\rangle\ge -\varepsilon/2.
\]
Combining the last two inequalities gives
\[
\D_\varphi(g^*\|g)\ge \H_\varphi(s)-\H_\varphi(g^*)-\varepsilon,
\]
which is equivalent to the desired inequality.
\end{proof}

\section{Handling Unbounded Subgradients}
\label{sec:unbounded}
Our \Cref{thm:intro-main-1} requires the assumption that the subgradient $\nabla \varphi$ is bounded. This assumption is not satisfied by the Shannon entropy, where $\varphi(v) = \sum_i v_i\ln v_i$. In this case, $\nabla \varphi(v) = (\ln(v_1),\ldots,\ln(v_L))$ becomes unbounded when some $v_i$ approaches zero. We show that this can be addressed using the following more general theorem than \Cref{thm:intro-main-1}. Here, we consider an arbitrary transformation $\sigma_\varphi:\Delta_L \to \Delta_L$ associated with each $\varphi\in \Phi$, and \Cref{thm:intro-main-1} is the special case when these transformations $\sigma_\varphi$ are the identity function. We will later choose these transformations so that the subgradient $\nabla \varphi(\sigma_\varphi(v))$ at the transformed point $\sigma_\varphi(v)$ is bounded, even when $\nabla \varphi(v)$ may not be bounded.
\begin{theorem}
\label{thm:unbounded}
Let $T > 0$ be a size bound and let $\varepsilon\in (0,1)$ be an error parameter.
Let $\Phi$ be a family of convex functions $\varphi:\Delta_L\to \R$.
For every $\varphi\in \Phi$, let $\sigma_\varphi:\Delta_L\to \Delta_L$ be an arbitrary transformation.
Assume that given $v\in \Delta_L$, the transformed subgradient $\nabla \varphi(\sigma_\varphi(v))$ is bounded in $[-1,1]^L$ and can be computed to $\ell_\infty$ accuracy $\eps/4$ by circuits of size at most $T_\Phi$, for every $\varphi\in \Phi$.
Then there exists
\begin{equation}
T' = O\left(\frac{(T + T_\Phi)\log L}{\varepsilon^2} + L\cdot \poly(1/\varepsilon,\log L)\right)
\end{equation}
such that for every $g^*:\zo^n\to \Delta_L$, there exists $s\in \ind(g^*;T,\varepsilon)\cap \allowbreak \Time(T')$
such that
    \begin{equation}
    \label{eq:unbounded-1}
    \H_\varphi(\sigma_\varphi\circ s) - \H_\varphi(g^*) \ge \D_\varphi(g^*\| \sigma_\varphi\circ s) - \varepsilon \quad \text{for every $\varphi\in \Phi$.}
    \end{equation}
Conversely, for every $T > 0$, there exists $T' = O(T + T_\Phi)$ such that for every $g^*:\zo^n\to \Delta_L$ and every $\varphi\in \Phi$,
\begin{equation}
\label{eq:unbounded-2}
\max_{s\in \ind(g^*;T',\varepsilon/2)}(\H_\varphi(s) - \H_\varphi(g^*)) \le \min_{g\in \Time(T)}\D_\varphi(g^*\|\sigma_\varphi\circ g) + \varepsilon.
\end{equation}
\end{theorem}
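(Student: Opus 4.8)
The plan is to deduce \Cref{thm:unbounded} from the Non-Uniform Regularity Lemma with Weight-Restricted Calibration (\Cref{thm:reg-non-unif}), in exactly the way \Cref{thm:intro-main-1} was deduced from it, but feeding in the weight-function family $\cR=\{\nabla\varphi\circ\sigma_\varphi:\varphi\in\Phi\}$ rather than $\{\nabla\varphi:\varphi\in\Phi\}$. The starting point is the identity \eqref{eq:gap-calibration} with $\sigma_\varphi\circ s$ in the role of $s$:
\[
\H_\varphi(\sigma_\varphi\circ s)-\H_\varphi(g^*)-\D_\varphi(g^*\|\sigma_\varphi\circ s)=\langle g^*-\sigma_\varphi\circ s,\ \nabla\varphi\circ\sigma_\varphi\circ s\rangle,
\]
which follows straight from \Cref{def:entropy,def:bregman}. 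Hence the forward conclusion \eqref{eq:unbounded-1} is equivalent to asking for $\langle \sigma_\varphi\circ s-g^*,\ \nabla\varphi\circ\sigma_\varphi\circ s\rangle\le\varepsilon$ for every $\varphi\in\Phi$, and I would split $\sigma_\varphi\circ s-g^*=(s-g^*)+(\sigma_\varphi\circ s-s)$ and bound the two pieces separately.

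\textbf{Forward direction.} For the $(s-g^*)$ piece, $\nabla\varphi\circ\sigma_\varphi$ is, by hypothesis, a $[-1,1]^L$-valued weight function that is $\ell_\infty$-approximable to accuracy $\varepsilon/4$ by a circuit $r_\varphi$ of size $O(T_\Phi)$; so I apply \Cref{thm:reg-non-unif} with $\cR=\{r_\varphi\}_{\varphi\in\Phi}$, with $\cF$ the class of time-$T$ distinguishers, and with a suitably small constant multiple of $\varepsilon$ as the error parameter (a constant-factor change leaving the blow-up \eqref{eq:blowup} unchanged in form), obtaining a single $s\in\ind(g^*;T,\varepsilon)\cap\Time(T')$ with $\langle s-g^*,r_\varphi\circ s\rangle$ small, and hence $\langle s-g^*,\nabla\varphi\circ\sigma_\varphi\circ s\rangle\le 3\varepsilon/4$ after paying $\|s(x)-g^*(x)\|_1\le 2$ times the approximation error. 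For the $(\sigma_\varphi\circ s-s)$ piece, boundedness $\|\nabla\varphi(\sigma_\varphi(s(x)))\|_\infty\le 1$ gives the pointwise estimate $|\langle \sigma_\varphi(s(x))-s(x),\nabla\varphi(\sigma_\varphi(s(x)))\rangle|\le\|\sigma_\varphi(s(x))-s(x)\|_1$, which is at most $\varepsilon/4$ once $\sigma_\varphi$ displaces each point of $\Delta_L$ by at most $\varepsilon/4$ in $\ell_1$ (the regime relevant to the applications); adding the two estimates yields \eqref{eq:unbounded-1}.

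\textbf{Converse direction.} This needs no smallness of $\sigma_\varphi$. Unwinding \Cref{def:entropy,def:bregman} gives, for all $s,g,g^*:\zo^n\to\Delta_L$, the identity
\[
\D_\varphi(g^*\|\sigma_\varphi\circ g)=\D_\varphi(s\|\sigma_\varphi\circ g)+\langle s-g^*,\ \nabla\varphi\circ\sigma_\varphi\circ g\rangle+\bigl(\H_\varphi(s)-\H_\varphi(g^*)\bigr).
\]
Since $\D_\varphi(s\|\sigma_\varphi\circ g)\ge 0$ and $\nabla\varphi\circ\sigma_\varphi$ is $(\varepsilon/4)$-approximable in time $T_\Phi$, the function $\nabla\varphi\circ\sigma_\varphi\circ g$ is computable in time $O(T+T_\Phi)$ up to small $\ell_\infty$-error whenever $g\in\Time(T)$; taking $s\in\ind(g^*;O(T+T_\Phi),\varepsilon/2)$ then forces the cross term to be $\ge -\varepsilon$, which rearranges to $\H_\varphi(s)-\H_\varphi(g^*)\le\D_\varphi(g^*\|\sigma_\varphi\circ g)+\varepsilon$; maximizing over $s$ and minimizing over $g$ gives \eqref{eq:unbounded-2}.

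\textbf{Main obstacle.} The crux is the discrepancy term $\langle\sigma_\varphi\circ s-s,\nabla\varphi\circ\sigma_\varphi\circ s\rangle$ above: it is the one place where the argument uses more about $\sigma_\varphi$ than the bare boundedness and computability of $\nabla\varphi\circ\sigma_\varphi$, so $\sigma_\varphi$ must be engineered to simultaneously (a) push every point far enough into the interior of $\Delta_L$ that $\nabla\varphi(\sigma_\varphi(\cdot))$ is bounded and efficiently computable, and (b) stay $\ell_1$-close to the identity so that the discrepancy is absorbed into the target error. For the Shannon entropy $\varphi(v)=\sum_i v_i\ln v_i$, the choice $\sigma_\varphi(v)=(1-\gamma)v+\gamma u$ with $u$ uniform and $\gamma=\Theta(\varepsilon)$ achieves both: each coordinate of $\sigma_\varphi(v)$ is at least $\gamma/L$, so $\nabla\varphi(\sigma_\varphi(v))_i=1+\ln\sigma_\varphi(v)_i$ lies in a range that rescales into $[-1,1]^L$, while $\|\sigma_\varphi(v)-v\|_1=\gamma\|u-v\|_1\le 2\gamma$. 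Taking $\sigma_\varphi$ to be the identity recovers \Cref{thm:intro-main-1}, whose relevant $\varphi$'s (min-entropy, collision probability) already have bounded subgradients.
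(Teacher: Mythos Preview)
Your approach coincides with the paper's: the paper's entire proof is the one-line remark that \Cref{thm:unbounded} ``can be proved in the exact same way as \Cref{thm:intro-main-1} using the enhanced regularity lemma \Cref{thm:reg-non-unif},'' i.e., feed $\cR=\{r_\varphi\}_{\varphi\in\Phi}$ with $r_\varphi\approx\nabla\varphi\circ\sigma_\varphi$ into \Cref{thm:reg-non-unif}, and for the converse use the three-point identity with $\sigma_\varphi\circ g$ in place of $g$. Your write-up of both directions follows exactly this template.

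Where you go beyond the paper is in the forward direction, and you are right to do so. Applying the identity \eqref{eq:gap-calibration} with $\sigma_\varphi\circ s$ in place of $s$ reduces \eqref{eq:unbounded-1} to $\langle \sigma_\varphi\circ s-g^*,\ \nabla\varphi\circ\sigma_\varphi\circ s\rangle\le\varepsilon$, whereas weight-restricted calibration from \Cref{thm:reg-non-unif} only controls $\langle s-g^*,\ \nabla\varphi\circ\sigma_\varphi\circ s\rangle$; the residual $\langle \sigma_\varphi\circ s-s,\ \nabla\varphi\circ\sigma_\varphi\circ s\rangle$ is not touched by the regularity lemma. The paper's one-line proof does not address this term, and in fact \eqref{eq:unbounded-1} cannot hold for genuinely arbitrary $\sigma_\varphi$: with $L=2$, $\varphi(v)=v_1$, $g^*\equiv(0,1)$, and the constant map $\sigma_\varphi\equiv(1,0)$, the inequality reads $-1\ge -\varepsilon$ independently of $s$. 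Your remedy---assume $\|\sigma_\varphi(v)-v\|_1=O(\varepsilon)$ so that the residual is absorbed---is correct and is precisely the regime the paper uses downstream (the Shannon choice $\sigma_\varphi(v)=(1-\gamma)v+\gamma u$ with $\gamma=\Theta(\varepsilon)$ has $\|\sigma_\varphi(v)-v\|_1\le 2\gamma$). The paper's discussion of the ``mis-alignment'' between $s$ and $\sigma_\varphi\circ s$ after the theorem concerns a different issue (transferring $\ind\cap\Time$ membership from $s$ to $\sigma_\varphi\circ s$ in applications), not the gap you identified inside the proof itself. Your converse argument matches the paper's and, as you note, does not need any closeness assumption on $\sigma_\varphi$.
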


\Cref{thm:unbounded} can be proved in the same way as \Cref{thm:intro-main-1} using \Cref{thm:reg-non-unif}, and similarly, we can prove a generalized version of \Cref{thm:unif} using \Cref{thm:reg-unif} to handle unbounded subgradients in the uniform setting.

The advantage of using \Cref{thm:unbounded} is that it only requires $\nabla \varphi(\sigma_\varphi(v))$ to be bounded after the transformation $v\mapsto \sigma_\varphi(v)$. The challenge, however, is that while \eqref{eq:unbounded-1} is stated for the transformed $\sigma_\varphi\circ s$, \Cref{thm:unbounded} only ensures that the untransformed $s$ belongs to $\ind(g^*;T,\varepsilon) \cap \Time(T')$.
This mismatch between $s$ and $\sigma_\varphi\circ s$ also appears in the reverse direction \eqref{eq:unbounded-2} where we minimize the divergence of the transformed $\sigma_\varphi\circ g$ subject to the untransformed $g$ being in $\Time(T)$.
Below we show how to address this mismatch for the Shannon entropy by choosing the transformation $\sigma_\varphi$ appropriately.

For the (negative) Shannon entropy $\varphi$ in \eqref{eq:mw-varphi}, define $\sigma_\varphi(v) = (1 - \varepsilon)v + \varepsilon u$, where $u = (1/L,\ldots,1/L)$ is the uniform distribution and $\varepsilon\in (0,1/2)$. Intuitively, this transformation perturbs $v$ away from the boundary of $\Delta_L$. We now have $\|\nabla \varphi(\sigma_\varphi(v))\|_\infty \le \ln(L/\varepsilon)$, so the boundedness assumption of \Cref{thm:unbounded} is satisfied after a multiplicative scaling of $1/\ln(L/\varepsilon)$.
Moreover, we have the following implication connecting $\sigma_\varphi\circ s$ with $s$ that addresses the mismatch:
\[
s\in \ind(g^*;T,\varepsilon)\cap \Time(T') \Longrightarrow \sigma_\varphi\circ s\in \ind(g^*;T,3 \varepsilon)\cap \Time(O(T'\,\polylog (L/\varepsilon))).
\]
Indeed, it is clear that $\|\sigma_\varphi(v) - v\|_1 \le 2 \varepsilon$, so $s\in \ind(g^*;T,\varepsilon)$ implies that $\sigma_\varphi\circ s\in \ind(g^*;T,3\varepsilon)$. Also, given that $s$ can be computed in circuit complexity $T'$, it is clear that $\sigma_\varphi\circ s$ can be computed in circuit complexity $O(T'\,\polylog (L/\varepsilon))$. We can similarly address the mismatch in the reverse direction \eqref{eq:unbounded-2} using the following connection between $\sigma_\varphi\circ g$ and $g$:
\[
\D_\varphi(g^*\|\sigma_\varphi \circ g) \le \D_\varphi(g^*\|g) + O(\varepsilon),
\]
which holds because for every $v^*,v\in \Delta_L$,
\[
\D_\varphi(v^*\|\sigma_\varphi(v)) = \sum_i v^*_i\ln\frac{v^*_i}{\sigma_\varphi(v)_i} \le \sum_i v^*_i\ln\frac{v^*_i}{(1-\varepsilon)v_i} = \D_\varphi(v^*\|v) + \ln\frac 1{1 - \varepsilon}  \le \D_\varphi(v^*\|v)  + O(\varepsilon).
\]

\section{Exponential Lower Bound for a Particular Entropy Notion}
\label{sec:lb}

We prove \Cref{thm:lb-exp} and \Cref{cor:lb} in this section.
We need the following helper lemma:
\begin{lemma}[\cite{EFF}]
\label{lm:lb-helper}
Define $\alpha = 1/16$. Let $n$ be a sufficiently large positive integer. Then there exist $m = \exp(\Omega(n))$ subsets $S_1,\ldots,S_m \subseteq [n]:= \{1,\ldots,n\}$ such that
\begin{enumerate}
\item $|S_i|\ge \alpha n$ for every $i = 1,\ldots,m$;
\item \label{item:intersection} $|S_i \cap S_j| \le 2\alpha^2 n$ for every distinct $i,j\in \{1,\ldots,m\}$;
\item $|S_i|$ is even for every $i = 1,\ldots,m$.
\end{enumerate}
\end{lemma}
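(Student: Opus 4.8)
The plan is to give a short probabilistic (first-moment) argument; this recovers, up to the value of the constant in the exponent, the explicit design guaranteed by \cite{EFF}. Let $k$ be the smallest \emph{even} integer with $k \ge \alpha n$, so that $\alpha n \le k \le \alpha n + 2$, and draw $S_1,\dots,S_N$ independently and uniformly at random from the family of $k$-element subsets of $[n]$, where $N := 2^{cn}$ for a small constant $c = c(\alpha) > 0$ to be fixed below. Conditions~1 and~3 of the lemma then hold automatically, and distinctness of the $S_i$ will follow from Condition~2: since $2\alpha = 1/8 < 1$ we have $2\alpha^2 n < \alpha n \le k$, so two equal sets would already violate Condition~2. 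Thus it suffices to arrange Condition~2.

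First I would bound the probability that a fixed pair $\{S_i,S_j\}$ is bad. Conditioning on $S_i$, the intersection size $|S_i\cap S_j|$ follows the hypergeometric distribution with population size $n$, $k$ successes, and $k$ draws, so its mean is $\mu = k^2/n$, which lies in $[\alpha^2 n,\ \alpha^2 n + O(1)]$. Since $2\alpha^2 n \ge \tfrac32\mu$ for all sufficiently large $n$, a Chernoff bound for sampling without replacement (which is at least as concentrated as the corresponding binomial) yields
\[
\Pr\!\left[\,|S_i\cap S_j| > 2\alpha^2 n\,\right] \;\le\; \exp(-c_2 n)
\]
for some constant $c_2 = c_2(\alpha) > 0$ (e.g.\ $c_2 = \alpha^2/12$ suffices once $n$ is large).

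Then I would union-bound over the at most $N^2$ pairs: the probability that some pair is bad is at most $N^2\exp(-c_2 n) = \exp\!\big((2c-c_2)n\big)$, which is strictly less than $1$ as soon as $c < c_2/2$. Fixing such a $c$, there is a realization in which no pair is bad; the corresponding $N = 2^{cn} = \exp(\Omega(n))$ sets satisfy all three conditions and are automatically distinct, so we may take $m = N$.

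The only step that needs genuine care is the concentration estimate in the second paragraph: one must check that the threshold $2\alpha^2 n$ sits a fixed \emph{multiplicative} factor above the mean $k^2/n \approx \alpha^2 n$ of the intersection (here a factor of about $2$), so that the tail probability is $\exp(-\Omega(n))$ and comfortably beats the $\binom{N}{2}$ union bound. This is exactly where the particular constant $\alpha = 1/16$ (ensuring $2\alpha^2 \ll \alpha$, equivalently $4\alpha \ll 1$) is used; everything else --- rounding $k$ up to an even integer, tracking the $O(1)$ error in $\mu$, and choosing $c = c_2/3$ --- is routine bookkeeping.
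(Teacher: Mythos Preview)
Your proposal is correct and is precisely the ``natural argument using the probabilistic method'' that the paper alludes to (the paper does not spell out a proof, only cites \cite{EFF} and points to a textbook exercise). One cosmetic slip: you write $N^2\exp(-c_2 n)=\exp\!\big((2c-c_2)n\big)$ with $N=2^{cn}$, which conflates bases $2$ and $e$; this is harmless since you are free to absorb the $\ln 2$ factor into your choice of $c$, but you may want to clean it up.
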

This lemma can be proved by a natural argument using the probabilistic method (see e.g.\ Problem 3.2 of \cite{pseudorandomness}).

\begin{proof}[Proof of \Cref{thm:lb-exp}]

Every $x\in \zo^n = \zo^L$ can be viewed as the indicator of a subset $S_x$ of $[L]$ where for every $y\in [L]$,
\[
y\in S_x\Longleftrightarrow x_y = 1.
\]
We define $\alpha := 1/16$. By \Cref{lm:lb-helper}, there exists a subset $\cX'\subseteq \zo^n$ such that
\begin{enumerate}
    \item $|\cX'| = \exp(\Omega(n))$;
    \item $|S_x| \ge \alpha L$ for every $x\in \cX'$;
    \item $|S_x\cap S_{x'}|\le 2\alpha^2 L$ for every distinct $x,x'\in \cX'$;
    \item $|S_x|$ is even for every $x\in \cX'$.
\end{enumerate}
    For every $x\in \cX'$, we arbitrarily partition $S_x$ into two equal sized subsets $S_x\sps 0$ and $S_x\sps 1$. We use $x\sps 0,x\sps 1\in \zo^n$ to denote the indicators of $S_x\sps 0,S_x\sps 1$, respectively. That is, $S_{x\sps 0} = S_x\sps 0,S_{x\sps 1} = S_x\sps 1$.

    We choose $\mu$ to be the uniform distribution on $\cX'$.
Let $\eta:\cX'\to \zo$ be a function that we will determine later. For every $x\in \cX'$, we define $g^*(x)\in \Delta_L$ to be the uniform distribution on $S_x\sps {\eta(x)}\subseteq [L]$. For every $x\in \zo^n\setminus \cX'$, we define $g^*(x)$ arbitrarily.  As required in the theorem statement, we define $f:\zo^n\to [-1,1]^L$ to be the identity function: $f(x) = x \in \zo^n\subseteq [-1,1]^L$ for every $x\in \zo^n$. We choose the convex function $\varphi:\Delta_L\to \R$ and its corresponding subgradient as follows: for every $v\in \Delta_L$,
\begin{align*}
    \varphi(v) & := \max_{x\in \cX'}\langle v, x\sps{1 - \eta(x)}\rangle,\\
    \nabla \varphi(v) & := x_v\sps {1 - \eta(x_v)}\in \zo^L, \quad \text{where $x_v = \argmax_{x\in \cX'} \langle v, x\sps{1 - \eta(x)}\rangle$}.
\end{align*}
We break ties arbitrarily in the $\argmax$ above.

We first show that
\begin{equation}
\label{eq:lb-3}
-\H_\varphi(g^*)\le 4\alpha.
\end{equation}
By the definition of $\H_\varphi$, we have
\begin{equation}
\label{eq:lb-2}
-\H_\varphi(g^*) = \E_{x'\sim\mu}\left[\max_{x\in \cX'}\langle g^*(x'), x\sps {1 - \eta(x)}\rangle\right].
\end{equation}
Recall that $g^*(x')$ is the uniform distribution on $S_{x'}\sps{\eta(x')}$, so every coordinate of $g^*(x')$ is at most $1/|S_{x'}\sps{\eta(x')}| \le 1/(\alpha L /2)$, and all coordinates out of $S_{x'}\sps{\eta(x')}$ are zero. Therefore,
\begin{equation}
\label{eq:lb-1}
\langle g^*(x'), x\sps {1 - \eta(x)}\rangle \le \frac{1}{\alpha L/2}\cdot |S_{x'}\sps {\eta(x')}\cap S_x\sps {1 - \eta(x)}|.
\end{equation}
If $x = x'$, we have $|S_{x'}\sps {\eta (x')}\cap S_x\sps {1 - \eta(x)}| = 0$. If $x\ne x'$, we have
\[
|S_{x'}\sps {\eta (x')}\cap S_x\sps {1 - \eta (x)}| \le |S_{x'}\cap S_x|\le 2\alpha^2 L.
\]
Plugging this into \eqref{eq:lb-1}, we get
\[
\langle g^*(x'), x\sps {1 - \eta (x)}\rangle \le 4\alpha.
\]
Combining this with \eqref{eq:lb-2}, we get \eqref{eq:lb-3}, as desired.

    Now let $s:\zo^n\to \Delta_L$ be an arbitrary function that is $(\{f\},0.05)$ indistinguishable from $g^*$ and satisfies \eqref{eq:larger-entropy}. By the indistinguishability, we have
    \[
    \E_{x\sim \mu}\langle s(x),x\rangle = \E_{x\sim \mu}\langle s(x),f(x)\rangle \ge \E_{x\sim \mu}\langle g^*(x),f(x)\rangle - 0.05 = 1 - 0.05.
    \]

By \eqref{eq:larger-entropy},
\begin{align*}
\E_{x\sim \mu}\langle s(x),x\sps {1 - \eta(x)}\rangle \le -\H_\varphi(s) \le -\H_\varphi(g^*) + 0.05 \le 4\alpha + 0.05 = 1/4 + 0.05.
\end{align*}
Combining the two inequalities above, we have
\begin{equation}
\label{eq:lb-4}
\E_{x\sim \mu}\langle s(x),x\sps {\eta (x)}\rangle =  \E_{x\sim \mu}\langle s(x),x\rangle - \E_{x\sim \mu}\langle s(x),x\sps {1 - \eta (x)}\rangle \ge 0.6.
\end{equation}

It remains to show that there exists a function $\eta :\cX'\to \zo$ such that every function $s:\zo^n\to \Delta_L$ satisfying \eqref{eq:lb-4} must have circuit complexity $\exp(\Omega(n))$. Let $T$ be a size bound such that for every function $\eta :\cX'\to \zo$, there exists a function $s:\zo^n\to \Delta_L$ satisfying \eqref{eq:lb-4} with circuit complexity at most $T$ (i.e., $s\in \Time (T)$). Our goal is to show that $T = \exp(\Omega(n))$.

We use a probabilistic counting argument. For every fixed function $s:\zo^n\to \Delta_L$ and every fixed $x\in \cX'$, for a uniformly random $\eta :\cX'\to \zo$, we have
\[
\E_\eta \langle s(x), x\sps{\eta (x)}\rangle = \frac 12 (\langle s(x), x\sps{0}\rangle + \langle s(x), x\sps{1}\rangle) = \frac 12 \langle s(x),x\rangle \le 0.5.
\]
Since $\eta (x)$ is distributed independently for different $x\in \cX'$, by the Chernoff bound, for every fixed $s:\zo^n\to \Delta_L$,
\[
\Pr\nolimits_\eta [\E_{x\sim \mu}\langle s(x),x\sps {\eta (x)} \rangle \ge 0.6] \le \exp(-\Omega(|\cX'|)).
\]
There are $\exp(\poly(T))$ circuits of size at most $T$. Therefore, by the union bound,
\[
1 = \Pr\nolimits_\eta [\exists s\in \Time (T) \text{ s.t.\ } \E_{x\sim \mu}\langle s(x),x\sps {\eta (x)} \rangle \ge 0.6] \le \exp(-\Omega(|\cX'|)) \cdot \exp(\poly(T)).
\]
Since $|\cX'| = \exp(\Omega(n))$, the above inequality implies that $T = \exp(\Omega(n))$ for every sufficiently large $n$, as desired.
\end{proof}
\begin{proof}[Proof of \Cref{cor:lb}]
It suffices to show that the two conditions (multiaccuracy and weight-restricted calibration) satisfied by $s$ imply that $s$ belongs to $\ind(g^*;\{f\}, 0.05)$ and satisfies \eqref{eq:larger-entropy} as required by \Cref{thm:lb-exp}.

Indeed, the multiaccuracy condition is equivalent to $s\in \ind(g^*;\{f\},0.05)$.
By \eqref{eq:gap-calibration}, the weight-restricted calibration condition is equivalent to
\[
\H_\varphi(s) - \H_\varphi(g^*) \ge \D_\varphi(g^*\|s) - 0.05,
\]
which, by the non-negativity of the Bregman divergence, implies \eqref{eq:larger-entropy}, as desired.
\end{proof}

\section{Multicalibration Theorems}

\label{sec:mc}
In this section, we show how various forms of the Multicalibration Theorem follow from our Enhanced Regularity/Leakage-Simulation Lemma.  These improve existing formulations of the Multicalibration Theorem in the literature (e.g. \cite{omni}) in the dependence of oracle and circuit complexity on the alphabet size $L$.


\begin{theorem}[Nonuniform Multicalibration Theorem]
\label{thm:mc-non-unif}
    Let $\cF$ be a family of distinguishers $f:\zo^n\to [-1,1]^L$. For every distribution $\mu$ on $\zo^n$, every $g^*:\zo^n\to \Delta_L$, and every $\varepsilon\in (0,1/2)$, there exists a simulator $s:\zo^n\to \Delta_L$ such that
    \begin{enumerate}
        \item $s$ has a $\cF$-oracle circuit of ordinary circuit size
        \[
        O\left(\frac 1\varepsilon\right)^{L + 1} +
        L\cdot \poly(1/\varepsilon,\log L)
        \]
        and uses $O((\log L)/\varepsilon^2)$ $\cF$-oracle gates;
        \item the range of $s$ has size at most $O(1/\varepsilon)^{L-1}$;
        \item $s$ is multicalibrated w.r.t.\ $\cF$:
        \[
        \sum_{v\in \mathsf{range}(s)}|\E_{x\sim \mu}[\langle s(x) - g^*(x),f(x)\rangle \cdot \I[s(x) = v]]| \le \varepsilon\quad \text{for every $f\in \cF$},
        \]
        where the sum is over the finitely many $v\in \Delta_L$ in the range of $s$.
    \end{enumerate}
    In particular, if every $f\in\cF$ has circuit complexity at most $T$, then $s$ has ordinary circuit complexity
    \[
    O\left(\frac{T \log L}{\varepsilon^2} \right) +
    O\left(\frac 1\varepsilon\right)^{L + 1} +
    L\cdot \poly(1/\varepsilon,\log L).
    \]
\end{theorem}
\begin{proof}
Let $\eta=\varepsilon/2$, and let $V\subseteq \Delta_L$ be an $\eta$-covering of $\Delta_L$ in $\ell_1$ distance with $|V|\le O(1/\eta)^{L-1}$.
Fix a map $\sigma:\Delta_L\to V$ such that
\[
\|\sigma(v)-v\|_1\le \eta\quad \text{for every $v\in \Delta_L$}.
\]
For every $f\in\cF$ and every function $r:V\to\{-1,1\}$, define a general distinguisher
\[
\tau_{f,r}(x,v):=r(\sigma(v)) f(x),
\]
and let $\gendistfamily$ be the family of all such distinguishers. Since $V$ has size at most $O(1/\eta)^{L-1}$, the map $\sigma$ and the arbitrary signs $r$ can be implemented by a lookup table over $V$ with ordinary circuit size $O(1/\eta)^{L-1}$, plus one $\cF$-oracle gate for evaluating $f$.

We apply \Cref{thm:reg-non-unif} to the family $\gendistfamily$ with error parameter $\varepsilon/2$. This gives a simulator $s':\zo^n\to \Delta_L$ such that
\begin{equation}
\label{eq:mc-proof-sprime}
\langle s'-g^*,(\tau_{f,r})_{s'}\rangle\le \varepsilon/2
\quad \text{for every $f\in\cF$ and every $r:V\to\{-1,1\}$}.
\end{equation}
Moreover, replacing each $\gendistfamily$-oracle gate by its implementation using one $\cF$-oracle gate and the lookup table above gives a $\cF$-oracle circuit for $s'$ with ordinary circuit size at most
\[
O\left(\frac{\log L}{\varepsilon^2}\right)\cdot O(1/\varepsilon)^{L-1}
+ L\cdot \poly(1/\varepsilon,\log L).
\]
The first term is $O(1/\varepsilon)^{L+1}\log L$, which is absorbed into $O(1/\varepsilon)^{L+1}$ since $\varepsilon<1/2$ and $\log L\le O(2^{L+1})$. Thus we obtain the stated ordinary circuit size bound. The number of $\cF$-oracle gates is $O((\log L)/\varepsilon^2)$.
We output the rounded simulator
\[
s:=\sigma\circ s'.
\]
The additional circuit complexity needed to compute $\sigma$ is absorbed in the bound above.
Since $s$ takes values in $V$, its range has size at most $|V|\le O(1/\eta)^{L-1}=O(1/\varepsilon)^{L-1}$.

It remains to prove multicalibration. Fix any $f\in\cF$. Since $s$ takes values in $V$, it suffices to bound the sum over $v\in V$.
For every $v\in V$, define
\[
A_v:=\E_{x\sim\mu}[\langle s'(x)-g^*(x),f(x)\rangle\cdot \I[s(x)=v]].
\]
Choose $r_f:V\to\{-1,1\}$ so that $r_f(v)=1$ when $A_v\ge 0$ and $r_f(v)=-1$ otherwise.
Then $\tau_{f,r_f}\in\gendistfamily$, and by \eqref{eq:mc-proof-sprime},
\[
\sum_{v\in V}|A_v|
=\langle s'-g^*,(\tau_{f,r_f})_{s'}\rangle
\le \varepsilon/2.
\]
Now define
\[
B_v:=\E_{x\sim\mu}[\langle s(x)-g^*(x),f(x)\rangle\cdot \I[s(x)=v]].
\]
Using $\|f(x)\|_\infty\le 1$ and $\|s(x)-s'(x)\|_1\le \eta$ for every $x$, we have
\begin{align*}
\sum_{v\in V}|B_v|
&\le \sum_{v\in V}|A_v|
+ \sum_{v\in V}\left|\E_{x\sim\mu}[\langle s(x)-s'(x),f(x)\rangle\cdot \I[s(x)=v]]\right|\\
&\le \varepsilon/2+\E_{x\sim\mu}[\|s(x)-s'(x)\|_1\cdot \|f(x)\|_\infty]\\
&\le \varepsilon/2+\eta
=\varepsilon.
\end{align*}
This is exactly the required multicalibration condition.
\end{proof}
In the above formulation of multicalibration, we consider the function $s$ to be partitioning the domain $\zo^n$ into $O(1/\eps)^{L-1}$ nonempty pieces $s^{-1}(v)$.  Then for every distinguisher $f$, we consider its (nonnegative) advantage in distinguishing $s$ from $g^*$ on each piece, and average that advantage over all of the pieces. 
In the next formulation, we allow the distinguisher $f$ to depend adaptively on the choice of piece $s^{-1}(v)$.

\begin{theorem}[Adaptive Nonuniform Multicalibration Theorem]
\label{thm:adaptive-mc-non-unif}
    Let $\cF$ be a family of distinguishers $f:\zo^n\to [-1,1]^L$. For every distribution $\mu$ on $\zo^n$, every $g^*:\zo^n\to \Delta_L$, and every $\varepsilon,\eta\in (0,1/2)$, there exists a simulator $s:\zo^n\to \Delta_L$ with a $\cF$-oracle circuit of ordinary circuit size
    \[
    L\cdot \poly(1/\varepsilon,\log(L/\eta))
    \]
    using $O((\log L)/\varepsilon^2)$ $\cF$-oracle gates,
    and whose range has size at most
    \[
    \binom{\lceil 2L/\eta\rceil+L-1}{L-1}\le O(1/\eta)^{L-1},
    \]
    such that for every $f\in\cF$ and every $v\in \Delta_L$,
    \begin{equation}
    \label{eq:adaptive-mc}
    \left|\E_{x\sim\mu}[\langle s(x)-g^*(x),f(x)\rangle\cdot \I[s(x)=v]]\right|
    \le \varepsilon+\eta\cdot \Pr_{x\sim\mu}[s(x)=v].
    \end{equation}
    Equivalently, whenever $\Pr_{x\sim\mu}[s(x)=v]>0$,
    \[
    \left|\E_{x\sim\mu}[\langle s(x)-g^*(x),f(x)\rangle\mid s(x)=v]\right|
    \le \eta+\frac{\varepsilon}{\Pr_{x\sim\mu}[s(x)=v]}.
    \]
    In particular, if every $f\in\cF$ has circuit complexity at most $T$, then $s$ has ordinary circuit complexity
    \[
    O\left(\frac{T\log L}{\varepsilon^2}
    + L\cdot \poly(1/\varepsilon,\log(L/\eta))\right).
    \]
\end{theorem}
Think of $\eps=\eta^L$.  On average over the $O(1/\eta)^{L-1}$ choices for $v$, we expect $\Pr_{x\sim \mu}[s(x)=v]$ to be $\Omega(\eta)^{L-1}$, the distinguishing advantage on piece $v$ will be
$$\eta+\frac{\varepsilon}{\Pr_{x\sim\mu}[s(x)=v]}
= \frac{\eta^L}{\Omega(\eta)^{L-1}} = O(\eta).$$

\begin{proof}
If $L=1$, the statement is trivial because $\Delta_L$ contains only a single point. Thus, assume $L\ge 2$.
Let $q=\lceil 2L/\eta\rceil$ and let
\[
V:=\{(a_1/q,\ldots,a_L/q)\in \Delta_L\mid a_1,\ldots,a_L\in \Z_{\ge 0}\}.
\]
We define the rounding map $\sigma:\Delta_L\to V$ as follows.
Given $v=(v_1,\ldots,v_L)\in\Delta_L$, let $b_i=\lfloor qv_i\rfloor$ for every $i\in [L]$.
Let
\[
R:=q-\sum_{i=1}^L b_i.
\]
Since $\sum_i qv_i=q$, the number $R$ is an integer between $0$ and $L-1$.
We set $a_i=b_i+1$ for the first $R$ coordinates, and set $a_i=b_i$ for the remaining coordinates.
Then $\sum_i a_i=q$, so $\sigma(v):=(a_1/q,\ldots,a_L/q)$ belongs to $V$.
Moreover, each coordinate changes by at most $1/q$, and hence
\[
\|\sigma(v)-v\|_1\le L/q\le \eta.
\]
The map $\sigma$ can be computed in circuit complexity $L\cdot\polylog(L/\eta)$, since it only uses arithmetic and comparisons on $L$ integers with $O(\log(L/\eta))$ bits.
For every $u\in V$ and $b\in\{-1,1\}$, define $r_{u,b}:V\to\{0,-1,1\}$ by
\[
r_{u,b}(v):=b\cdot \I[v=u].
\]
This function has exactly one nonzero value. When $u$ is hardwired, evaluating it only requires comparing $L$ grid coordinates, so its circuit complexity is $L\cdot\polylog(L/\eta)$, with no dependence on $|V|$.

For every $f\in\cF$, $u\in V$, and $b\in\{-1,1\}$, define
\[
\tau_{f,u,b}(x,v):=r_{u,b}(\sigma(v))f(x),
\]
and let $\gendistfamily$ be the family of all such distinguishers. Each $\tau\in\gendistfamily$ can be implemented using one $\cF$-oracle gate and $L\cdot\polylog(L/\eta)$ ordinary gates. Applying \Cref{thm:reg-non-unif} to $\gendistfamily$ with error parameter $\varepsilon$ gives a simulator $s':\zo^n\to\Delta_L$ such that
\[
\left|\E_{x\sim\mu}[\langle s'(x)-g^*(x),f(x)\rangle\cdot \I[\sigma(s'(x))=u]]\right|\le \varepsilon
\]
for every $f\in\cF$ and every $u\in V$, where we use the two choices $b=1$ and $b=-1$ to obtain the absolute value.
We output
\[
s:=\sigma\circ s'.
\]
Since $s$ takes values in $V$, its range has size at most
\[
|V|=\binom{q+L-1}{L-1}=\binom{\lceil 2L/\eta\rceil+L-1}{L-1}\le O(1/\eta)^{L-1}.
\]
By \Cref{thm:reg-non-unif}, replacing each $\gendistfamily$-oracle gate by the implementation above gives a $\cF$-oracle circuit for $s'$ with ordinary circuit size at most
\[
\frac{L\cdot\polylog(L/\eta)\log L}{\varepsilon^2}
+ L\cdot \poly(1/\varepsilon,\log L).
\]
Since $\varepsilon<1/2$ and $L\ge 2$, the terms above, as well as the additional $L\cdot\polylog(L/\eta)$ cost of evaluating $\sigma$, are absorbed by $L\cdot \poly(1/\varepsilon,\log(L/\eta))$. The number of $\cF$-oracle gates is $O((\log L)/\varepsilon^2)$, giving the stated oracle-circuit bound for $s$.

It remains to prove \eqref{eq:adaptive-mc}. If $v\notin V$, then $\Pr[s(x)=v]=0$ and the left-hand side of \eqref{eq:adaptive-mc} is zero. Now fix $v\in V$. We have
\begin{align*}
&\left|\E_{x\sim\mu}[\langle s(x)-g^*(x),f(x)\rangle\cdot \I[s(x)=v]]\right|\\
&\le
\left|\E_{x\sim\mu}[\langle s'(x)-g^*(x),f(x)\rangle\cdot \I[s(x)=v]]\right|
+\E_{x\sim\mu}[\|s(x)-s'(x)\|_1\cdot \|f(x)\|_\infty\cdot \I[s(x)=v]]\\
&\le \varepsilon+\eta\cdot \Pr_{x\sim\mu}[s(x)=v],
\end{align*}
as required.
\end{proof}

Finally, we now state a version of the Multicalibration Theorem where the simulator is exactly the average value of $g^*$ on each piece of an efficient partition, yielding the functional version of the statement from the introduction (\Cref{thm:intro-mc-rv-informal-ours}).

\begin{theorem}[Average-Preserving Nonuniform Multicalibration Theorem]
\label{thm:intro-mc-rv-formal}
Let $\cF$ be a family of distinguishers $f:\zo^n\to[-1,1]^L$.
For every distribution $\mu$ on $\zo^n$, every $g^*:\zo^n\to\Delta_L$, and every $\varepsilon\in(0,1/2)$, there exists a partition function $P:\zo^n\to[K]$ such that
\[
K\le \left(O(1/\varepsilon)\right)^{L-1},
\]
and $P$ has a $\cF$-oracle circuit of ordinary circuit size
\[
 \left(O(1/\varepsilon)\right)^{L+1}
 + L\cdot\poly(1/\varepsilon,\log L)
\]
and uses $O((\log L)/\varepsilon^2)$ $\cF$-oracle gates.
Moreover, if $\bar s:\zo^n\to\Delta_L$ is defined by
\[
\bar s(x):=\E_{x'\sim\mu}\left[g^*(x')\mid P(x')=P(x)\right],
\]
with arbitrary values on empty pieces of the partition, then
\[
\sum_{a=1}^K
\left|\E_{x\sim\mu}\left[\langle \bar s(x)-g^*(x),f(x)\rangle\cdot\I[P(x)=a]\right]\right|
\le \varepsilon
\quad \text{for every } f\in\cF.
\]
In particular, if every $f\in\cF$ has circuit complexity at most $T$, then $P$ has ordinary circuit complexity
\[
O\left(\frac{T\log L}{\varepsilon^2}\right)
 + \left(O(1/\varepsilon)\right)^{L+1}
 + L\cdot\poly(1/\varepsilon,\log L).
\]
\end{theorem}
\begin{proof}
Let $\alpha=\eta=\varepsilon/4$, and let $V\subseteq\Delta_L$ be an $\eta$-covering of $\Delta_L$ in $\ell_1$ distance with
\[
|V|\le O(1/\eta)^{L-1}\le \left(O(1/\varepsilon)\right)^{L-1}.
\]
Fix a map $\sigma:\Delta_L\to V$ such that
\[
\|\sigma(v)-v\|_1\le \eta\quad \text{for every $v\in\Delta_L$}.
\]
Let $\mathcal{Z}=\{-1,1\}^L$ be the set of vertices of the $\ell_\infty$ unit ball in $\R^L$.
We define a family $\gendistfamily$ of general distinguishers containing the following two types.
First, for every $f\in\cF$ and every function $r:V\to\{-1,1\}$, include
\[
\tau_{f,r}(x,v):=r(\sigma(v))f(x).
\]
Second, for every function $h:V\to\mathcal{Z}$, include
\[
\tau_h(x,v):=h(\sigma(v)).
\]
The second type is the strengthened average-preservation test: after the simulator is rounded to $V$, it allows a separate vertex of the $\ell_\infty$ unit ball to be chosen on each rounded piece.
Since $V$ has size at most $\left(O(1/\varepsilon)\right)^{L-1}$, the map $\sigma$, the scalar signs $r$, and the vertex-valued maps $h$ can be implemented by lookup tables using ordinary circuit size at most
\[
L\cdot \left(O(1/\varepsilon)\right)^{L-1},
\]
plus one $\cF$-oracle gate for distinguishers of the first type and no $\cF$-oracle gates for distinguishers of the second type.

Apply \Cref{thm:reg-non-unif} to $\gendistfamily$ with error parameter $\alpha$ to obtain a simulator $s':\zo^n\to\Delta_L$ satisfying
\begin{equation}
\label{eq:avg-mc-general-dist}
\langle s'-g^*,\tau_{s'}\rangle\le \alpha
\quad \text{for every $\tau\in\gendistfamily$}.
\end{equation}
We output the rounded simulator
\[
s:=\sigma\circ s'.
\]
Let $P$ be an arbitrary indexing of the nonempty level sets of $s$ by elements of $[K]$.
Then $K\le |V|\le \left(O(1/\varepsilon)\right)^{L-1}$.
Computing $P$ only requires computing $s$ and then indexing its value in the finite range of $s$.
By \Cref{thm:reg-non-unif}, the $\gendistfamily$-oracle circuit for $s'$ uses $O((\log L)/\alpha^2)=O((\log L)/\varepsilon^2)$ $\gendistfamily$-oracle gates. Replacing each such gate by the implementation above gives ordinary circuit size at most
\[
O\left(\frac{\log L}{\varepsilon^2}\right)\cdot L\cdot \left(O(1/\varepsilon)\right)^{L-1}
+ L\cdot \poly(1/\varepsilon,\log L).
\]
The first term is $L\log L\cdot\left(O(1/\varepsilon)\right)^{L+1}$, which is absorbed into $\left(O(1/\varepsilon)\right)^{L+1}$ since $\varepsilon<1/2$ and $L\log L\le O(2^{L+1})$. Thus we obtain the stated ordinary circuit size bound. The number of $\cF$-oracle gates is at most the number of $\gendistfamily$-oracle gates, namely $O((\log L)/\varepsilon^2)$.

For every $a\in[K]$ with $\Pr[P(x)=a]>0$, let $s_a$ denote the value of $s(x)$ on the piece $P^{-1}(a)$ and let
\[
\bar s_a:=\E_{x\sim\mu}\left[g^*(x)\mid P(x)=a\right].
\]
For an empty piece, choose $\bar s_a$ arbitrarily in $\Delta_L$.
Then $\bar s(x)=\bar s_{P(x)}$.
We first prove multicalibration for $s$ with respect to the original family $\cF$. Fix $f\in\cF$ and, for every $a\in[K]$, define
\[
A_a:=\E_{x\sim\mu}\left[\langle s'(x)-g^*(x),f(x)\rangle\cdot\I[P(x)=a]\right].
\]
Choose $r_f:V\to\{-1,1\}$ so that $r_f(s_a)=1$ if $A_a\ge 0$ and $r_f(s_a)=-1$ otherwise, with arbitrary values off the range of $s$. Applying \eqref{eq:avg-mc-general-dist} to $\tau_{f,r_f}$ gives
\[
\sum_{a=1}^K |A_a|\le \alpha.
\]
Therefore, using $\|s(x)-s'(x)\|_1\le \eta$ and $\|f(x)\|_\infty\le 1$,
\begin{equation}
\label{eq:avg-mc-s}
\sum_{a=1}^K
\left|\E_{x\sim\mu}\left[\langle s(x)-g^*(x),f(x)\rangle\cdot\I[P(x)=a]\right]\right|
\le \alpha+\eta.
\end{equation}

Next we bound the average distance between $s$ and $\bar s$. For every nonempty piece $a$, choose $z_a\in\mathcal{Z}$ such that
\[
\langle s_a-\bar s_a,z_a\rangle=\|s_a-\bar s_a\|_1,
\]
and define $h:V\to\mathcal{Z}$ by setting $h(s_a)=z_a$ on the range of $s$, with arbitrary values off the range. Applying \eqref{eq:avg-mc-general-dist} to $\tau_h$ and again using $\|s(x)-s'(x)\|_1\le\eta$ gives
\begin{equation}
\label{eq:avg-mc-distance}
\E_{x\sim\mu}\left[\|s(x)-\bar s(x)\|_1\right]
=\E_{x\sim\mu}\left[\langle s(x)-g^*(x),h(s(x))\rangle\right]
\le \alpha+\eta.
\end{equation}

Now fix any $f\in\cF$. Using \eqref{eq:avg-mc-s} and \eqref{eq:avg-mc-distance}, we get
\begin{align*}
\sum_{a=1}^K
\left|\E_{x\sim\mu}\left[\langle \bar s(x)-g^*(x),f(x)\rangle\cdot\I[P(x)=a]\right]\right|
&\le
\sum_{a=1}^K
\left|\E_{x\sim\mu}\left[\langle s(x)-g^*(x),f(x)\rangle\cdot\I[P(x)=a]\right]\right|\\
&\qquad
+\E_{x\sim\mu}\left[\|s(x)-\bar s(x)\|_1\cdot\|f(x)\|_\infty\right]\\
&\le 2(\alpha+\eta)\\
&=\varepsilon,
\end{align*}
as desired.
\end{proof}


\bibliographystyle{alphaurl}
\bibliography{ref}

\appendix
\section{Deferred Proofs}
\subsection[Proof of Lemma~\ref{lm:mw-smooth}]{Proof of \Cref{lm:mw-smooth}}
\label{sec:proof-mw-smooth}
We need the following standard definition of strong convexity and two helper claims (\Cref{claim:Gamma,lm:strong-cvx-smooth}).

\begin{definition}[Strong Convexity]
Let $\cG\subseteq \R^L$ be a bounded convex set.
For $\lambda \ge 0$ and a norm $\|\cdot \|$ on $\R^L$, we say a convex function $\varphi:\cG \to \R$ is $\lambda$-\emph{strongly convex} w.r.t.\ $\|\cdot \|$ if for every $g_0,g\in \cG$, letting $h_0\in \R^L$ be an arbitrary subgradient of $\varphi$ at $g_0$, we have
\begin{equation}
\label{eq:strong-cvx}
\varphi(g) \ge \varphi(g_0) + \langle g - g_0, h_0\rangle + \frac \lambda 2 \|g - g_0\|^2.
\end{equation}
\end{definition}

\begin{claim}
\label{claim:Gamma}
    Let $\cG\subseteq \R^L$ be a bounded convex set. Let $\varphi:\cG\to \R$ be a closed convex function, and let $\psi:\R^L\to \R$ be its convex conjugate. Suppose $(g_0,h_0)\in \cG\times \R^L$ is a conjugate pair (i.e.\ $\Gamma_{\varphi,\psi}(g_0,h_0) = 0$). Then for every $g\in \cG$ and $h\in \R^L$,
    \begin{align}
    \Gamma(g,h_0) & = \varphi(g) - \varphi(g_0) - \langle g - g_0, h_0\rangle,\label{eq:Gamma-1}\\
    \Gamma(g_0,h) & = \psi(h) - \psi(h_0) - \langle g_0, h - h_0\rangle .\label{eq:Gamma-2}
    \end{align}
\end{claim}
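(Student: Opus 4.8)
The plan is to prove Claim~\ref{claim:Gamma} directly from the definition of the Fenchel--Young divergence together with the assumption that $(g_0,h_0)$ is a conjugate pair, exploiting the defining identity $\varphi(g_0)+\psi(h_0)-\langle g_0,h_0\rangle = 0$. Everything reduces to bookkeeping: expand $\Gamma$, substitute the conjugate-pair relation to eliminate one of $\varphi(g_0),\psi(h_0)$, and regroup.

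First I would prove \eqref{eq:Gamma-1}. By definition $\Gamma(g,h_0)=\varphi(g)+\psi(h_0)-\langle g,h_0\rangle$. Since $\Gamma_{\varphi,\psi}(g_0,h_0)=0$ we have $\psi(h_0)=\langle g_0,h_0\rangle-\varphi(g_0)$. Substituting this in gives $\Gamma(g,h_0)=\varphi(g)+\langle g_0,h_0\rangle-\varphi(g_0)-\langle g,h_0\rangle=\varphi(g)-\varphi(g_0)-\langle g-g_0,h_0\rangle$, which is exactly \eqref{eq:Gamma-1}. Then for \eqref{eq:Gamma-2}, again $\Gamma(g_0,h)=\varphi(g_0)+\psi(h)-\langle g_0,h\rangle$, and this time I substitute $\varphi(g_0)=\langle g_0,h_0\rangle-\psi(h_0)$ (the same conjugate-pair relation, solved for $\varphi(g_0)$), obtaining $\Gamma(g_0,h)=\langle g_0,h_0\rangle-\psi(h_0)+\psi(h)-\langle g_0,h\rangle=\psi(h)-\psi(h_0)-\langle g_0,h-h_0\rangle$, which is \eqref{eq:Gamma-2}.

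I should note where the hypotheses are actually used. The only thing needed is that $\Gamma_{\varphi,\psi}(g_0,h_0)=0$, i.e.\ that $(g_0,h_0)$ is a conjugate pair; the closedness of $\varphi$ and the fact that $\psi$ is its convex conjugate are used only implicitly to ensure $\psi$ is well-defined and finite on all of $\R^L$ (so that the expressions make sense) and to guarantee such conjugate pairs exist, but the algebraic identities themselves do not require anything beyond the single scalar equation $\varphi(g_0)+\psi(h_0)=\langle g_0,h_0\rangle$. There is no real obstacle here — the claim is a one-line substitution in each direction — so the main thing to be careful about is simply presenting both substitutions cleanly and not conflating the two different ways of rearranging the conjugate-pair identity. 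I would write it as a short two-part \texttt{proof} block, one display for each of \eqref{eq:Gamma-1} and \eqref{eq:Gamma-2}, prefaced by recalling that $\Gamma(g_0,h_0)=0$ means $\varphi(g_0)+\psi(h_0)-\langle g_0,h_0\rangle = 0$.
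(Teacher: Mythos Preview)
Your proposal is correct and follows essentially the same approach as the paper: expand $\Gamma$ by definition at $(g,h_0)$, $(g_0,h)$, and $(g_0,h_0)$, and use the conjugate-pair identity $\varphi(g_0)+\psi(h_0)=\langle g_0,h_0\rangle$ to eliminate the unwanted term. The paper phrases this as ``taking the difference'' between the relevant expansions rather than ``substituting,'' but the algebra is identical.
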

\begin{proof}
    Since $(g_0,h_0)$ is a conjugate pair, we have
        \begin{equation}
    0 = \Gamma_{\varphi,\psi}(g_0,h_0) = \varphi(g_0) + \psi(h_0) - \langle g_0,h_0\rangle.\label{eq:Gamma-5}
    \end{equation}
    Also,
    \begin{align}
        \Gamma_{\varphi,\psi}(g,h_0) &= \varphi(g) + \psi(h_0) - \langle g,h_0\rangle,\label{eq:Gamma-3}\\
        \Gamma_{\varphi,\psi}(g_0,h) &= \varphi(g_0) + \psi(h) - \langle g_0,h\rangle.\label{eq:Gamma-4}
    \end{align}
Taking the difference between \eqref{eq:Gamma-3} and \eqref{eq:Gamma-5} gives \eqref{eq:Gamma-1}. Taking the difference between \eqref{eq:Gamma-4} and \eqref{eq:Gamma-5} gives \eqref{eq:Gamma-2}.
\end{proof}

\begin{claim}
\label{lm:strong-cvx-smooth}
Let $\cG\subseteq \R^L$ be a bounded convex set. For $\lambda > 0$, if $\varphi:\cG\to \R$ is a closed $\frac 1{\lambda}$-strongly convex function w.r.t a norm $\|\cdot \|$ on $\R^L$, then its convex conjugate $\psi$ is $\lambda$-smooth w.r.t.\ the dual norm $\|\cdot \|_*$.
\end{claim}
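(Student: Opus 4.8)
The plan is to establish the standard Fenchel duality between strong convexity and smoothness: $(1/\lambda)$-strong convexity of $\varphi$ with respect to $\|\cdot\|$ transfers to $\lambda$-smoothness of its conjugate $\psi$ with respect to the dual norm $\|\cdot\|_*$.

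First I would unpack the definition of $\lambda$-smoothness: fix $h_0,h\in\R^L$, let $g_0$ be a subgradient of $\psi$ at $h_0$, and reduce the goal to showing
\[
\psi(h) \le \psi(h_0) + \langle g_0, h-h_0\rangle + \frac{\lambda}{2}\|h-h_0\|_*^2.
\]
Since $\varphi$ is closed, the equivalences recorded in Section~\ref{sec:conjugate} give that $(g_0,h_0)$ is a conjugate pair, so $g_0\in\cG$ and $h_0$ is a subgradient of $\varphi$ at $g_0$; and the supremum defining $\psi(h)$ is attained at some $g\in\cG$ (closedness of $\varphi$ and boundedness of $\cG$), so $\psi(h) = \langle g,h\rangle - \varphi(g)$ and $(g,h)$ is also a conjugate pair. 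Applying Claim~\ref{claim:Gamma}, specifically \eqref{eq:Gamma-2} for the conjugate pair $(g_0,h_0)$, the left-hand gap above is exactly $\Gamma(g_0,h) = \psi(h) - \psi(h_0) - \langle g_0, h-h_0\rangle$, so it suffices to prove $\Gamma(g_0,h) \le \tfrac{\lambda}{2}\|h-h_0\|_*^2$.

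Next I would expand, using $\psi(h) = \langle g,h\rangle - \varphi(g)$,
\[
\Gamma(g_0,h) = \varphi(g_0) + \psi(h) - \langle g_0,h\rangle = \varphi(g_0) - \varphi(g) + \langle g-g_0, h\rangle,
\]
and invoke $(1/\lambda)$-strong convexity of $\varphi$ at $g_0$ with subgradient $h_0$, namely $\varphi(g) \ge \varphi(g_0) + \langle g-g_0,h_0\rangle + \tfrac{1}{2\lambda}\|g-g_0\|^2$, to obtain
\[
\Gamma(g_0,h) \le \langle g-g_0, h-h_0\rangle - \frac{1}{2\lambda}\|g-g_0\|^2.
\]
Finally, bounding $\langle g-g_0, h-h_0\rangle \le \|g-g_0\|\cdot\|h-h_0\|_*$ by the definition of the dual norm, writing $t := \|g-g_0\|$ and $b := \|h-h_0\|_*$, and maximizing $tb - t^2/(2\lambda)$ over $t\ge 0$ (maximum $\lambda b^2/2$ at $t=\lambda b$) yields $\Gamma(g_0,h)\le \tfrac{\lambda}{2}\|h-h_0\|_*^2$, which is what we needed.

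I expect the only delicate point to be the conjugate-pair bookkeeping in the first step: verifying that an arbitrary subgradient $g_0$ of $\psi$ at $h_0$ actually lies in $\cG$ with $h_0\in\partial\varphi(g_0)$, and that the supremum in the definition of $\psi(h)$ is attained. Both follow from the closedness of $\varphi$ and boundedness of $\cG$ together with the equivalences already set up in Section~\ref{sec:conjugate}; once these are in place the rest is the routine duality computation plus a one-variable optimization. (As a byproduct the conclusion matches the remark after the smoothness definition, since the displayed inequality forces the subgradient of $\psi$ at each point to be unique.)
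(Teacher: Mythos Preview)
Your proof is correct and shares the same opening reduction as the paper: use closedness of $\varphi$ to get that $(g_0,h_0)$ is a conjugate pair, then via \eqref{eq:Gamma-2} reduce $\lambda$-smoothness of $\psi$ to the bound $\Gamma(g_0,h)\le\tfrac{\lambda}{2}\|h-h_0\|_*^2$. Where you diverge is in how you establish this bound. The paper argues abstractly: strong convexity says $\Gamma(g+g_0,h_0)\ge\tfrac{1}{2\lambda}\|g\|^2$ as functions of $g$, then observes that the Fenchel conjugate of the left side (in $g$) is $\Gamma(g_0,h_0+h)$ and the conjugate of $\tfrac{1}{2\lambda}\|\cdot\|^2$ is $\tfrac{\lambda}{2}\|\cdot\|_*^2$, and finally uses order-reversal of conjugation. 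You instead pick an explicit maximizer $g$ for $\psi(h)$, expand $\Gamma(g_0,h)=\varphi(g_0)-\varphi(g)+\langle g-g_0,h\rangle$, plug in strong convexity at $g_0$, apply the dual-norm inequality, and optimize the resulting one-variable quadratic. Your route is more elementary and self-contained (it avoids the ``easy to verify'' conjugate identity and the order-reversal lemma the paper leaves implicit); the paper's route is more conceptual and makes the duality symmetry explicit. Both land at the same inequality with the same constants.
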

\begin{proof}
    Consider an arbitrary $h_0\in \R^L$ and let $g_0\in \cG$ be a subgradient of $\psi$ at $h_0$. Since $\varphi$ is closed, we know that $(g_0,h_0)$ is a conjugate pair. By \eqref{eq:Gamma-1}, the strong convexity of $\varphi$ implies that
    \begin{align*}
    \Gamma(g,h_0) & \ge \frac 1{2\lambda}\|g - g_0\|^2 && \text{for every $g\in \cG$},\\
    \text{or equivalently,} \quad
    \Gamma(g + g_0,h_0) & \ge \frac 1{2\lambda}\|g\|^2 && \text{for every $g$ satisfying $g + g_0\in \cG$}.
    \end{align*}
It is easy to verify that the convex conjugate of $\Gamma(g + g_0,h_0)$ (as a function of $g$) is $\Gamma(g_0,h_0 + h)$ (as a function of $h$). It is also straightforward to see that the convex conjugate of $\frac 1{2\lambda}\|g\|^2$ is $\frac \lambda 2\|h\|_*^2$. Therefore, taking the convex conjugate of  both sides of the inequality above, we have
\begin{align*}
\Gamma(g_0,h_0 + h) & \le \frac \lambda{2}\|h\|_*^2 && \text{for every $h\in \R^L$},\\
\hspace{5em}\text{or equivalently,} \quad
\Gamma(g_0,h) & \le \frac \lambda{2}\|h - h_0\|_*^2 && \text{for every $h\in \R^L$}.\hspace{5em}
\end{align*}
By \eqref{eq:Gamma-2}, this proves that $\psi$ is $\lambda$-smooth w.r.t.\ $\|\cdot\|_*$.
\end{proof}
\begin{proof}[Proof of \Cref{lm:mw-smooth}]
Consider the negative Shannon entropy $\varphi$ in \eqref{eq:mw-varphi}.
For $g,g_0\in \Delta_L$, let $h_0:= \nabla \varphi(g_0)$. By \eqref{eq:Gamma-1} and Pinsker's inequality, we have
\[
\varphi(g) - \varphi(g_0) - \langle g - g_0,h_0\rangle = \Gamma_{\varphi,\psi}(g,h_0) = \D_\varphi(g \| g_0) \ge \frac 12 \|g - g_0\|^2_1.
\]
This shows that $\varphi$ is $1$-strongly convex w.r.t.\ $\|\cdot\|_1$.
By \Cref{lm:strong-cvx-smooth}, we know that $\psi$ is $1$-smooth w.r.t.\ $\|\cdot \|_\infty$. Let $(g,h),(\hat g,\hat h)\in \cG\times \R^L$ be two arbitrary conjugate pairs. By \eqref{eq:Gamma-1} and the $1$-strong convexity of $\varphi$ w.r.t.\ $\|\cdot\|_1$, we have
    \[
    \Gamma(g,\hat h) = \varphi(g) - \varphi(\hat g) - \langle g - \hat g,\hat h\rangle \ge \frac 12 \|g - \hat g\|_1^2.
    \]
    Similarly, by \eqref{eq:Gamma-2} and the $1$-smoothness of $\psi$ w.r.t.\ $\|\cdot\|_\infty$, we have
    \[
    \Gamma(g,\hat h) \le \frac 12 \|h - \hat h\|_\infty^2.
    \]
    Combining the two inequalities above, we have
    \[
    \|g - \hat g\|_1 \le \|h - \hat h \|_\infty.
    \]
Since this holds for arbitrary $h,\hat h\in \R^L$ and the gradients $g:= \nabla \psi(h) = \softmax(h), \hat g:= \nabla \psi(\hat h) = \softmax(\hat h)$, we have proved the $1$-Lipschitzness of softmax.
\end{proof}

\end{document}